\documentclass[showpacs,amsmath,amssymb,twocolumn,prx,superscriptaddress,10pt,aps,nofootinbib]{revtex4-2}

\usepackage[english]{babel}
\usepackage[utf8]{inputenc}
\usepackage[T1]{fontenc}

\usepackage{algorithm}
\usepackage[noend]{algpseudocode}
\usepackage{float} 

\usepackage[]{graphicx}
\usepackage{siunitx}
\usepackage{amsmath,amssymb,amsthm,mathrsfs,amsfonts,dsfont,pifont}
\usepackage{subfigure, epsfig}
\usepackage{bm}
\usepackage{enumerate}
\usepackage[dvipsnames]{xcolor}
\usepackage{color}
\usepackage{fancybox, graphicx}
\usepackage[skins]{tcolorbox}
\usepackage{multirow}
\usepackage{mathtools}
\usepackage{thmtools}
\usepackage{thm-restate}
\usepackage{makecell}
\usepackage{mleftright}\mleftright 

\DeclarePairedDelimiter\bra{\langle}{|}
\DeclarePairedDelimiter\ket{|}{\rangle}
\DeclarePairedDelimiterX\braket[2]{\langle}{\rangle}{#1 \delimsize\vert #2}
\DeclarePairedDelimiterX\ketbra[2]{| }{|}{#1 \delimsize\rangle\!\delimsize\langle #2}
\newcommand{\vertiii}[1]{{\left\vert\kern-0.25ex\left\vert\kern-0.25ex\left\vert #1 
		\right\vert\kern-0.25ex\right\vert\kern-0.25ex\right\vert}}

\usepackage[colorlinks]{hyperref}
\hypersetup{
	draft=false,
	colorlinks=true,
	citecolor = {blue},
    linkcolor=blue,
    filecolor=magenta,      
    urlcolor=cyan,
}
\usepackage{cleveref}

\usepackage{tikz}
\usepackage{quantikz}

\usepackage{physics}   

\newcommand{\cL}{\mathcal{L}}

\newcommand{\bq}{\mathbf{q}}
\newcommand{\bp}{\mathbf{p}}

\newcommand{\bbI}{\mathbb{I}}
\newcommand{\bbR}{\mathbb{R}}

\newcommand{\vspan}{\operatorname{span}}

\newcommand{\rS}{\boldsymbol{\mathsf{S}}}
\newcommand{\rE}{\boldsymbol{\mathsf{E}}}
\newcommand{\rA}{\boldsymbol{\mathsf{A}}}
\newcommand{\rB}{\boldsymbol{\mathsf{B}}}
\newcommand{\rqq}{\boldsymbol{\mathsf{q}}}
\newcommand{\qft}{\mathrm{QFT}}
\newcommand{\qpe}{\mathrm{QPE}}
\newcommand{\prepf}{\text{Prep } f}
\newcommand{\ri}{\mathrm{i}}

\usepackage{ifdraft}
\ifdraft{
 \newcommand{\authnote}[3]{{\color{#3} {\bf  #1:} #2}}
 }{
 \newcommand{\authnote}[3]{}
 }

\newtcolorbox[auto counter]{tbox}[2][]{%
	enhanced, float=hbt, drop fuzzy shadow southeast,
	colback=white!5!white, colframe=white!30!black,
	width= .97\columnwidth,sharp corners,boxrule=0.8pt,
	title={#2}, #1
}

\newtheorem{theorem}{Theorem}
\newtheorem{lemma}{Lemma}

\newtheorem{remark}[theorem]{Remark}
\newtheorem{proposition}{Proposition}

\newtcolorbox{codebox}{enhanced, width=.95\columnwidth, halign = flush left, drop fuzzy shadow southeast, boxrule=0.4pt, sharp corners, colframe=black, colback=white}

\begin{document}

\title{Convergence monitoring of quantum Gibbs samplers}

\author{Nikolaos Louloudis}
\altaffiliation{These authors contributed equally.}
\affiliation{Institute for Quantum Information, RWTH Aachen University, Aachen, Germany}
\email{nikolaos.louloudis@rwth-aachen.de}

\author{Ruben Ibarrondo}
\altaffiliation{These authors contributed equally.}
\affiliation{Department of Physical Chemistry, University of the Basque Country UPV/EHU, Apartado 644, 48080 Bilbao, Spain}
\affiliation{EHU Quantum Center, University of the Basque Country UPV/EHU, Apartado 644, 48080 Bilbao, Spain}
\email{ruben.ibarrondo@ehu.eus}

\author{Mikel Sanz}
\affiliation{Department of Physical Chemistry, University of the Basque Country UPV/EHU, Apartado 644, 48080 Bilbao, Spain}
\affiliation{EHU Quantum Center, University of the Basque Country UPV/EHU, Apartado 644, 48080 Bilbao, Spain}
\affiliation{Basque Center for Applied Mathematics (BCAM), Alameda de Mazarredo 14, 48009 Bilbao, Spain}

\author{Mario Berta}
\affiliation{Institute for Quantum Information, RWTH Aachen University, Aachen, Germany}
\affiliation{Department of Computing, Imperial College London, London, United Kingdom}

\date{\today}


\begin{abstract}
Recent progress in fully quantum Markov chain Monte Carlo methods enables efficient Gibbs-state sampling on quantum computers [Chen {\it et al.}, Nature 646, 561 (2025)]. Although rigorous worst-case bounds on mixing times remain largely inaccessible for classically intractable systems, experience from classical Monte Carlo suggests that convergence of relevant observables may nevertheless be rapid. This raises the practical question of how to diagnose convergence efficiently, i.e., with at most polynomial overhead. We propose a low-cost criterion for convergence monitoring that exploits the weak measurements inherent in quantum Gibbs samplers and their qubit-efficient variants [Ding {\it et al.}, arXiv:2508.05703 (2025)]. Our approach is based on the observation that, at thermal equilibrium, the net energy flow between system and environment vanishes and energy-exchange statistics satisfy a balance condition. This condition appears in the distribution of (quasi-)frequencies extracted from the weak-measurement record and we use it to construct a Hamiltonian-agnostic stopping criterion based solely on data already generated by the sampler. We provide a statistical analysis, along with numerical and analytical studies to understand its performance, assumptions, and limitations.
\end{abstract}

\maketitle


\section{Introduction}

\paragraph{Gibbs states.} Preparing Gibbs states is a central task in quantum simulation. For a Hamiltonian $H$ and inverse temperature $\beta$, the Gibbs state
\begin{align}\label{eq:gibbs-state}
    \rho_\beta = \frac{e^{-\beta H}}{\operatorname{Tr}(e^{-\beta H})}
\end{align}
determines equilibrium properties such as the free energy, internal energy, entropy, and response coefficients, and it plays a central role in the study of finite-temperature many-body systems \cite{Binder1987,landau2021guide,takahashi1996thermo,maldacena2003eternal,zhu2020generation}. Beyond physics, Gibbs distributions also appear in areas such as Bayesian inference \cite{gelfand2000gibbs}, Boltzmann machines \cite{ackley1985learning, amin2018quantum}, and semidefinite optimization \cite{arora2012multiplicative, brandao2017quantum, van2017quantum}. Classically, finite-temperature properties are commonly estimated via Markov chain Monte Carlo methods \cite{metropolis1953equation, robert1999monte, levin2017markov}, notably quantum Monte Carlo \cite{foulkes2001quantum} when suitable path-integral or stochastic representations exist.
While extremely successful, these methods can become ineffective in regimes affected by the sign problem \cite{Henelius2000,pan2022sign}, which motivates the search for quantum algorithms that prepare thermal states directly (in the spirit of \cite{feynman1982simulating}).

\paragraph{Quantum Gibbs samplers.} The most direct quantum analogue of MCMC is the quantum Metropolis algorithm \cite{temme2011quantum}, but its reliance on highly accurate energy estimation makes it costly in many settings. Recent breakthroughs in quantum Gibbs sampling instead follow a different route: they construct a Lindbladian whose stationary state is the desired Gibbs state and simulate the corresponding open-system dynamics efficiently on a quantum computer \cite{chen2025efficient,Chen2023b,Ding2024,Ding2025}. These constructions are inspired by physical thermalization and the Davies generator \cite{davies1974markovian,davies1976markovian}, but replace exact resolution of Bohr frequencies by a finite-resolution weighted Fourier transform, yielding an algorithmically efficient approximation. In this sense, quantum Gibbs samplers (QGSs) can be viewed as fully quantum analogues of continuous-time MCMC driven by physically motivated dynamics.

\paragraph{Mixing times.} A central question for such methods is their mixing behavior. Rigorous bounds on mixing times are already difficult in classical MCMC and remain even more challenging in the quantum setting, especially for systems of genuine physical interest. Existing worst-case guarantees are valuable, but they may substantially overestimate the time required for relevant observables to equilibrate in practice \cite{Temme2013,Kastoryano2013,Kastoryano2016,Capel2021,bardet2023rapid,ding2024polynomial,rouze2024optimal,rouze2025efficient,kochanowski2025rapid,tong2025fast,vsmid2025polynomial,vsmid2025rapid,bergamaschi2025quantum,Becker2026a,Becker2026b}. Moreover, if one restricts attention to local updates, one may expect the familiar critical-slowing-down phenomena near phase transitions \cite{wolff1989critical,sokal1991beat}. At the same time, experience with classical Monte Carlo suggests that practical performance is often much better than worst-case theory alone would indicate and is therefore assessed using empirical convergence diagnostics \cite{Vivekananda2019,HandbookMCMC2011}.

\paragraph{Convergence in practice.} Interestingly, MCMC practitioners are not necessarily in contact with the mathematical physics community proving rigorous mixing time results. Instead, they lean on empirical diagnostics, such as monitoring trace plots, summary statistics, or effective sample sizes, and use these to decide when sampling has become reliable enough for the quantities of interest \cite{Vivekananda2019,HandbookMCMC2011}. Transplanting this workflow to QGSs is, however, nontrivial. Direct measurements disturb the state, so monitoring an observable at intermediate times generally requires restarting the preparation from scratch and repeating the evolution many times for many monitoring times (or energy coherent measurements wait for the duration of the autocorrelation time \cite{Jiang2026}). Although conceptually simple, this strategy can incur a substantial overhead in both runtime and sampling cost. The practical issue is therefore not only how fast a QGS mixes, but also how to diagnose this efficiently.

\paragraph{Convergence monitoring.} In this work, we propose a low-cost diagnostic criterion that directly exploits information already generated within the QGSs of \cite{chen2025efficient} and their qubit-efficient variants \cite{Ding2025}. Our starting point is that these samplers are implemented through weak system-bath interactions whose ancilla outcomes encode jump events and their associated frequencies, or quasi-frequencies in the finite-resolution setting. At thermal equilibrium, the system is in detailed balance with its environment: the net energy flow vanishes on average, and the statistics of energy-exchange events obey a corresponding balance condition. In the ideal Davies limit, this is expressed directly in terms of Bohr frequencies. For practical algorithmic samplers, this means that we can predict structural properties of the quasi-frequency distribution from the design choices of the QGS and, crucially, independently of the system Hamiltonian, without direct measurements of the system state.

\begin{figure}
    \centering
    \includegraphics[width=1\linewidth]{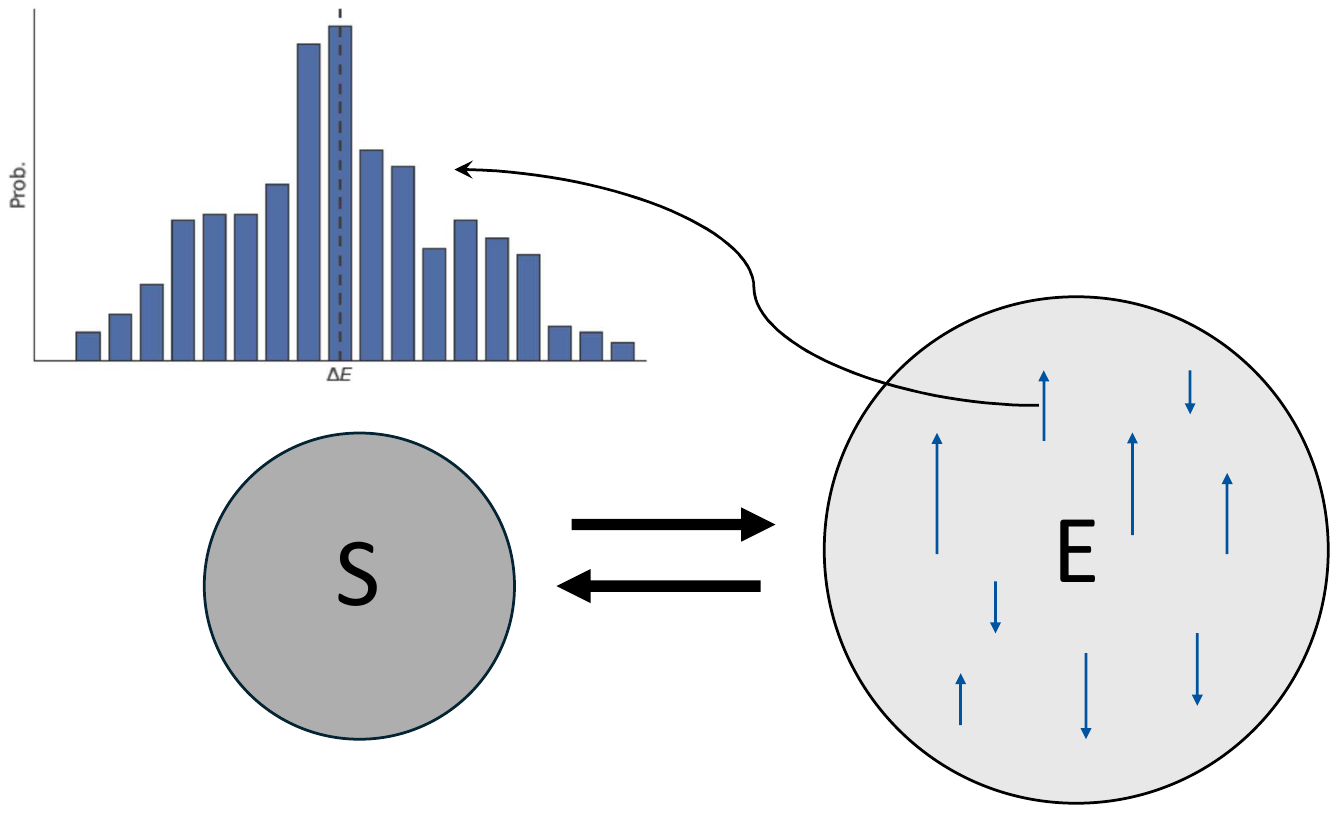}
    \caption{Davies limit intuition for the convergence monitoring scheme. Weak system--bath coupling induces transitions labeled by Bohr frequencies $\nu$. These transitions are inferred from the bath without directly measuring the system. At thermal equilibrium, detailed balance implies symmetry of the equilibrium transition weights, $M(\nu)=M(-\nu)$, so the transition profile is symmetric about $\nu=0$. Away from equilibrium an asymmetric profile signals a residual imbalance in the energy-exchange record. The algorithmic sampler of Sec. \ref{sec:qgs} replaces the Bohr frequencies by finite resolution quasi-frequencies. In Sec. \ref{sec:convergence} we derive the corresponding shifted symmetry of the quasi-frequencies which we use as the target of the stopping rule.}
    \label{fig:system_bath_frequency_measurments}
\end{figure}

\paragraph{Main result.} Based on this observation, we propose a simple stopping criterion for QGSs that relies solely on the weak-measurement record. The rule monitors the mean and third centered moment of the record and stops when a batch-means confidence ellipsoid is contained in a prescribed tolerance ellipsoid around their known equilibrium values. The criterion is low-cost, Hamiltonian-agnostic, and designed as a practical diagnostic rather than a worst-case certificate of global mixing. We analyze its statistical behavior, make explicit the assumptions under which it is informative, and provide some numerical simulations. In particular, Ddeterministic simulations show that the monitored moments co-relax with the energy and trajectory simulations show that the complete online rules detects the observed temperature-dependent slowing. On the analytical side, we discuss its physical meaning by relating the POVMs associated with (quasi-)frequencies to energy measurements. Namely, under a uniformity assumption on the jumps, the corresponding distributions are in bijective correspondence, implying that convergence of the observed frequency distribution entails convergence of the energy distribution. Our results indicate that convergence monitoring for quantum Gibbs samplers can be started from intrinsic measurement data, without requiring costly auxiliary measurements.

\paragraph{Related work.} Jiang \textit{et al.}~\cite{Jiang2026} and then Chen \textit{et al.}~\cite{Chen2026b} discuss interleaved detailed-balance measurements in the QGS framework of Ding \textit{et al.}~\cite{Ding2024}, with the goal of sampling thermal observables from a single Gibbs-sampling trajectory at a reduced cost governed by the autocorrelation time after an initial burn-in period. In Jiang \textit{et al.}, the explicit low-overhead construction is given for energy estimation and, more generally, for observables commuting with the Hamiltonian, while Chen \textit{et al.}~extend this framework to arbitrary non-commuting observables. As discussed in Jiang \textit{et al.}, an agnostic version can in practice also ``be used as an empirical method for certifying the convergence of QGSs.'' This approach incurs a small overhead and does not provide a direct statistical stopping analysis, whereas our results provide a dedicated convergence-monitoring criterion based solely on the weak-measurement record already produced by the sampler, at no additional cost. The single-trajectory framework is naturally compatible with both the KMS-detailed-balance samplers of Ding \textit{et al.}~\cite{Ding2024} and the algorithmic Lindbladian construction of Chen \textit{et al.}~\cite{chen2025efficient}. By contrast, for the qubit-efficient end-to-end scheme of Ding \textit{et al.}~\cite{Ding2025}, importing the current single-trajectory measurement machinery is less immediate. That algorithm derives its appeal from forward-only evolution with a single reusable ancilla qubit, whereas the explicit construction of Jiang \textit{et al.}~relies on Gaussian-filtered quantum phase estimation, so one would likely forfeit at least part of the no-QFT, no-extra-ancilla advantage. Our results, however, apply to the QGSs of Chen \textit{et al.}~\cite{chen2025efficient} and Ding \textit{et al.}~\cite{Ding2025}, but not to the KMS-sampler setting of Ding \textit{et al.}~\cite{Ding2024}. Another method to certify convergence for certain quantum Markov chains is by Stilck França~\cite{Stilck2018}, employing coupling-from-the-past techniques. Finally, Chen and Gily\'en~\cite{Chen2026} recently introduced efficient shadow tomography for QGSs, which allows them to sample multiple general observables from the Gibbs state with favorable scaling. While that work does not offer convergence-monitoring tools---as it assumes black-box thermal-state preparation---the crucial use of Hamiltonian-specific information seems promising for convergence diagnostics.

\paragraph{Structure.} The remainder of the paper is organized as follows. In Section \ref{sec:qgs}, we review the class of quantum Gibbs samplers considered here and the Lindbladian structure underlying their implementation. In Section \ref{sec:convergence}, we introduce the convergence criterion, its statistical analysis, and the numerical benchmarks. In Section \ref{sec:one-designs}, we present our analytical results on the relation between frequencies and energy distributions. We conclude with a brief outlook (Section \ref{sec:outlook}).


\section{Quantum Gibbs samplers}
\label{sec:qgs}

\paragraph{Standard framework.} We briefly explain the ingredients of the quantum Gibbs sampler needed for our convergence criterion. For more details see Appendix \ref{sec:long_overview_of_qgs}. For a Hamiltonian $H$ and inverse temperature $\beta$, the target is the Gibbs state in Eq.~\eqref{eq:gibbs-state}.
The algorithmic Gibbs samplers of \cite{chen2025efficient, Chen2023a, Chen2023b} construct an efficiently simulable Lindbladian whose fixed point is $\rho_\beta$. Starting from coupling operators $\{A_a\}_{a\in\mathcal{A}}$, with $\{A_a:a\in\mathcal{A}\}=\{A_a^\dagger:a\in\mathcal{A}\}$, one defines the filtered jump operators
\begin{equation}
\label{eq:Algorithmic_jumps}
\hat A_a(\omega)
=
\frac{1}{\sqrt{2\pi}}
\int_{\mathbb{R}} e^{itH}A_a e^{-itH} e^{-i\omega t} f(t)\,dt,
\end{equation}
where 
\begin{equation}\label{eq:Gaussian_window_time}
   f(t)=e^{-\sigma_E^2 t^2}\sqrt{\sigma_E\sqrt{2/\pi}},
\end{equation} 
is a Gaussian time window, in accordance with \cite[App. D]{Chen2023b}. The parameter $\sigma_E$ controls the frequency resolution and is chosen so that the jump operators in \eqref{eq:Algorithmic_jumps} are efficiently implementable via Hamiltonian simulation. We call the resulting $\omega\in\mathbb{R}$ quasi-frequencies. In the limit $\sigma_E\to 0$, the quasi-frequencies sharpen to the Bohr frequencies $\nu\in\mathcal{B}(H):=\{\epsilon_2-\epsilon_1:\epsilon_{1,2}\in\operatorname{spec}(H)\}$, and one recovers the Davies jump operators
\begin{equation}
\hat A_a(\nu)
=
\sum_{\epsilon_2-\epsilon_1=\nu}\Pi_{\epsilon_2}A_a\Pi_{\epsilon_1}.
\end{equation}
The corresponding algorithmic Lindbladian is
\begin{widetext}
\begin{equation}
\label{eq:Algorithmic_lindbladian}
\mathcal{L}_A(\rho)
=
-i[B,\rho]
+
\int d\omega \sum_{a\in\mathcal{A}} \gamma(\omega)
\left(
\hat A_a(\omega)\rho \hat A_a^\dagger(\omega)
-\frac{1}{2}\{\hat A_a^\dagger(\omega)\hat A_a(\omega),\rho\}
\right).
\end{equation}
\end{widetext}
The rates $\gamma(\omega)$ are chosen so that the dissipative part satisfies a KMS type detailed-balance condition with respect to $\rho_\beta$ \cite{Ding2024}
\begin{equation}
\label{eq:KMS_DB}
        \mathcal{L}_A^{\dagger}[\cdot]={\sqrt{\rho_\beta}}^{-1} \mathcal{L}_A\left[\sqrt{\rho_\beta} \cdot \sqrt{\rho_\beta}\right]{\sqrt{\rho_\beta}}^{-1}.
\end{equation}
Typical choices are
\begin{align}
\shortintertext{\textit{Metropolis}}
\gamma(\omega)
&=
\exp\left(-\beta \max\left(\omega+\frac{\beta\sigma_E^2}{2},0\right)\right)
\label{eq:Metropolis_transition_rate}
\\
\shortintertext{\textit{Gaussian}}
\gamma(\omega)
&=
\exp\left(-\frac{(\omega+\omega_\gamma)^2}{2\sigma_\gamma^2}\right),
\qquad
\beta=\frac{2\omega_\gamma}{\sigma_\gamma^2+\sigma_E^2}.
\label{eq:Gaussian_transition_rate}
\end{align}
More generally, one may also use the linear-combination-of-Gaussians family of \cite[Cor.~II.3]{Chen2023a}, which includes the shifted Metropolis and smooth Glauber-like filters of \cite[Prop.~II.4]{Chen2023a}; see App.~\ref{app:proof_quasi_frequency_symmetry}. With the corresponding Hermitian correction $B$, the full generator satisfies detailed balance and therefore leaves the Gibbs state invariant $\mathcal{L}_A(\rho_\beta)=0$. Under the usual assumption of ergodicity, the evolution converges to $\rho_\beta$.

\paragraph{Quantum simulation.} For the convergence criterion, the crucial point is that the simulation of $\mathcal{L}_A$ naturally produces weak-measurement data. Following \cite{Chen2023a}, we simulate the Hermitian part and the dissipative part $\mathcal{D}_A(\rho):=\mathcal{L}_A(\rho)+i[B,\rho]$, separately, incurring a small Trotter error. The former can be simulated using any known efficient Hamiltonian simulation algorithm. For the latter, we use the circuit of \autoref{fig:lindbladian_simulation}. The relevant block encoding satisfies
\begin{align}
\label{eq:block_encoding_jumps_algorithmic}
\bra{0}_{\rqq}\bra{0}_{\rB} \otimes \bbI_{\rA}\otimes\bbI_{\rS}\otimes \bbI_{\rE} U \ket{0}_{\rqq}\ket{0}_{\rB} \ket{0}_{\rA}\ket{\psi}_{\rS}\ket{0}_{\rE} = \nonumber \\ 
\sum_{a, \omega} \sqrt{\gamma(\omega)}  \ket{a}_{\rA}\ket{\omega}_{\rE} \hat{A}_a(\omega)\ket{\psi}_{\rS}
\end{align}
For time step $\delta$, the circuit implements the Euler step
\begin{equation}
\rho \mapsto e^{\delta\mathcal{D}_A}(\rho)+O(\delta^2)
=
\rho+\delta\mathcal{D}_A(\rho)+O(\delta^2).
\end{equation}
Standard implementations discard the ancilla outcomes after tracing them out. Here, we retain the quasi-frequency record and use its distribution as the central observable in Section \ref{sec:convergence}.

\paragraph{Qubit-efficient version.} A practical limitation of the above construction is the continuous parameterization. It requires a large time/frequency register and Fourier-transform machinery to resolve quasi-frequencies. Recent qubit-efficient variants \cite{Ding2025} alleviate these difficulties by replacing the continuous-frequency register with a single reusable ancilla qubit coupled weakly to the system. In each round, one samples a coupling operator and a bath frequency, prepares the ancilla in the corresponding thermal state, evolves under a weak system-bath interaction with Gaussian envelope, and then resets the ancilla. This same circuit also provides a natural monitored transition record: conditioned on an ancilla flip, one may retain the sampled frequency together with a sign determined by the flip direction, while rounds without a flip are discarded. For the Gaussian prior analyzed in Appendix~I of \cite{Ding2025}, the resulting signed-frequency histogram is described by a Gaussian KMS surrogate of the same transition-conditioned form studied in the next section. Thus, the distributional symmetry results and stopping-rule ideas developed there apply equally to this monitored qubit-efficient setting. We defer the derivation to Appendix~\ref{app:qubit_efficient_monitoring}.

\begin{figure}
    \centering

    \begin{quantikz}
    	\lstick{$\rE: \ket{0}$} & \gate[5]{U} & & \gate[5]{U^{\dag}} & \meter{} \rstick{$\rightarrow \omega$} \\
        \lstick{$\rA: \ket{0}$} & & & & \meter{} \rstick{$\rightarrow a$}\\
        \lstick{$\rS:$ \makebox[0pt][l]{$\;\rho$}\phantom{$\ket{0}$}}  & & & & \rstick{$\approx \rho + \delta \cL(\rho)$}\\
        \lstick{$\rB: \ket{0}$} & & \octrl{1} & & \meter{}\\
        \lstick{$\rqq: \ket{0}$} & & \octrl{1} & & \meter{}\\
        \lstick{$\rqq' : \ket{0}$} & & \gate{Y_{\delta}} & \octrl{-1} & \meter{}
    \end{quantikz}
    
    \caption{Implementation of Lindbladian simulation via weak measurement. The unitary $U$ is the block encoding of the jumps in $\cL_\mathcal{A}$ as seen in Eq. \eqref{eq:block_encoding_jumps_algorithmic}. Its implementation is described in Fig.~\ref{fig:block_encoding_u}. There are two cases. If $\rqq'=0$, all ancillas are reset and the system remains in $\rho$. If $\rqq'=1$, the Lindbladian step is applied to $\rS$, and, conditioned on $\rB=\rqq=0$ the registers $\rA$ and $\rE$ return to $a$ and $\omega$, respectively; otherwise, the step is discarded.}

    \label{fig:lindbladian_simulation}
\end{figure}


\section{Convergence criterion}
\label{sec:convergence}

 In classical Gibbs samplers, convergence is often monitored through the trace plots of physically relevant observables, among which the energy plays a distinguished role: it is both informative and inexpensive to evaluate, since it is already built into the MCMC update rule. In the quantum setting, the analogous low-cost resource is the quasi-frequency record produced by the weak-measurement implementation of the sampler. In this section, we will show how this can be used to construct a practical stopping rule.

\paragraph{Distribution of quasi-frequencies.} Our stopping criterion is based on the quasi-frequencies observed in successful transition events of the weak-measurement circuit. Conditioned on such an event, the ancilla register \(\rE\) stores the quasi-frequency \(\omega\). The corresponding transition density is
\begin{equation}
\label{eq:mu_trans_main}
\mu_{\mathrm{trans}}(\omega,\rho)
=
\frac{
\sum_{a\in\mathcal{A}}
\gamma(\omega)\operatorname{Tr}\!\left[\hat A_a^\dagger(\omega)\hat A_a(\omega)\rho\right]
}{
\int d\omega'
\sum_{a\in\mathcal{A}}
\gamma(\omega')\operatorname{Tr}\!\left[\hat A_a^\dagger(\omega')\hat A_a(\omega')\rho\right]
}.
\end{equation}
Our central object of study is this probability density at equilibrium,
\begin{equation}
\label{eq:Pi_distribution_quasi_freq}
    \pi(\omega):=\mu_{\mathrm{trans}}(\omega,\rho_\beta).
\end{equation}
In the ideal Davies limit \(\sigma_E\to0\), the quasi-frequencies sharpen to exact Bohr frequencies and the equilibrium distribution is symmetric around \(0\), reflecting the vanishing net energy current between system and bath. For the Gaussian-filtered algorithmic Lindbladian, a closely related symmetry survives, but with a center shifted by the finite frequency resolution.

\begin{proposition}
\label{thm:quasi_frequency_symmetry}
Consider the algorithmic Lindbladian \eqref{eq:Algorithmic_lindbladian} with Gaussian window \eqref{eq:Gaussian_window_time}, and let \(\pi(\omega)=\mu_{\mathrm{trans}}(\omega,\rho_\beta)\) be the equilibrium quasi-frequency density. If the transition rate satisfies for $C:=-\frac{\beta\sigma_E^2}{2}$ that
\begin{equation}
\label{eq:rate_reflection_identity}
    \gamma(C-u)=e^{\beta u}\gamma(C+u)
    \qquad
    \forall u\in\mathbb R,
\end{equation}
then we have that
\begin{equation}
    \pi(C-u)=\pi(C+u)
    \qquad
    \forall u\in\mathbb R.
\end{equation}
In particular, we conclude that
\begin{equation}
    \langle \omega\rangle_\pi=C.
\end{equation}
\end{proposition}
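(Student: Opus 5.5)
The plan is to compute the unnormalized numerator $N(\omega):=\sum_a\gamma(\omega)\operatorname{Tr}[\hat A_a^\dagger(\omega)\hat A_a(\omega)\rho_\beta]$ explicitly in the energy eigenbasis and show directly that $N(C-u)=N(C+u)$. Since the denominator of \eqref{eq:mu_trans_main} does not depend on $\omega$, the symmetry of $\pi$ then follows. The mean statement follows because $\pi$ is a probability density that is symmetric about $C$ and has a finite first moment.

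\textbf{Step 1 (Bohr decomposition).} I insert spectral projectors into \eqref{eq:Algorithmic_jumps}, using $e^{itH}A_ae^{-itH}=\sum_\nu e^{i\nu t}\hat A_a(\nu)$. This gives $\hat A_a(\omega)=\sum_{\nu\in\mathcal B(H)}\hat f(\omega-\nu)\,\hat A_a(\nu)$, where $\hat f$ is the Fourier transform of the Gaussian window. It satisfies $|\hat f(x)|^2\propto e^{-x^2/(2\sigma_E^2)}$.

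Next I expand $\operatorname{Tr}[\hat A_a^\dagger(\omega)\hat A_a(\omega)\rho_\beta]$ as a double sum over $\nu,\nu'$. Because $\rho_\beta$ is diagonal in the energy basis, the term $\operatorname{Tr}[\hat A_a(\nu')^\dagger\hat A_a(\nu)\rho_\beta]$ vanishes unless $\nu=\nu'$. Hence
\[
\sum_a\operatorname{Tr}[\hat A_a^\dagger(\omega)\hat A_a(\omega)\rho_\beta]\propto\sum_{\nu}e^{-(\omega-\nu)^2/(2\sigma_E^2)}\,w(\nu),
\]
with $w(\nu):=\sum_a\operatorname{Tr}[\hat A_a(\nu)^\dagger\hat A_a(\nu)\rho_\beta]$.

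\textbf{Step 2 (detailed balance of weights).} I write
\[
w(\nu)=Z^{-1}\sum_a\sum_{\epsilon_2-\epsilon_1=\nu}e^{-\beta\epsilon_1}\,\|\Pi_{\epsilon_2}A_a\Pi_{\epsilon_1}\|_F^2.
\]
Using that $\{A_a\}$ is closed under adjoints, together with $\|\Pi_{\epsilon_2}A_a\Pi_{\epsilon_1}\|_F=\|\Pi_{\epsilon_1}A_a^\dagger\Pi_{\epsilon_2}\|_F$, I obtain $w(-\nu)=e^{-\beta\nu}w(\nu)$. Equivalently, $w(\nu)=e^{\beta\nu/2}s(\nu)$ with $s$ even. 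The Bohr set $\mathcal B(H)$ is symmetric under $\nu\mapsto-\nu$.

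\textbf{Step 3 (complete the square).} I set $\omega=C+u$ with $C=-\beta\sigma_E^2/2$. A direct computation gives
\[
-\frac{(\nu-\omega)^2}{2\sigma_E^2}+\frac{\beta\nu}{2}=-\frac{(\nu-u)^2}{2\sigma_E^2}+\frac{\beta u}{2}+\text{const},
\]
where the constant is independent of $u$ and $\nu$. Therefore
\[
N(C+u)\propto\gamma(C+u)\,e^{\beta u/2}K(u),\qquad K(u):=\sum_\nu s(\nu)e^{-(\nu-u)^2/(2\sigma_E^2)}.
\]
Here $K$ is even, since $s$ is even and the Bohr set is symmetric. Replacing $u$ by $-u$ and applying \eqref{eq:rate_reflection_identity} gives
\[
N(C-u)\propto e^{\beta u}\gamma(C+u)\,e^{-\beta u/2}K(u)=N(C+u).
\]

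\textbf{Step 4 (mean).} The bound $\gamma\le$ const (as for the rates considered) together with Gaussian decay of $K$ makes $\int|\omega|\pi(\omega)\,d\omega$ finite. Symmetry of $\pi$ about $C$ then gives $\langle\omega\rangle_\pi=C$.

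The main point requiring care is the bookkeeping in Steps 1 and 3. One must get the exact normalization of the Gaussian window so that $|\hat f|^2$ has variance $\sigma_E^2$, which is what fixes the shift at precisely $-\beta\sigma_E^2/2$. One must also confirm that the off-diagonal $\nu\ne\nu'$ terms vanish even when several Bohr frequencies are degenerate. I would check the constant in Step 3 first, since an error there would move $C$.
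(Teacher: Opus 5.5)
Your proof is correct and follows the paper's argument. Both use the same ingredients: the Bohr-frequency decomposition with vanishing off-diagonal terms, the Davies balance $m(\nu)=e^{\beta\nu}m(-\nu)$, a Gaussian completing-the-square identity (the paper's reflection lemma for $\phi_\sigma$), the relabeling $\nu\mapsto-\nu$, and the rate reflection identity. Writing $w(\nu)=e^{\beta\nu/2}s(\nu)$ with $s$ even is a tidier way to organize the same exponent cancellation. Your Step 4 also supplies the first-moment integrability that the paper leaves implicit.
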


The proof is given in detail  in Appendix~\ref{app:proof_quasi_frequency_symmetry}. The main idea is to rewrite the quasi-frequency density as Gaussian broadening of the Davies transition weights and then apply the usual Davies balance relation and the reflection identity of Eq. \eqref{eq:rate_reflection_identity} for the filter and rate.

Importantly, the reflection identity \eqref{eq:rate_reflection_identity} is satisfied by the Gaussian and Metropolis rates in \eqref{eq:Gaussian_transition_rate} and \eqref{eq:Metropolis_transition_rate}, as well as by the linear-combination-of-Gaussians family of \cite{Chen2023a}; see again Appendix~\ref{app:proof_quasi_frequency_symmetry}. Thus the equilibrium mean and symmetry of the sampled quasi-frequencies is known a priori from $\beta$ and $\sigma_E$, independently of the system Hamiltonian. The shift $C$ is a finite-resolution effect: In the Davies limit $\sigma_E\to0$, one recovers symmetry around $0$ and hence $\langle \nu\rangle_\pi=0$.


\paragraph{Algorithmic tool: stopping rules.}

Classical MCMC methodology includes both empirical convergence diagnostics and stopping rules \cite{Vivekananda2019}. Here we focus on stopping rules with quantitative error criteria, since these provide explicit tolerances for Monte Carlo error and are therefore more suitable for a rigorous convergence-monitoring scheme. Standard examples include fixed-width confidence-interval rules, relative fixed-width rules, multivariate fixed-volume rules, and effective-sample-size criteria \cite{jones2006fixed,flegal2008markov,flegal2010batch,vats2019multivariate}. In ordinary MCMC, such rules are formulated only in terms of precision, because the target expectation is unknown before the simulation. This changes in the present setting. \autoref{thm:quasi_frequency_symmetry} gives Hamiltonian-independent equilibrium constraints for the quasi-frequency record. We can therefore combine batch-means Monte Carlo uncertainty estimates with an additional target-aware accuracy requirement.

All sample indices below count accepted transition events. From the accepted
quasi-frequencies $\omega_i$, we monitor
\begin{equation}
\label{eq:main_monitored_observable}
    Y_i
    =
    \begin{pmatrix}
        \omega_i\\
        (\omega_i-C)^3
    \end{pmatrix},
\end{equation}
with known equilibrium target
\begin{equation}
    \theta^\star
    =
    \mathbb E_\pi[Y_i]
    =
    \begin{pmatrix}
        C\\
        0
    \end{pmatrix}.
\end{equation}
The first coordinate monitors the predicted equilibrium mean, while the second probes the first nontrivial odd centered moment beyond the mean. Restricting to these two coordinates keeps the diagnostic low-dimensional while testing both the location and the symmetry of the quasi-frequency record.

At each admissible checkpoint, let $\widehat\theta_M$ denote the mean over the selected accepted-event window, and let $\mathcal C_\alpha(M)$ be a confidence region constructed
using a non-overlapping batch-means estimate of the long-run covariance of the process
$(Y_i)$ \cite{vats2019multivariate}. Our stopping principle can be summarized as
\begin{equation}
\label{eq:main_unified_inclusion_principle}
\mathcal C_\alpha(M)\subseteq\mathcal T,
\end{equation}
where $\mathcal T$ is a prescribed tolerance region centered at the known target
$\theta^\star$. Thus, the rule stops when the confidence region is wholly contained inside the tolerance region. In one dimension, this reduces to
\begin{equation}
\label{eq:main_scalar_inclusion_prototype}
    |\widehat\theta_M-\theta^\star|+h_M\le P,
\end{equation}
which combines the observed displacement from the target with the Monte Carlo half-width.

For the main criterion, we use a covariance-aware ellipsoidal geometry. Let $\widehat{\Sigma}_{\mathrm{BM},M}$ be the non-overlapping batch-means estimate of the long-run covariance, and define
\begin{align}
    &\mathcal{C}_\alpha(M)\notag\\
    &=
    \left\{
        \theta\in\mathbb{R}^2:
        M
        \bigl(\widehat{\theta}_M-\theta\bigr)^\top
        \widehat{\Sigma}_{\mathrm{BM},M}^{-1}
        \bigl(\widehat{\theta}_M-\theta\bigr)
        \leq c_{\alpha,a}
    \right\},
    \label{eq:confidence_ellipsoid}
\end{align}
where $c_{\alpha,a}$ is the finite-batch critical value specified in Appendix \ref{app:stopping_rule_details}. For a positive-definite scale matrix $\Lambda$, define the target-centered tolerance ellipsoid
\begin{equation}
    \mathcal{T}_\Lambda(P_\Lambda)
    =
    \left\{
        \theta\in\mathbb{R}^2:
        (\theta-\theta^\star)^\top
        \Lambda^{-1}
        (\theta-\theta^\star)
        \leq P_\Lambda^2
    \right\}.
    \label{eq:tolerance_ellipsoid}
\end{equation}
The two covariance matrices play distinct roles. $\widehat{\Sigma}_{\mathrm{BM},M}$ determines the size and orientation of the Monte Carlo confidence ellipsoid, whereas $\Lambda$ specifies the marginal scale and geometry in which accuracy relative to
$\theta^\star$ is measured. We take $\Lambda$ either to be a fixed estimate of the stationary marginal covariance $\operatorname{Var}_{\pi}(Y_i)$ obtained from an independent calibration run, or an online estimate from a late accepted-event window. In the simulations below, we prefer the latter. Unlike the conventional fixed-volume criterion, this condition controls both the extent of the confidence ellipsoid and its displacement from the known equilibrium target.

For comparison, one may replace the ellipsoids by a Bonferroni confidence box and impose separate tolerances on the two monitored coordinates. Writing $\widehat{\theta}_M=(\widehat{m}_{1,M},\widehat{m}_{3,M})^\top$, with Bonferroni-adjusted batch-means half-widths $h_{1,M}$ and $h_{3,M}$, the coordinatewise inclusion rule stops when
\begin{align}
\label{eq:coordinatewise_inclusion}
    \left|\widehat{m}_{1,M}-C\right|+h_{1,M}\leq P_1, \\
    \left|\widehat{m}_{3,M}\right|+h_{3,M}\leq P_3.
\end{align}
This axis-aligned construction treats the two monitored moments separately, does not exploit their cross-covariance, and is not invariant under general linear recombinations of the moment coordinates. It can consequently be governed by the more weakly estimated third-moment coordinate. 

The complete online procedure is summarized in Algorithm \ref{alg:ellipsoidal_inclusion}. The Markov-chain central limit theorem, batch-means construction, finite-batch critical value, covariance windows, and asymptotic interpretation are collected in Appendix \ref{app:stopping_rule_details}.

\begin{algorithm}[H]
\caption{Ellipsoidal inclusion stopping from the accepted
quasi-frequency record}
\label{alg:ellipsoidal_inclusion}

\scriptsize
\begin{algorithmic}[1]

\Require tolerance $P_\Lambda>0$;
error level $\alpha$;
burn-in $N_0$;
check interval $m$;
target $\theta^\star=(C,0)^\top$;
source
$\mathsf{source}\in\{\mathrm{cal},\mathrm{online}\}$

\State $p\gets 2$, $N\gets 0$
\Comment{number of accepted transitions}

\Loop
    \State Perform one iteration of the Lindblad-evolution algorithm

    \If{no transition is accepted}
        \State \textbf{continue}
    \EndIf

    \State Let $\omega$ be the accepted quasi-frequency
    \State $N\gets N+1$
    \State
    $Y_N\gets
    \bigl(\omega,(\omega-C)^3\bigr)^\top$

    \If{$N\le N_0$ or $(N\bmod m)\neq 0$}
        \State \textbf{continue}
    \EndIf

    \State $N_{\mathrm{post}}\gets N-N_0$
    \State
    $a\gets\lfloor\sqrt{N_{\mathrm{post}}}\rfloor$

    \If{$a\le p$}
        \State \textbf{continue}
    \EndIf

    \State
    $b\gets\lfloor N_{\mathrm{post}}/a\rfloor$
    \State
    $M\gets ab$, \quad $s\gets N-M$

    \State
    $\displaystyle
    \widehat\theta_M
    \gets
    \frac{1}{M}\sum_{i=s+1}^{N}Y_i$

    \For{$k=0,\ldots,a-1$}
        \State
        $\displaystyle
        \overline Y_k
        \gets
        \frac{1}{b}
        \sum_{i=1}^{b}Y_{s+kb+i}$
    \EndFor

    \State
    $\displaystyle
    \widehat\Sigma_{\mathrm{BM},M}
    \gets
    \frac{b}{a-1}
    \sum_{k=0}^{a-1}
    (\overline Y_k-\widehat\theta_M)
    (\overline Y_k-\widehat\theta_M)^\top$

    \State
    $\displaystyle
    c_{\alpha,a}
    \gets
    \frac{p(a-1)}{a-p}
    F_{1-\alpha;\,p,\,a-p}$

    \State
    $\displaystyle
    A_M
    \gets
    \frac{c_{\alpha,a}}{M}
    \widehat\Sigma_{\mathrm{BM},M}$

    \If{$\mathsf{source}=\mathrm{cal}$}
        \State
        $\Lambda\gets\Lambda_{\mathrm{cal}}$
    \Else
        \State Estimate
        $\widehat\Lambda_{\mathrm{online}}$
        on the prescribed late accepted-sample window

        \State
        $\Lambda\gets
        \widehat\Lambda_{\mathrm{online}}$
    \EndIf

    \State
    $d_M\gets
    \widehat\theta_M-\theta^\star$

    \State
    $\displaystyle
    R_{\Lambda,M}^2
    \gets
    \sup_{\lVert u\rVert_2\le 1}
    \bigl(d_M+A_M^{1/2}u\bigr)^\top
    \Lambda^{-1}
    \bigl(d_M+A_M^{1/2}u\bigr)$

    \State
    $R_{\Lambda,M}
    \gets
    \sqrt{R_{\Lambda,M}^2}$

    \If{$R_{\Lambda,M}\le P_\Lambda$}
        \State
        \Return
        $(\widehat\theta_M,
          \widehat\Sigma_{\mathrm{BM},M},
          \Lambda,
          R_{\Lambda,M},
          M)$
    \EndIf
\EndLoop

\end{algorithmic}
\end{algorithm}

The rule certifies only the monitored equilibrium constraints at the chosen
tolerance. We treat it as a practical convergence monitor, not as a finite-time certificate of global mixing or of convergence of every observable.


\paragraph{Validation of the monitored moments}
\label{sec:monitored_moment_validation}
We first test whether the monitored quasi-frequency moments contain information about the equilibration of a physically relevant observable, namely the energy. To isolate the deterministic signal from finite-trajectory noise, we compare the exact evolution of the monitored moments with the deterministic relaxation of the energy. This comparison validates the signal used by the stopping rule and provides an offline physical calibration of the tolerance $P_\Lambda$. 

We consider the periodic nearest-neighbor and fully connected transverse-field Ising Hamiltonians
\begin{align}
    H_{\mathrm{I}}^{(n)}
    &=
    -\frac{1}{n}
    \left(
        J\sum_{i=1}^{n}Z_iZ_{i+1}
        +
        h\sum_{i=1}^{n}X_i
    \right),
    \label{eq:benchmark_ising}
    \\
    H_{\mathrm{FC}}^{(n)}
    &=
    -\frac{1}{n}
    \left(
        J\sum_{1\leq i<j\leq n}Z_iZ_j
        +
        h\sum_{i=1}^{n}X_i
    \right),
    \label{eq:benchmark_fcising}
\end{align}
where $Z_{n+1}=Z_1$. In both cases we set $J=1$, $h=0.5$, and use the $1/n$ normalization displayed above. The filtered jump operators are constructed from $\{X_i/\sqrt{2n},Z_i/\sqrt{2n}\}_{i=1}^n$, using the Metropolis rate in Eq. \eqref{eq:Metropolis_transition_rate} and set the frequency width to $\sigma_E=1$. Hence, the known equilibrium center of the monitored quasi-frequency distribution is $C=-\beta/2$. The benchmark grid is
\begin{equation}
    n\in\{3,5,7\},
    \qquad
    \beta\in\{0.5,1.0,1.5\}.
    \label{eq:benchmark_grid}
\end{equation}

For every case $c=(m,n,\beta)$, where $m \in \{ \text{I}, \text{FC}\}$ labels the model family, let $\rho_c(t)$ be the deterministic Lindblad evolution starting from the maximally mixed state, and let $\rho_{\beta,c}$ be the corresponding Gibbs state. Define
\begin{equation}
    \varepsilon_{E,c}(t)
    =
    \left|
        \operatorname{Tr}
        \left[
            H_c\bigl(\rho_c(t)-\rho_{\beta,c}\bigr)
        \right]
    \right|.
    \label{eq:energy_error}
\end{equation}
Let $\pi_{c,t}$ denote the accepted-transition quasi-frequency distribution induced by $\rho_c(t)$, and write
\begin{equation}
    \theta_c(t)=\mathbb{E}_{\pi_{c,t}}[Y],
    \qquad
    \theta_c^\star=
    \begin{pmatrix}
        C\\
        0
    \end{pmatrix}.
\end{equation}
Using the exact stationary marginal covariance $\Lambda_{\star,c}=\operatorname{Var}_{\pi_c}(Y)$ in equilibrium, where $\pi_c=\pi_{c,\infty}$, define the deterministic target distance
\begin{align}
    D_{\Lambda,c}(t)
    &=
    \left\|
        \theta_c(t)-\theta_c^\star
    \right\|_{\Lambda_{\star,c}^{-1}},
    \nonumber\\
    \text{equivalently,} \nonumber \\
    D_{\Lambda,c}(t)^2
    &=
    \bigl(\theta_c(t)-\theta_c^\star\bigr)^\top
    \Lambda_{\star,c}^{-1}
    \bigl(\theta_c(t)-\theta_c^\star\bigr).
    \label{eq:deterministic_target_distance}
\end{align}
This is the infinite-sample counterpart of the target displacement entering the online inclusion rule, expressed in the exact stationary marginal-covariance geometry.

Figure \ref{fig:monitored_moment_validation} shows a nearly monotone co-decay of the deterministic distance and the energy error for both model families, all three inverse temperatures, and all three system sizes. This provides empirical evidence that the monitored moments contain information about convergence of the energy observable. It does not yet test the complete stochastic stopping condition, which also contains the Monte Carlo uncertainty represented by the confidence ellipsoid. We incorporate that uncertainty in the trajectory benchmark below.

\begin{figure*}[t]
    \centering
    \includegraphics[width=\textwidth]{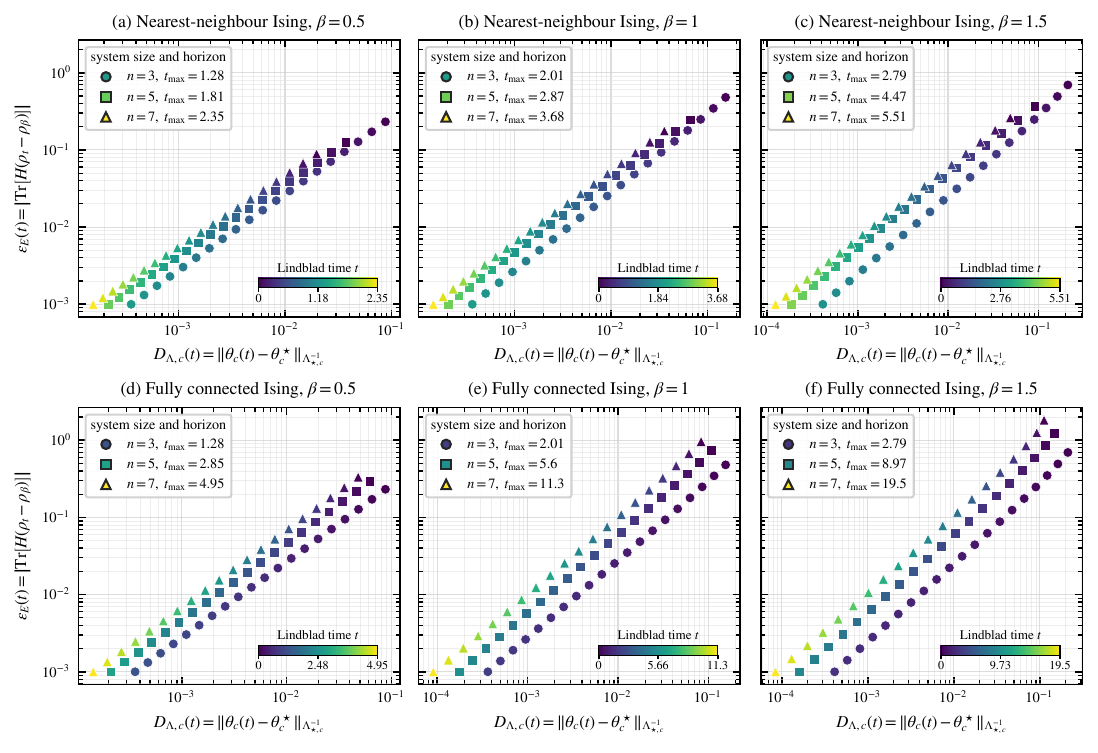}
    \caption{
        Relation between the deterministic target distance $D_{\Lambda,c}(t)$ and the energy error $\varepsilon_{E,c}(t)$ under deterministic Lindblad evolution. Rows correspond to the nearest-neighbour and fully connected Ising families, and columns to $\beta\in\{0.5,1.0,1.5\}$. Marker shape distinguishes $n\in\{3,5,7\}$, and color gives the physical Lindblad time. The legends report the actual terminal time $t_{\max}$ for each system size. The equilibration time increase with $n$ and $\beta$ and is also larger for the fully connected model. In all cases, we observe a co-decay which confirms that the quasi-frequency distribution can monitor the relaxation of the energy observable.
    }
    \label{fig:monitored_moment_validation}
\end{figure*}

\paragraph{Numerical benchmarking.}
\label{sec:numerical_benchmarking}
Having established at the deterministic level that the monitored moments respond to energy relaxation, we now test the complete online ellipsoidal inclusion rule of Algorithm \ref{alg:ellipsoidal_inclusion}. Specifically, we compare its inverse-temperature dependence with that of the deterministic energy-relaxation time over the benchmark grid in Eq.~\eqref{eq:benchmark_grid}. Unlike the comparison in \autoref{fig:monitored_moment_validation}, this test includes the Monte Carlo confidence region and the online estimation of the covariance geometry.

For each case $c$, we define the deterministic energy-relaxation reference time
\begin{equation}
    O_c
    =
    \inf\left\{
        t\geq0:
        \varepsilon_{E,c}(t)\leq10^{-3}
    \right\}.
    \label{eq:offline_reference_time}
\end{equation}
This can be used to calibrate the target-distance thresholds $P_\Lambda$. Table \ref{tab:stopping_tolerances} shows the chosen values. Within each model family, we hold $P_\Lambda$ fixed across $n$ and $\beta$, reflecting the practical setting in which its dependence on these parameters is not known in advance.

\begin{table}[t]
    \centering
    \caption{
        Target-distance tolerances used by the ellipsoidal inclusion rule. Each value of $P_\Lambda$ is fixed across $n$ and $\beta$ within the indicated model family.
    }
    \label{tab:stopping_tolerances}
    \begin{tabular}{lc}
        \hline\hline
        Model &$P_\Lambda$\\
        \hline
        I  &$2.0\times10^{-2}$\\
        FC &$1.3\times10^{-2}$\\
        \hline\hline
    \end{tabular}
\end{table}
For the stochastic simulation, we use the pure state trajectory method explained in Appendix \ref{app:trajectory_simulation}. Each trajectory starts from an independent Haar-random state, so the ensemble average is the maximally mixed state, and use a time step $\delta = 10^{-2}$ The point estimate $\widehat{\theta}_M$ and the long-run covariance estimate use the last $80\%$ of the accepted-transition record, whereas the online marginal covariance $\widehat{\Lambda}_M$ is estimated from its last $50\%$, with at least $200$ accepted transitions. The rule is evaluated after blocks of $10^3$ total sampler steps, once at least $50$ new accepted transitions have been collected. We use $95\%$ confidence regions throughout.

Let $\tau_c^{(j)}$ be the first-passage stopping time in trajectory $j$, measured in total sampler steps. For each case, we run $N_{\mathrm{runs}}=100$ independent trajectories and report
\begin{equation}
    S_c
    =
    \frac{1}{N_{\mathrm{runs}}}
    \sum_{j=1}^{N_{\mathrm{runs}}}
    \tau_c^{(j)}.
    \label{eq:mean_stopping_cost}
\end{equation}
Thus, $S_c$ estimates the expected computational cost of applying the stopping rule. Writing
\begin{equation}
    s_c^2
    =
    \frac{1}{N_{\mathrm{runs}}-1}
    \sum_{j=1}^{N_{\mathrm{runs}}}
    \left(\tau_c^{(j)}-S_c\right)^2,
\end{equation}
the error bars below are the ordinary $95\%$ Student-$t$ confidence intervals
\begin{equation}
    S_c
    \pm
    t_{0.975,N_{\mathrm{runs}}-1}
    \frac{s_c}{\sqrt{N_{\mathrm{runs}}}}.
    \label{eq:mean_stopping_cost_ci}
\end{equation}
These intervals quantify uncertainty in the estimated mean cost.

The deterministic quantity $O_c$ is expressed in Lindblad time, whereas $S_c$ counts circuit steps, so their raw magnitudes cannot be compared directly. Moreover, the stopping tolerance differs between the two model families. We therefore introduce one multiplicative scale for each model family $m$,
\begin{equation}
    a_m
    =
    \underset{c:\,m(c)=m}{\operatorname{median}}
    \frac{S_c}{O_c},
    \label{eq:model_specific_time_scale}
\end{equation}
and define
\begin{equation}
    \widetilde{S}_c
    =
    \frac{S_c}{a_{m(c)}}.
    \label{eq:model_calibrated_stopping_time}
\end{equation}
This calibration removes the conversion factor between sampler steps and Lindblad time, but it cannot change the ordering of cases or their relative dependence on $\beta$. The confidence limits for $\widetilde{S}_c$ are obtained by dividing both endpoints of Eq.~\eqref{eq:mean_stopping_cost_ci} by $a_{m(c)}$.

\begin{figure*}[t]
    \centering
    \includegraphics[width=\textwidth]{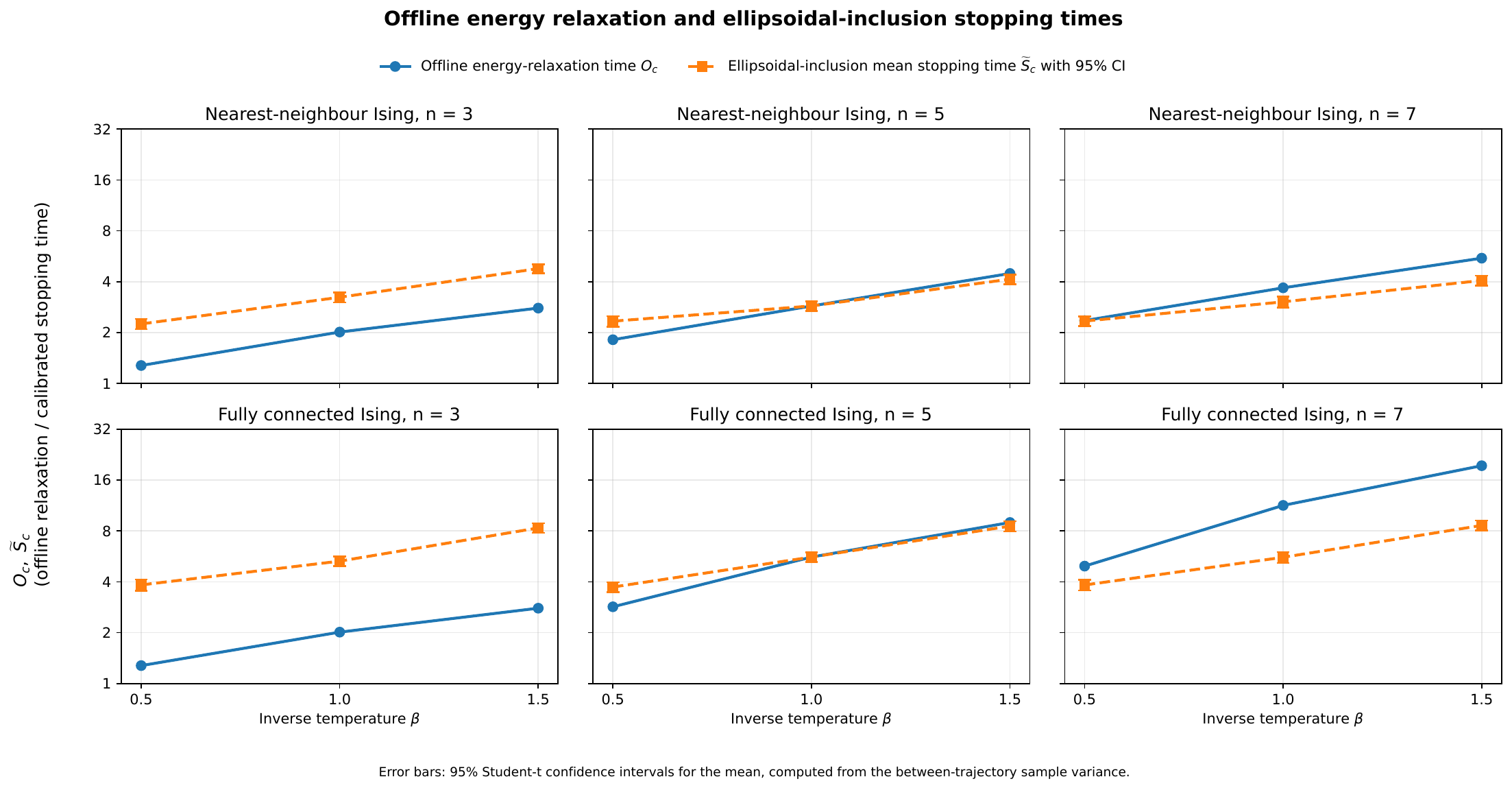}
    \caption{
        Inverse-temperature dependence of the deterministic energy-relaxation time $O_c$ and the model-calibrated mean stopping time $\widetilde{S}_c$ of the ellipsoidal inclusion rule. Rows correspond to the two model families and columns to $n\in\{3,5,7\}$. Error bars are $95\%$ Student-$t$ confidence intervals for the mean over $100$ independent trajectories. The deterministic reference carries no trajectory-sampling error bar. A base-two logarithmic vertical axis is used. We see that the scaling of $\widetilde{S}_c$ has the same qualitative behavior as the offline oracle $O_c$.
    }
    \label{fig:ellipsoidal_inclusion_beta_scaling}
\end{figure*}

\autoref{fig:ellipsoidal_inclusion_beta_scaling} shows that the ellipsoidal inclusion rule reproduces the temperature ordering of the offline energy-relaxation time in all six fixed-size series. In particular, both $O_c$ and $\widetilde{S}_c$ increase monotonically with $\beta$ for each model family and each $n$. Because
Eq.~\eqref{eq:model_specific_time_scale} supplies only one constant per model family, this ordering and the relative variation with $\beta$ are not imposed by the calibration. The agreement is qualitative rather than exact, and case-dependent differences remain.

We restrict this benchmark claim to the inverse-temperature dependence and do not infer system-size scaling from the present trajectory data. Across $n\in\{3,5,7\}$, the stochastic first-passage cost depends more weakly on $n$ than the deterministic reference time. This does not appear to result from an absence of size-dependent information in the exact monitored signal: in \autoref{fig:monitored_moment_validation}, the deterministic target distance co-decays with the energy error while the physical relaxation horizon increases with $n$. The most plausible explanation is that, at these small sizes and tolerances, the finite-sample precision requirement in $R_{\Lambda,M}$ dominates. This means that after the initial non-equilibrium relaxation, the trajectory must then continue until enough accepted transitions have accumulated to shrink the confidence ellipsoid. Admittedly, the present data do not establish this explanation conclusively.

\paragraph{Discussion.} The two numerical comparisons therefore support distinct conclusions. The deterministic comparison provides evidence that the monitored quasi-frequency moments contain information about convergence of the energy observable in the benchmark families. The trajectory comparison shows that this information remains useful in the complete online rule for detecting temperature-dependent slowing. Both of these results concern the energy observable. In the next section, we provide complementary analytical studies for the observed frequency-energy connection by identifying conditions under which quasi-frequency statistics retain information about the energy distribution.


\section{Relation between quasi-frequencies and energy distribution}
\label{sec:one-designs}

In Section \ref{sec:convergence} we introduced a practical stopping rule based on the quasi-frequency record and numerically assessed the relation between the monitored moments and energy relaxation. Here we give complementary analytical studies relating quasi-frequencies to the energy measurement itself. 

\paragraph{Special case: Davies limit.} The underlying intuition is clearest in the Davies limit: observing a Bohr frequency $\nu$ corresponds to an energy exchange of exactly that amount. This suggests that the statistics of observed frequencies should carry information about the energy distribution of the state. Our goal in this section is to make that statement precise for the algorithmic Lindbladian, where the observed quantities are quasi-frequencies $\omega$ rather than exact Bohr frequencies.

\paragraph{Recoverability between measurements.} Let $\{P_x\}_{x \in \mathsf{X}}$ and $\{Q_y\}_{y \in \mathsf{Y}}$ denote two POVMs, and assume they are related by a linear transformation described by a $|\mathsf{Y}| \times |\mathsf{X}|$ real matrix $G$ such that 
\begin{equation}
	Q_y = \sum_{x \in \mathsf{X}} G_{y x}\, P_x .
\end{equation}
For any state $\rho$, the probability distribution corresponding to the first POVM, $p_x = \Tr[P_x \rho]$, contains all the information about the second one, since $q_y = \Tr[Q_y \rho] = \sum_x G_{y x}\, p_x$, or, arranging the probabilities as column vectors, $\bq = G\, \bp$. Moreover, if $G$ admits a left pseudo-inverse $G^+$ satisfying $G^+ G = \bbI$, then we also have $\bp = G^+ \bq$, and in this sense the two POVMs are equivalent. Equivalently, this condition can be expressed as $\vspan\{P_x\}_{x \in \mathsf{X}} = \vspan\{Q_y\}_{y \in \mathsf{Y}}$. In this setting, $G$ represents a classical post-processing map between the POVMs, and the above condition ensures recoverability.

\paragraph{Our setting.} The reference measurement is the energy PVM $\{\Pi_{\epsilon}\}_{\epsilon\in\operatorname{spec}(H)}$. For the quasi-frequency record, define the operator density
\begin{equation}
    Q_{\omega} := \sum_{a\in\mathcal A} \gamma(\omega)\,\hat A_a(\omega)^{\dagger}\hat A_a(\omega),
\end{equation}
so that, for a small time step $\delta$, the quantity $\delta\,\Tr[Q_{\omega}\rho]\,d\omega$ is the probability of observing a transition with quasi-frequency in $[\omega,\omega+d\omega)$. Adjoining the decay event
\begin{equation}
    D := \bbI - \delta\int Q_{\omega}\,d\omega
\end{equation}
yields a bona fide POVM $\{D\}\cup\{\delta Q_{\omega}\}_{\omega\in\mathbb R}$ with output space $\{\Delta \} \cup \bbR$, where the symbol $\Delta$ denotes decay and the real values $\omega\in\bbR$ denote the observation of a transition with quasi-frequency $\omega$. The relevant information of this POVM is captured by the operator density $Q_{\omega}$, as can be seen from $\vspan \{D\} \cup \{\delta Q_{\omega}\}_{\omega\in\bbR} = \vspan \{\bbI\} \oplus \vspan \{Q_{\omega}\}_{\omega\in\bbR}$. Concentrating on $Q_{\omega}$ we can prove the desired connection between quasi-frequencies and energies. Note, the stopping rule of Sec. \ref{sec:convergence} uses the \emph{normalized} transition distribution $\delta \Tr[Q_{\omega} \rho] / (1 - \Tr[D \rho])$. The theorem below should therefore be read as analytical evidence for the physical content of the quasi-frequency record, rather than as a direct reconstruction formula for the practical estimator.

\paragraph{Jump operators.} To obtain a clean statement, we impose two simplifying assumptions. First, we take both the filter $f$ and the rate $\gamma$ to be Gaussian. Second, we assume that the couplings satisfy the \emph{uniform prior} condition
\begin{equation}
\label{eq:uniform_prior}
    \sum_{a\in\mathcal A} A_a^\dagger X A_a
    =
    \frac{\Tr(X)}{d}\,\bbI
\end{equation}
for every operator $X$, where $d=\dim\mathcal H$. This is the normalized unitary $1$-design condition, and it ensures that the jump family acts uniformly within each energy sector; an example being the family of Pauli strings.

\paragraph{General relation.} Extending the case of the Davies limit, the connection of the statistics of observed frequencies and the energy distribution of the state is as follows.

\begin{theorem}
\label{thm:energy_vs_frequencies}
Let $\{\Pi_{\epsilon}\}_{\epsilon\in\operatorname{spec}(H)}$ be the spectral projectors of the Hamiltonian $H$, and let $Q_{\omega}$ be the quasi-frequency operator density defined above. Assume that $f$ and $\gamma$ are Gaussian and that the coupling operators satisfy the uniform prior condition from Eq.~\eqref{eq:uniform_prior}. Then, we have that
\begin{equation}
    \vspan\{\Pi_{\epsilon}:\epsilon\in\operatorname{spec}(H)\}
    =
    \vspan\{Q_{\omega}:\omega\in\mathbb R\}.
\end{equation}
Equivalently, there exists a kernel $G$ from energies to quasi-frequencies with a left pseudo-inverse $G^+$ such that, for every state $\rho$,
\begin{align}
q(\omega,\rho)&=\sum_{\epsilon\in\operatorname{spec}(H)} G_{\omega\epsilon}\,p_{\epsilon}(\rho), \\
    p_{\epsilon}(\rho)&=\int G^+_{\epsilon\omega}\,q(\omega,\rho)\,d\omega,
\end{align}
where $p_{\epsilon}(\rho)=\Tr[\Pi_{\epsilon}\rho]$ and $q(\omega,\rho)=\Tr[Q_{\omega}\rho]$.
\end{theorem}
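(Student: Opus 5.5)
The plan is to compute $Q_\omega$ explicitly in the energy eigenbasis and show that it is diagonal in the spectral projectors. This gives the inclusion $\vspan\{Q_\omega\}\subseteq\vspan\{\Pi_\epsilon\}$ together with an explicit kernel $G$. The reverse inclusion then reduces to the linear independence of finitely many functions of $\omega$.

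\emph{Step 1 (diagonal form).} Inserting $e^{itH}=\sum_\epsilon e^{it\epsilon}\Pi_\epsilon$ into Eq.~\eqref{eq:Algorithmic_jumps} gives
\begin{equation*}
\hat A_a(\omega)=\sum_{\epsilon_1,\epsilon_2}\hat f(\omega-(\epsilon_2-\epsilon_1))\,\Pi_{\epsilon_2}A_a\Pi_{\epsilon_1},
\end{equation*}
where $\hat f$ is the Fourier transform of the window, a real Gaussian in frequency of width set by $\sigma_E$. Forming $\hat A_a^\dagger(\omega)\hat A_a(\omega)$ and summing over $a$, the middle factor is $\sum_a A_a^\dagger\Pi_{\epsilon_2}A_a$. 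By the uniform prior condition \eqref{eq:uniform_prior} this equals $(d_{\epsilon_2}/d)\,\bbI$, with $d_{\epsilon}=\Tr\Pi_\epsilon$. The outer projectors $\Pi_{\epsilon_1'}\,\bbI\,\Pi_{\epsilon_1}$ then force $\epsilon_1'=\epsilon_1$, so
\begin{equation*}
Q_\omega=\sum_{\epsilon}G_{\omega\epsilon}\Pi_\epsilon,\qquad G_{\omega\epsilon}=\gamma(\omega)\sum_{\epsilon'}\frac{d_{\epsilon'}}{d}\,\bigl|\hat f(\omega+\epsilon-\epsilon')\bigr|^2 .
\end{equation*}
This yields $q(\omega,\rho)=\sum_\epsilon G_{\omega\epsilon}p_\epsilon(\rho)$ and the inclusion $\subseteq$.

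\emph{Step 2 (reverse inclusion, the main obstacle).} Since $\operatorname{spec}(H)$ is finite and the $\Pi_\epsilon$ are linearly independent, equality of spans is equivalent to injectivity of $c\mapsto\sum_\epsilon c_\epsilon G_{\cdot\,\epsilon}$. Suppose this sum vanishes for all $\omega$. Divide by the strictly positive Gaussian $\gamma(\omega)$. Writing $g=|\hat f|^2$, which is a Gaussian, and $h(x)=\sum_{\epsilon'}(d_{\epsilon'}/d)\,g(x-\epsilon')$, the condition becomes $\sum_\epsilon c_\epsilon h(\omega+\epsilon)=0$. Taking the Fourier transform in $\omega$ gives
\begin{equation*}
\hat g(k)\Bigl(\sum_{\epsilon'}\tfrac{d_{\epsilon'}}{d}e^{-ik\epsilon'}\Bigr)\Bigl(\sum_\epsilon c_\epsilon e^{ik\epsilon}\Bigr)=0 .
\end{equation*}
Here $\hat g$ is a nowhere-vanishing Gaussian. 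The middle factor is a nonzero exponential polynomial: it equals $1$ at $k=0$ and is entire, so its zeros are isolated. Hence $\sum_\epsilon c_\epsilon e^{ik\epsilon}$ vanishes on a dense set, and by continuity it vanishes everywhere. Exponentials with distinct frequencies are linearly independent, so $c=0$. This is where Gaussianity of $f$ is used, through the nonvanishing of $\hat g$. For $\gamma$ only positivity is needed.

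\emph{Step 3 (pseudo-inverse).} Injectivity lets us choose finitely many points $\omega_1,\dots,\omega_{|\operatorname{spec}H|}$ at which the square matrix $[G_{\omega_j\epsilon}]$ is invertible. Alternatively, one can use the invertible Gram matrix $\int G_{\omega\epsilon}G_{\omega\epsilon'}\,d\omega$. Either choice defines $G^+_{\epsilon\omega}$ as a finite combination of delta functions, or as $(G^\top G)^{-1}G^\top$, satisfying $G^+G=\bbI$. Applying it to $q(\omega,\rho)$ recovers $p_\epsilon(\rho)$, which gives the stated reconstruction formulas.
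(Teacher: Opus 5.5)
Your proof is correct. Your Step 1 reproduces the paper's kernel exactly: substitute $\epsilon'=\epsilon+\nu$ to get $G_{\omega\epsilon}=\sum_{\nu}\gamma(\omega)|\hat f(\omega-\nu)|^2\,\Tr[\Pi_{\epsilon+\nu}]/d$.

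Where you differ is the injectivity step. The paper factors $G=SR$ through the finite set of Bohr frequencies, with $R_{\nu\epsilon}=\Tr[\Pi_{\epsilon+\nu}]/d$ and $S_{\omega\nu}=\gamma(\omega)|\hat f(\omega-\nu)|^2$. It then shows each factor has full column rank separately. For $R$ it extracts the rows $\nu_\ell=\epsilon_1-\epsilon_\ell$, which form an upper-triangular block with diagonal $\Tr[\Pi_{\epsilon_1}]/d>0$. For $S$ it uses linear independence of translated Gaussians.

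You instead observe that, after dividing by $\gamma$, all columns are translates of the single function $h$. One Fourier transform then turns the condition into $\hat g\,\chi\,\hat c=0$. The triangular argument is replaced by the fact that $\chi(k)=\sum_{\epsilon'}(d_{\epsilon'}/d)e^{-ik\epsilon'}$ is a nonzero exponential polynomial with isolated zeros.

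Your route is more compact and avoids the intermediate Bohr-frequency space. It also shows that Gaussianity of $f$ is stronger than needed. Since $\hat c$ is entire, it is enough that $\hat g\chi$ be nonzero on some interval. That holds for any nonzero window, because $\hat g(0)=\|\hat f\|_2^2>0$, $\chi(0)=1$, and both are continuous; the same observation applies to the paper's $S$ step.

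The paper's factorization, in turn, separates the purely spectral part $R$ (which depends only on the degeneracies) from the filter part $S$. That separation is what makes the generalization in the remark after its proof immediate. Suppose the prior merely maps each $\Pi_\epsilon$ into the span of the spectral projectors. Then the columns of $G$ are no longer translates of one function, so your convolution structure is lost. The factorization $G=SR$ persists, however, and one only needs full column rank of the new $R$.

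Your Step 3 chooses $G^+$ explicitly, via evaluation points or the Gram matrix. This is a slightly more concrete version of the paper's bare existence claim.
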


Theorem \ref{thm:energy_vs_frequencies} immediately extends to measurements whose outcome probabilities depend only on the energy distribution. More precisely a POVM,
\begin{equation}
\begin{aligned}
M_z
=
\sum_{\epsilon\in\operatorname{spec}(H)}
r(z\mid\epsilon)\,\Pi_\epsilon,
\\
r(z\mid\epsilon)\geq 0,
\qquad
\sum_{z\in\mathsf Z}r(z\mid\epsilon)=1.
\end{aligned}
\end{equation}
Such a measurement is a classical post-processing of the energy measurement, since
\begin{equation}
    \Pr(z\mid\rho)
    =
    \sum_{\epsilon\in\operatorname{spec}(H)}
    r(z\mid\epsilon)\,p_\epsilon(\rho).
\end{equation}
Consequently, because Theorem \ref{thm:energy_vs_frequencies} shows that the quasi-frequency statistics determine $p_\epsilon(\rho)$, they also determine the outcome statistics of every such measurement. For observables, the corresponding condition is $O=\sum_\epsilon o_\epsilon\Pi_\epsilon$. This is equivalent to $[O,H]=0$ when $H$ is nondegenerate.

\paragraph{Derivation.} We refer to Appendix~\ref{sec:proof_energy_vs_frequencies} for the full proof, but the main idea is to identify the map from energies $\epsilon$ to quasi-frequencies $\omega$ as the composition of two simpler transformations. The first maps energies to Bohr frequencies and is represented by a finite matrix whose full column rank follows from a triangular sub-matrix extracted from the ordered spectrum. The second maps Bohr frequencies to quasi-frequencies through a Gaussian kernel, and its full column rank follows from the linear independence of translated Gaussians. Together these two steps yield the desired left pseudo-inverse.

\paragraph{Conclusion.} This gives analytical support to the use of quasi-frequencies as convergence diagnostics, at least for energy incoherent observables. Under the assumptions of the theorem, the quasi-frequency record is not merely a heuristic proxy: it contains the full information of the energy distribution. In particular, if two states produce the same quasi-frequency statistics, then they must also produce the same energy distribution. Likewise, any finite-sample estimate of the quasi-frequency law induces a corresponding estimate of the energy law through \(G^+\). At the same time, the assumptions are deliberately convenient rather than maximally general, and the proof itself makes clear where they can be relaxed. This points naturally toward the more problem-tailored diagnostics discussed in the outlook.
\section{Outlook}
\label{sec:outlook}

We have outlined a stopping rule that requires no additional quantum operations or ancilla. It only records and classically post-process measurement outcomes already produced by the weak-measurement implementation of the sampler. In particular, the monitored version of the qubit-efficient sampler retains its single-reusable-ancilla architecture. This distinguishes our approach from the single-trajectory observable-estimation protocols of Jiang, Leng, and Lin and of Chen, Jiang, Li, and Ying \cite{Jiang2026,Chen2026b}, which insert additional Gibbs-preserving measurement channels and incur polylogarithmic overhead. This is particularly useful for qubit-efficient thermalization, where importing Gaussian-filtering weakens the no-QFT and low-ancilla advantages of the original construction. Nevertheless the different approaches are complementary. Whenever their implementations are compatible as in \cite{Chen2023b, Chen2023a}, the criterion introduced here can be used with loosened thresholds to determine the burn-in period, after which a single-trajectory protocol can be used to estimate observables with a resampling cost governed by an autocorrelation time rather than by the global mixing time. 

A broader research direction is to determine which techniques from classical MCMC can be ported to quantum Gibbs sampling while incurring at most polylogarithmic overhead. A recent example is quantum replica exchange \cite{chen2025quantum}. By coupling replicas at different temperatures, it gives a quantum analogue of parallel tempering and can overcome certain energy barriers. Other promising directions include quantum analogues of cluster or worm-type updates, flat-histogram and multicanonical methods, and adaptive selection of coupling operators and jump rates.

Finally, the criterion proposed here is deliberately Hamiltonian-agnostic. This makes it broadly applicable, but it also limits the information against which convergence can be tested. \autoref{thm:energy_vs_frequencies} suggests a systematic route toward Hamiltonian-informed diagnostics.
If spectral information is available, even approximately, the kernel mapping the energy distribution to the quasi-frequency distribution could be used to select features of the record that directly track energy observables, or to calibrate stopping tolerances in terms of a desired energy error. A more modest extension would use known locality, symmetries, degeneracies, or relevant observables to choose the monitored moments.


\paragraph*{Acknowledgments.}
We thank Stefan Wessel for providing an overview of classical Monte Carlo methods, Dootika Vats for helpful explanations of multivariate fixed-volume stopping rules, and Armando Bellante for discussions of the naive convergence-monitoring strategy. We also thank Fabian Hassler, Dominique Unruh, Julius A.~Zeiss, Tobias Rippchen, and Gereon Ko{\ss}mann for helpful discussions.

NL and MB acknowledge support from the German Federal Ministry of Research, Technology and Space (BMFTR) in the project QUantum Algorithms to SImulate MAny-body Physics (QuASi-MaP, Grant No. 13N17336), from the European Research Council Agreement No.~948139, from the Excellence Cluster – Matter and Light for Quantum Computing (ML4Q-2), and from the BMW Doctoral Program in Quantum Systems Integration through the ERS.
RI and MS thank support from the Basque Government BasQ initiative under the Q-STREAM project. They also acknowledge support from OpenSuperQ+100 (Grant No. 101113946) of the EU Flagship on Quantum Technologies, from Project Grant No. PID2024-156808NB-I00 and Spanish Ram\'on y Cajal Grant No. RYC-2020-030503-I funded by MI-CIU/AEI/10.13039/501100011033 and by “ERDF A way of making Europe” and “ERDF Invest in your Future”. RI also acknowledges the support of the Basque Government Ph.D. Grant No. PRE 2021-1-0102.

OpenAI’s ChatGPT, primarily using GPT-5.5 Pro, assisted with manuscript editing and the development and debugging of numerical and visualization code. All AI-assisted material was reviewed and verified by the authors, who take full responsibility for the results and conclusions.

\bibliography{lit}
\bibliographystyle{ultimate}

\newpage
\appendix

\tableofcontents

\section{Applications and broader context}

Preparing Gibbs states is a central task in quantum simulation. In physics, Gibbs states determine equilibrium quantities such as free energies, entropies, response coefficients, and finite-temperature phase diagrams, and they arise throughout condensed-matter theory, high-energy physics, and black-hole thermodynamics \cite{Binder1987,landau2021guide,takahashi1996thermo,maldacena2003eternal,zhu2020generation}. Beyond physics, Gibbs distributions also play important roles in Bayesian inference, Boltzmann machines, and optimization problems such as semi-definite programming \cite{gelfand2000gibbs,ackley1985learning,amin2018quantum,arora2012multiplicative,brandao2017quantum,van2017quantum}.

From the algorithmic perspective, the key challenge is not only preparing the Gibbs state but doing so efficiently. In both classical and quantum settings, the total cost is governed by mixing. Rigorous bounds on mixing times are already difficult in classical MCMC and remain even more challenging in the quantum setting \cite{Temme2013,Kastoryano2013,Kastoryano2016,Capel2021,bardet2023rapid,ding2024polynomial,rouze2024optimal,rouze2025efficient,kochanowski2025rapid,tong2025fast,vsmid2025polynomial,vsmid2025rapid,bergamaschi2025quantum}. Moreover, when the dynamics is based on local updates, one expects critical slowing down near phase transitions, just as in the classical case \cite{wolff1989critical,sokal1991beat}. This does not rule out more global update rules, but it does illustrate why practical convergence diagnostics remain valuable even when asymptotic guarantees are available.

In practice, convergence is rarely assessed solely through rigorous mixing-time bounds. Instead, practitioners typically monitor empirical diagnostics such as trace plots, summary statistics, Gelman--Rubin-type criteria, or effective sample sizes \cite{Vivekananda2019,HandbookMCMC2011,Gelman1992,geyer1992practical,flegal2008markov}. When convergence appears inadequate, they often deploy heuristic acceleration techniques such as tempering, parallel tempering, cluster moves, or adaptive proposals \cite{MarinariParisi1992,swendsen1986replica,HukushimaNemoto1996,SwendsenWang1987,Wolff1989,HaarioSaksmanTamminen2001,AndrieuThoms2008}. This gap between worst-case theory and practical diagnostics is part of the motivation for the convergence-monitoring viewpoint adopted in the main text.

The viewpoint of the main text is that weak-measurement records produced internally by quantum Gibbs samplers provide a natural source of such diagnostics. Rather than introducing additional destructive measurements of the system, one can exploit information already present in the ancilla outcomes of the Lindbladian simulation. The appendix material above explains the physical and algorithmic background behind this perspective, while Secs. \ref{sec:convergence} and \ref{sec:one-designs} develop its concrete consequences.


\section{Naive convergence monitoring}
\label{app:Naive Method}

A direct way to monitor the convergence of a quantum Gibbs sampler is to measure a suitable observable at a prescribed sequence of evolution times. Because such a measurement generally disturbs the system, the state must be prepared anew for every sample at every monitoring time. Thus, at a node $t_j$, one repeatedly prepares
\begin{equation}
    \rho(t_j) = e^{t_j\mathcal{L}}\!\left(\rho_0\right)
\end{equation}
and measures the chosen observable. Convergence may then be assessed, for example, by checking stability across several successive nodes or by fitting the late-time relaxation.

Let $N_{\mathrm{sh}}$ denote the number of independent preparations required at each node to obtain the desired statistical precision. The cumulative simulated evolution time of a grid $\{t_j\}_{j\in\mathcal{J}}$ is
\begin{equation}
    \mathcal{T}_{\mathrm{grid}}
    =
    N_{\mathrm{sh}}
    \sum_{j\in\mathcal{J}} t_j .
    \label{eq:naive_grid_cost}
\end{equation}
In the following, we isolate the dependence on the simulated evolution time.

Let $t_{\mathrm{mix}}(\varepsilon)$, or simply $t_{\mathrm{mix}}$ be the characteristic time at which a monitored observable converges to its equilibrium point up to tolerance $\varepsilon$. Here $t_{\mathrm{mix}}$ is used only for the cost comparison. The monitoring procedure does not know it in advance, and observing one quantity does not in general certify convergence of the full quantum state. After reaching the first grid point at or beyond $t_{\mathrm{mix}}$, suppose that the condition is required to remain satisfied for $l$ additional monitoring nodes for verification. We describe two methods to certify convergence.

\paragraph{Linearly spaced grid.}
Consider the grid
\begin{equation}
    t_j = j t_0,
    \qquad
    j=1,2,\ldots,
\end{equation}
with fixed spacing $t_0>0$, as shown in Fig.~\ref{fig:linear_spacing}. Define
\begin{equation}
    k
    =
    \left\lceil
        \frac{t_{\mathrm{mix}}}{t_0}
    \right\rceil ,
\end{equation}
so that
\begin{equation}
    (k-1)t_0
    <
    t_{\mathrm{mix}}
    \leq
    k t_0 .
\end{equation}
The first node guaranteed to lie at or beyond $t_{\mathrm{mix}}$ is therefore $k t_0$. Continuing through the $l$ additional verification nodes gives the final node $(k+l)t_0$. The total number of nodes is
\begin{equation}
    N_{\mathrm{lin}} = k+l
    =
    \Theta\!\left(
        \frac{t_{\mathrm{mix}}}{t_0}+l
    \right),
\end{equation}
and the cumulative evolution time is the arithmetic-series sum
\begin{align}
    \mathcal{T}_{\mathrm{lin}}
    &=
    t_0\sum_{j=1}^{k+l} j
    \nonumber\\
    &=
    \frac{t_0}{2}(k+l)(k+l+1)
    \nonumber\\
    &=
    \Theta\!\left(
        \frac{t_{\mathrm{mix}}^2}{t_0}
        +
        l\,t_{\mathrm{mix}}
        +
        l^2 t_0
    \right).
    \label{eq:linear_grid_cost}
\end{align}
Consequently, for fixed $t_0$ and fixed verification depth $l$, the linearly spaced strategy uses $\Theta(t_{\mathrm{mix}}/t_0)$ nodes and has cumulative evolution cost
\begin{equation}
    \mathcal{T}_{\mathrm{lin}}
    =
    \Theta\!\left(
        \frac{t_{\mathrm{mix}}^2}{t_0}
    \right).
\end{equation}

\paragraph{Geometrically spaced grid.} 
The quadratic dependence on $t_{\mathrm{mix}}$ can be avoided by using a geometrically spaced grid,
\begin{equation}
    t_j = 2^j t_0,
    \qquad
    j=0,1,\ldots,
\end{equation}
as illustrated in Figure \ref{fig:exponential_spacing}. Choose $k$ such that
\begin{equation}
    2^{k-1}t_0
    <
    t_{\mathrm{mix}}
    \leq
    2^k t_0 .
    \label{eq:exponential_grid_k}
\end{equation}
The first grid point at or beyond $t_{\mathrm{mix}}$ is then $2^k t_0$, and it satisfies
\begin{equation}
    t_{\mathrm{mix}}
    \leq
    2^k t_0
    <
    2t_{\mathrm{mix}} .
\end{equation}
After $l$ additional verification nodes, the final node is $2^{k+l}t_0$. The total number of nodes is
\begin{equation}
    N_{\mathrm{exp}}
    =
    k+l+1
    =
    \Theta\!\left(
        1+\log_2\!\frac{t_{\mathrm{mix}}}{t_0}+l
    \right),
\end{equation}
whereas the cumulative evolution time is
\begin{align}
    \mathcal{T}_{\mathrm{exp}}
    &=
    t_0\sum_{j=0}^{k+l}2^j
    \nonumber\\
    &=
    t_0\left(2^{k+l+1}-1\right)
    \nonumber\\
    &=
    \Theta\!\left(
        2^l t_{\mathrm{mix}}
    \right).
    \label{eq:exponential_grid_cost}
\end{align}
Thus, for a fixed verification depth $l$, exponential spacing reduces the node count to $\Theta(\log(t_{\mathrm{mix}}/t_0))$ and the cumulative evolution-time dependence to $\Theta(t_{\mathrm{mix}})$. The total cost is nevertheless not logarithmic in $t_{\mathrm{mix}}$: it is dominated by the longest fresh preparation. Moreover, each additional verification node doubles the next evolution time, producing the exponential dependence on $l$ in Eq.~\eqref{eq:exponential_grid_cost}. Both methods must still be multiplied by the number of samples $N_{\mathrm{sh}}$ required at every node and both require auxiliary measurements of freshly prepared states. By contrast, the convergence criterion introduced in the main text uses the weak-measurement record already generated during the Gibbs sampler evolution and does not require a separate family of direct measurements at predetermined times.

\begin{figure*}[t]
    \centering
    \subfigure[Linearly spaced grid]{%
        \includegraphics[width=1.00\columnwidth]{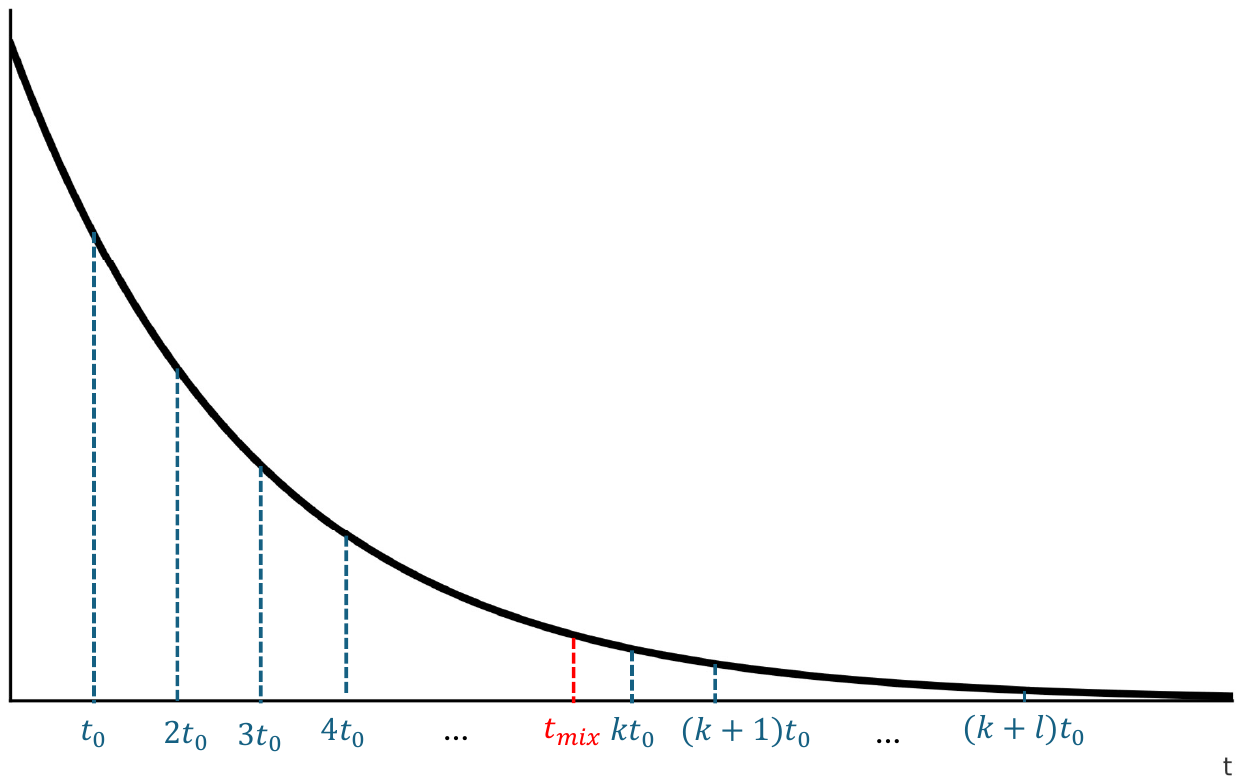}
        \label{fig:linear_spacing}}
    \hfill
    \subfigure[Geometrically spaced grid]{%
        \includegraphics[width=1.00\columnwidth]{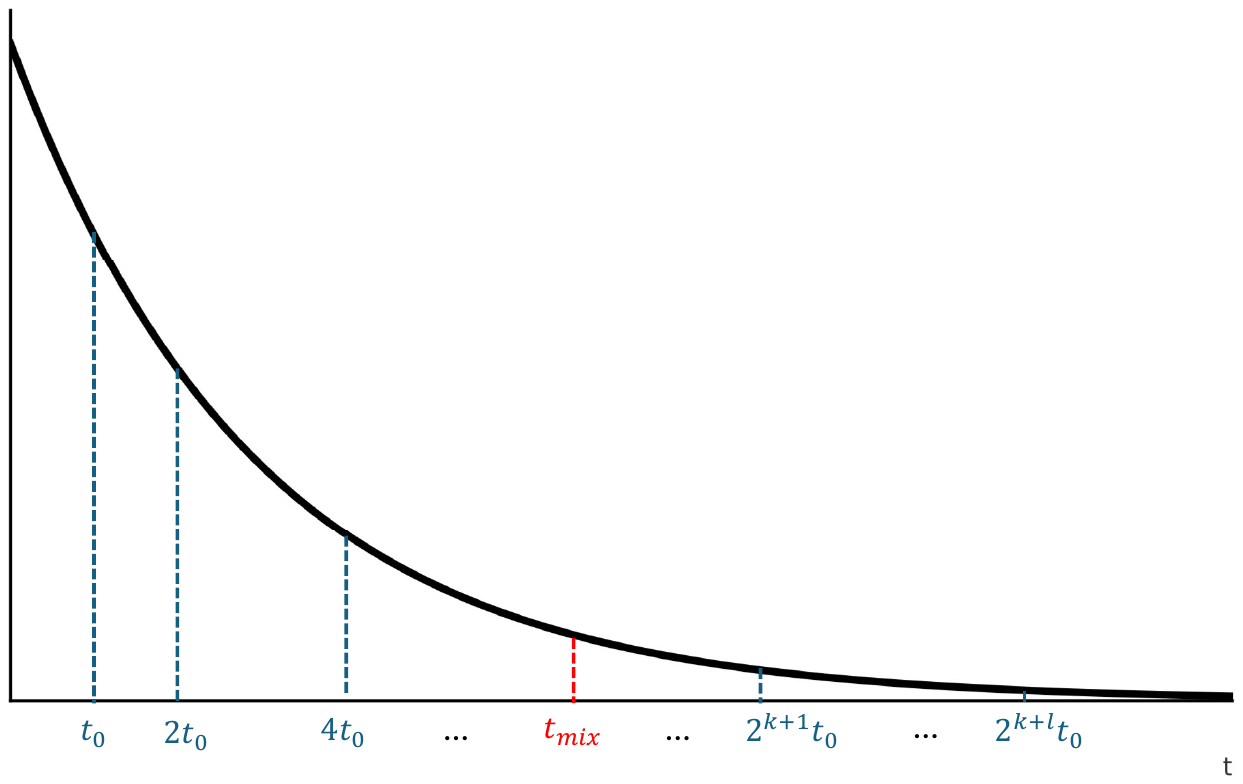}
        \label{fig:exponential_spacing}}
    \caption{Naive convergence monitoring by repeated state preparation and destructive measurement. The black curve represents the relaxation of an observable (for simplicity exponential), the vertical dashed lines denote monitoring times, and $t_{\mathrm{mix}}$ marks the unknown time at which the selected convergence tolerance is attained. For a fixed verification depth $l$, the linearly spaced grid uses $\Theta(t_{\mathrm{mix}}/t_0)$ nodes and cumulative evolution time $\Theta(t_{\mathrm{mix}}^2/t_0)$, whereas the geometrically spaced grid uses $\Theta(\log(t_{\mathrm{mix}}/t_0))$ nodes and cumulative evolution time $\Theta(t_{\mathrm{mix}})$. More generally, the verification overhead is $\Theta(2^l t_{\mathrm{mix}})$ for the geometrically spaced grid. All costs are multiplied by the number of independent samples required per node.}
    \label{fig:grid_spacing_for_measurement_times}
\end{figure*}


\section{Overview of Gibbs samplers}
\label{sec:long_overview_of_qgs}

In this appendix we expand on the construction of Sec. \ref{sec:qgs}. We first recall the classical detailed-balance picture behind Gibbs sampling and then review the Davies generator as a physical model of thermalization. We next explain why the exact jump operators of the Davies generator are not efficiently implementable and show how to construct algorithmic jump operators by Gaussian smearing of the Davies ones. Under the KMS detailed-balance condition, this leads to an algorithmically efficient Lindbladian whose evolution converges to the Gibbs state. Finally, we show how the associated weak-measurement evolution scheme produces the quasi-frequency distribution used in the main text.


\paragraph{Overview of classical MCMC.}
\label{subsec:classical_mcmc}

Classically, finite-temperature properties are often estimated by quantum Monte Carlo (QMC) methods \cite{foulkes2001quantum}, a family of Markov chain Monte Carlo (MCMC) schemes \cite{metropolis1953equation,levin2017markov,robert1999monte} adapted to path-integral and stochastic-series-expansion representations of the partition function. These methods are extremely successful, but in important regimes most notably frustrated or fermionic systems they can suffer from the sign problem \cite{Henelius2000,pan2022sign}. This is one of the main motivations for developing quantum algorithms that prepare Gibbs states directly, under the expectation that a quantum device should be naturally suited to simulating quantum many-body systems \cite{feynman1982simulating}.

Classical Markov chain Monte Carlo (MCMC) constructs a Markov chain whose unique stationary distribution is a desired target distribution. Let $\Omega$ be a finite state space and let $P(x\to y)$ denote the transition probabilities of a time-homogeneous Markov chain, satisfying $\sum_{y\in\Omega}P(x\to y)=1$ for all $x\in\Omega$. If $\mu_t$ denotes the distribution at time $t$, then
\begin{equation}
\label{eq:classical_markov_update}
    \mu_{t+1}(y)=\sum_{x\in\Omega}\mu_t(x)\,P(x\to y).
\end{equation}
A distribution $\pi$ is \emph{stationary} if it is a fixed point of the dynamics,
\begin{equation}
\label{eq:classical_stationary}
    \pi(y)=\sum_{x\in\Omega}\pi(x)\,P(x\to y)
    \qquad \text{for all } y\in\Omega.
\end{equation}
Under standard assumptions such as irreducibility and aperiodicity, the stationary distribution is unique and $\mu_t\to\pi$ as $t\to\infty$ for any initial distribution $\mu_0$.

A convenient sufficient condition ensuring that $\pi$ is stationary is \emph{detailed balance}:
\begin{equation}
\label{eq:classical_detailed_balance}
    \pi(x)\,P(x\to y)=\pi(y)\,P(y\to x),
    \qquad \text{for all } x,y\in\Omega.
\end{equation}
Summing \eqref{eq:classical_detailed_balance} over $x$ immediately gives \eqref{eq:classical_stationary}. This condition has a useful physical interpretation: at equilibrium there is no net probability current between any pair of states. To make contact with the energy-exchange statistics used in this paper, we introduce the one-way equilibrium flow
\begin{equation}
\label{eq:classical_oneway_flow}
    M(x,y):=\pi(x)\,P(x\to y),\quad x,y \in \Omega.
\end{equation}
Detailed balance is precisely the symmetry
\begin{equation}
\label{eq:classical_flow_symmetry}
    M(x,y)=M(y,x),
\end{equation}
i.e., the equilibrium flow is symmetric under time reversal.

When the goal is to sample from the Gibbs/Boltzmann distribution of the form
\begin{equation}
\label{eq:classical_gibbs_target}
    \pi_\beta(x)=\frac{e^{-\beta E(x)}}{Z(\beta)},
    \qquad
    Z(\beta)=\sum_{x\in\Omega} e^{-\beta E(x)},
\end{equation}
the detailed-balance condition \eqref{eq:classical_detailed_balance} becomes
\begin{equation}
\label{eq:Classical_KMS_condition}
    \frac{P(x\to y)}{P(y\to x)}=e^{-\beta(E(y)-E(x))}.
\end{equation}
A canonical construction is the Metropolis algorithm \cite{metropolis1953equation}. Given a symmetric proposal kernel $\Psi(x\to y)=\Psi(y\to x)$, at each step of the evolution, one proposes $y\sim\Psi(x\to\cdot)$ and accepts the move with probability
\begin{align}
\label{eq:metropolis_accept}
    \Gamma(x,y)
    &=
    \min\!\left\{
    1,\,
    \frac{\pi_\beta(y)}{\pi_\beta(x)}
    \right\} \nonumber \\
    &=
    \min\!\left\{1,e^{-\beta(E(y)-E(x))}\right\},
\end{align}
leading to a transition rule $P(x\to y)=\Gamma(x,y)\Psi(x\to y)$, for $x\neq y$, that satisfies \eqref{eq:classical_detailed_balance}.

This classical picture provides the right intuition for quantum Gibbs samplers. In the quantum setting, the Markov chain is replaced by a quantum Markov semigroup generated by a Lindbladian, and detailed balance is replaced by an appropriate KMS-type condition ensuring that the Gibbs state is stationary. The equilibrium-flow symmetry in \eqref{eq:classical_flow_symmetry} reappears as a symmetry of energy-exchange statistics associated with quantum jump processes.


\paragraph{Davies generator.}

Suppose that a system with Hamiltonian
\begin{equation}
    H=\sum_{\epsilon\in\operatorname{spec}(H)} \epsilon\,\Pi_\epsilon
\end{equation}
is weakly coupled to a thermal bath of inverse temperature $\beta$. In the weak-coupling, Born--Markov, and secular limit, the reduced dynamics are generated by a Lindbladian \cite{davies1974markovian,davies1976markovian} (for more modern derivations, see \cite[Sec.~XV]{lidar2019lecture} and \cite[Ch.~3.3]{breuer2002theory})
\begin{equation}
\label{eq:Lindblad_evolution}
    \dv{t}\rho=\mathcal{L}_D(\rho),
\end{equation}
which, in the form commonly used in the quantum algorithms literature \footnote{In fact, this follows from the original formulation of Davies by diagonalizing the Lindbladian and setting $\gamma_{\alpha \beta}(\nu)=\gamma (\nu)\delta_{\alpha \beta}$. Note, that when our goal is to construct a Gibbs Sampler, we have freedom on how to choose the transition rates, as long as they satisfy the KMS condition (or its analogue for the algorithmic Lindbladians of \cite{Chen2023b}), so this particular choice is not limiting for the current discussion.}, can be written as
\begin{widetext}
\begin{equation}
\label{eq:Davies_generator_conventional}
    \mathcal{L}_D(\rho)
    =
    -i[H,\rho]
    +
    \sum_{\nu\in\mathcal{B}(H)} \gamma(\nu)\sum_{a\in\mathcal{A}}
    \left(
    \hat A_a(\nu)\rho \hat A_a(\nu)^\dagger
    -\frac{1}{2}\{\hat A_a(\nu)^\dagger \hat A_a(\nu),\rho\}
    \right),
\end{equation}
\end{widetext}
where
\[
\mathcal{B}(H):=\{\epsilon_2-\epsilon_1:\epsilon_{1,2}\in\operatorname{spec}(H)\}
\]
is the set of Bohr frequencies and the rates $\gamma(\nu)$, arising from the bath fluctuations, satisfy the Kubo--Martin--Schwinger (KMS) condition
\begin{equation}
\label{eq:KMS_condition_first_davies}
    \gamma(\nu)=e^{-\beta\nu}\gamma(-\nu).
\end{equation}
The jump operators are
\begin{equation}
\label{eq:Jumps_of_Davies}
    \hat A_a(\nu)
    =
    \sum_{\epsilon_2-\epsilon_1=\nu}
    \Pi_{\epsilon_2}A_a\Pi_{\epsilon_1},
\end{equation}
where the $A_a$ are the coupling operators of the system to the bath.

From the definition, one immediately obtains
\begin{equation}
\label{eq:Dagger_relation_davies_jumps}
    \hat A_a(\nu)^\dagger=\hat A_a^\dagger(-\nu),
\end{equation}
and the decomposition $A_a=\sum_{\nu\in\mathcal{B}(H)}\hat A_a(\nu)$. Moreover, the jump operators are eigen-operators of the Heisenberg evolution generated by $H$:
\begin{align}
\label{eq:Commutation_relation_davies_jumps}
    e^{itH}\hat A_a(\nu)e^{-itH}&=e^{it\nu}\hat A_a(\nu), \\
    e^{\beta H}\hat A_a(\nu)e^{-\beta H}&=e^{\beta\nu}\hat A_a(\nu), \\
    [H,\hat A_a(\nu)]&=\nu \hat A_a(\nu).
\end{align}
Thus each $\hat A_a(\nu)$ mediates a transition between eigenspaces separated by energy $\nu$. Positive frequencies $\nu >0$ correspond to energy absorption, whereas negative ones $\nu <0$ correspond to emission. The KMS condition \eqref{eq:KMS_condition_first_davies} ensures that the ratio between them matches the Boltzmann factor. This implies that $\rho_\beta$ is stationary with respect to \eqref{eq:Lindblad_evolution}
\begin{equation}
\label{eq:Gibbs_is_steady_of_Davies}
    \mathcal{L}_D(\rho_\beta)=0.
\end{equation}
In the Davies setting, \eqref{eq:KMS_condition_first_davies} is equivalent to quantum detailed balance with respect to $\rho_\beta$. Under an ergodicity assumption, $\rho_\beta$ is then the unique steady state and the evolution converges to it.

To keep things aligned with the algorithmic Lindbladians used in Sec.~II, we will further assume that the family of coupling operators is closed under adjoint,
\begin{equation}
\label{eq:adjoint_assum}
    \{A_a:a\in\mathcal{A}\}=\{A_a^\dagger:a\in\mathcal{A}\}.
\end{equation}
This is automatically satisfied, for instance, when the $A_a$ are Hermitian.

The transition part of \eqref{eq:Davies_generator_conventional} already exhibits the symmetry that later motivates our stopping criterion. Define
\begin{equation}
    M(\nu):=\gamma(\nu)\sum_{a\in\mathcal{A}}
    \operatorname{Tr}\!\left[\hat A_a^\dagger(\nu)\hat A_a(\nu)\rho_\beta\right].
\end{equation}
This quantity measures the total equilibrium transition weight carried by frequency $\nu$.

\begin{lemma}
\label{lem:Davies_Mu}
Under the closure assumption \eqref{eq:adjoint_assum} and the KMS condition \eqref{eq:KMS_condition_first_davies}, we get
\begin{equation}
    M(\nu)=M(-\nu).
\end{equation}
\end{lemma}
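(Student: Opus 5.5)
We prove $M(\nu)=M(-\nu)$ by expanding $M(\nu)$ in energy eigenspaces, using the closure assumption to rewrite the sum over $a$, and then applying the KMS condition together with the Boltzmann weights of $\rho_\beta$.

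\textbf{Step 1: expand in eigenspaces.} Since $\rho_\beta=\sum_\epsilon e^{-\beta\epsilon}\Pi_\epsilon/Z$ commutes with every $\Pi_\epsilon$, substituting \eqref{eq:Jumps_of_Davies} gives
\begin{equation*}
\operatorname{Tr}\!\left[\hat A_a^\dagger(\nu)\hat A_a(\nu)\rho_\beta\right]
=\frac{1}{Z}\sum_{\epsilon_2-\epsilon_1=\nu} e^{-\beta\epsilon_1}\operatorname{Tr}\!\left[\Pi_{\epsilon_1}A_a^\dagger\Pi_{\epsilon_2}A_a\Pi_{\epsilon_1}\right].
\end{equation*}
Cross terms with different $\epsilon_1$ vanish by orthogonality of the projectors, and cyclicity of the trace puts each term in this form. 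Writing $\epsilon_1=\epsilon_2-\nu$ yields
\begin{equation*}
M(\nu)=\frac{\gamma(\nu)}{Z}\sum_{\epsilon_2-\epsilon_1=\nu}e^{-\beta\epsilon_2}e^{\beta\nu}\sum_{a}\operatorname{Tr}\!\left[\Pi_{\epsilon_2}A_a\Pi_{\epsilon_1}A_a^\dagger\right].
\end{equation*}

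\textbf{Step 2: use closure under adjoint.} The closure assumption \eqref{eq:adjoint_assum} lets us relabel $a$ so that $A_a\mapsto A_a^\dagger$; this is a bijection of the index set, so the sum over $a$ is unchanged. The inner sum then becomes $\sum_a\operatorname{Tr}[\Pi_{\epsilon_2}A_a^\dagger\Pi_{\epsilon_1}A_a]$. By cyclicity this equals $\sum_a\operatorname{Tr}[\Pi_{\epsilon_2}A_a^\dagger\Pi_{\epsilon_1}A_a\Pi_{\epsilon_2}]$, which is exactly the summand appearing in the Step 1 expansion of $M(-\nu)$ for the pair $(\epsilon_2,\epsilon_1)$, since $\epsilon_1-\epsilon_2=-\nu$.

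\textbf{Step 3: apply KMS.} The expansion of $M(-\nu)$ carries the prefactor $\gamma(-\nu)e^{-\beta\epsilon_2}$. Our rewritten $M(\nu)$ carries $\gamma(\nu)e^{\beta\nu}e^{-\beta\epsilon_2}$. The KMS condition \eqref{eq:KMS_condition_first_davies} gives $\gamma(\nu)e^{\beta\nu}=\gamma(-\nu)$, so the two prefactors agree term by term, and $M(\nu)=M(-\nu)$ follows.

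The main point requiring care is Step 2. The relabeling $a\mapsto a'$ with $A_{a'}=A_a^\dagger$ must be a bijection of $\mathcal A$, and the resulting sum is invariant only as a whole over $a$, not term by term; this is precisely why the assumption $\{A_a\}=\{A_a^\dagger\}$ is needed. The remaining bookkeeping, matching the pairs $(\epsilon_1,\epsilon_2)$ with $\epsilon_2-\epsilon_1=\nu$ to the pairs for $-\nu$ with roles swapped, is routine.
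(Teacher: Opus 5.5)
Your proof is correct and follows the same route as the paper's. Both use cyclicity of the trace, the factor $e^{\beta\nu}$ picked up when $\rho_\beta$ is moved past $\hat A_a(\nu)$, the KMS relation $\gamma(\nu)e^{\beta\nu}=\gamma(-\nu)$, and relabeling via closure under adjoints; the only difference is that you do this explicitly in the energy eigenbasis, whereas the paper uses the operator identities $\hat A_a(\nu)\rho_\beta=e^{\beta\nu}\rho_\beta\hat A_a(\nu)$ and $\hat A_a(\nu)^\dagger=\hat A_a^\dagger(-\nu)$ to avoid the index bookkeeping.
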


\begin{proof}
Using cyclicity of the trace,
\begin{align}
    M(\nu)
    &=
    \gamma(\nu)\sum_{a\in\mathcal{A}}
    \operatorname{Tr}\!\left[\hat A_a(\nu)\rho_\beta \hat A_a^\dagger(\nu)\right].
\end{align}
From \eqref{eq:Commutation_relation_davies_jumps} we have
\[
\hat A_a(\nu)\rho_\beta=e^{\beta\nu}\rho_\beta \hat A_a(\nu),
\]
and therefore
\begin{align}
    M(\nu)
    &=
    \gamma(\nu)e^{\beta\nu}
    \sum_{a\in\mathcal{A}}
    \operatorname{Tr}\!\left[\rho_\beta \hat A_a(\nu)\hat A_a^\dagger(\nu)\right]
    \nonumber\\
    &=
    \gamma(-\nu)
    \sum_{a\in\mathcal{A}}
    \operatorname{Tr}\!\left[\rho_\beta \hat A_a(\nu)\hat A_a^\dagger(\nu)\right],
\end{align}
where we used \eqref{eq:KMS_condition_first_davies}. By \eqref{eq:Dagger_relation_davies_jumps},
\begin{equation}
\hat A_a(\nu)\hat A_a^\dagger(\nu)
=
\hat A_a^\dagger(-\nu)^\dagger\hat A_a^\dagger(-\nu),
\end{equation}
and since the set of couplings is closed under adjoint, the sum over $a$ may be relabeled. Hence
\begin{equation}
    M(\nu)
    =
    \gamma(-\nu)\sum_{a\in\mathcal{A}}
    \operatorname{Tr}\!\left[\hat A_a^\dagger(-\nu)\hat A_a(-\nu)\rho_\beta\right]
    =
    M(-\nu).
\end{equation}
\end{proof}

This lemma is the discrete Davies-limit statement of absorption-emission balance at thermal equilibrium. It is the starting intuition behind the quasi-frequency symmetry studied in Sec. \ref{sec:convergence}. Each jump operator $A_a(\nu)$ is associated with an energy change $\nu$ of the system and hence an opposite energy change with the environment. When the system is in the Gibbs state, we expect the frequencies to follow a symmetric distribution with respect to $0$, as the average energy exchange between the system and environment is $0$ at equilibrium, and additionally detailed-balance is equivalent to time-reversibility in the steady state. Furthermore, we can probe the properties of this distribution using information already manifest in the bath, without directly measuring the system state.


\paragraph{Algorithmic Lindbladian.}
\label{app:algorithmic_lindbladian}

The most direct quantum analogue of classical MCMC is the quantum Metropolis algorithm \cite{temme2011quantum}. More recent quantum Gibbs samplers instead follow a different strategy: they construct a Lindbladian whose stationary state is the Gibbs state and then simulate the corresponding open-system dynamics efficiently on a quantum computer \cite{chen2025efficient,Chen2023a,Chen2023b,Ding2024}. The algorithmic Lindbladians used in Sec.~II arise precisely in this way.

The starting point is a block encoding of a family of prior jump operators \(\{A_a\}_{a\in\mathcal A}\), chosen so that the resulting dynamics is ergodic and so that the set is closed under adjoint as in \eqref{eq:adjoint_assum}. Concretely, one assumes access to a unitary \(V\) such that
\begin{equation}
\label{eq:prior_jump_block_encoding}
    (\bra{0}_{\rB}\otimes \bbI_{\rA}\otimes \bbI_{\rS})\,
    V\,
    (\ket{0}_{\rB}\ket{0}_{\rA}\ket{\psi}_{\rS})
    =
    \sum_{a\in\mathcal A}\ket{a}_{\rA}A_a\ket{\psi}_{\rS}.
\end{equation}
The second primitive is controlled Hamiltonian evolution,
\begin{equation}
\label{eq:controlled_hamiltonian_evolution}
    \ket{t}_{\rE}\ket{\psi}_{\rS}
    \mapsto
    \ket{t}_{\rE}\,e^{-itH}\ket{\psi}_{\rS},
\end{equation}
whose cost depends on the chosen Hamiltonian-simulation method, sparsity and on the relevant evolution times.

To understand the role of these primitives, consider again the jump operators of \eqref{eq:Jumps_of_Davies}. They are obtained by resolving the coupling operator $A_a$ into components that connect eigenspaces of \(H\) separated by a Bohr frequency. From a quantum-algorithms perspective, the natural way to access these components is through idealized quantum phase estimation (QPE), which effectively performs an energy measurement of the system using controlled evolutions $e^{-itH}$. This viewpoint is illustrated in Fig.~\ref{fig:Anu_with_qpe}, which shows how a block encoding of the prior jumps can be converted into a block encoding of their operator Fourier transform, by sandwiching it with projections.

\begin{figure*} 
\centering 

\begin{quantikz} \lstick{$\rE: \ket{0}$} & \gate[2]{\qpe^{\dag}} && \gate[2]{\qpe} & \rstick{$\ket{\nu}$} \\ \lstick{$\rS:$ \makebox[0pt][l]{$\;\rho$}\phantom{$\ket{0}$}} & & \gate[3]{V} & & \rstick{$\hat{A}_a(\nu)\rho (\hat{A}_a(\nu))^{\dagger}$}\\ \lstick{$\rA: \ket{0}$} && && \rstick{$\ket{a}$} \\ \lstick{$\rB: \ket{0}$} & & & & \rstick{$\ket{0}$!} \end{quantikz} $$\equiv$$ \begin{quantikz} \lstick{$\rE: \ket{0}$} &\gate{\qft}& \ctrl{1} \wire[r][1]["t"{above, pos=0.05}]{q} &&\ctrl{1} \wire[r][1]["t"{above, pos=0.05}]{q}&\gate{\qft^{\dag}} & \rstick{$\ket{\nu}$} \\ \lstick{$\rS:$ \makebox[0pt][l]{$\;\rho$}\phantom{$\ket{0}$}} &&\gate{e^{-\ri t H}}& \gate[3]{V} &\gate{e^{\ri t H}}& & \rstick{$\hat{A}_a(\nu)\rho (\hat{A}_a(\nu))^{\dagger}$}\\ \lstick{$\rA: \ket{0}$} && & & & & \rstick{$\ket{a}$} \\ \lstick{$\rB: \ket{0}$} && & & & & \rstick{$\ket{0}$!} 
\end{quantikz}
\caption{Ideal Bohr-frequency resolution via phase estimation. Starting from a
block encoding \(V\) of the prior jumps, exact QPE resolves the coupling
operators into components \(\hat A_a(\nu)\) that connect energy sectors
separated by the Bohr frequency \(\nu\). The equivalent time-domain circuit
shows that \(\hat A_a(\nu)\) arises from the Fourier decomposition of
\(A_a(t)=e^{itH}A_ae^{-itH}\). This ideal picture motivates the filtered
construction used in the algorithmic sampler}
\label{fig:Anu_with_qpe} 
\end{figure*}

It is convenient to express this construction in the time domain, since this is the form directly used in the filtered implementation. Defining the Heisenberg-picture operator
\begin{equation}
A_a(t):=e^{itH}A_ae^{-itH},
\end{equation}
the spectral decomposition of $H$ gives
\begin{equation}
A_a(t)=\sum_{\nu\in\mathcal{B}(H)} e^{it\nu}\hat A_a(\nu),
\end{equation}
so that $\hat A_a(\nu)$ is precisely the coefficient of the Fourier mode $e^{it\nu}$. 
Thus the ideal frequency content of $A_a(t)$ is supported only on the discrete set of Bohr frequencies. Equivalently, one may encode this exact frequency decomposition by the operator-valued discrete measure
\begin{equation}
\mu_a(d\omega)
:=
\sum_{\nu\in\mathcal B(H)} \hat A_a(\nu)\,\delta_\nu(d\omega),
\end{equation}
or, in distributional notation,
\begin{equation}
\label{eq:distributional_fourier_Aa}
\mu_a(\omega)
=
\sum_{\nu\in\mathcal B(H)} \hat A_a(\nu)\,\delta(\omega-\nu),
\end{equation}
with
\begin{equation}
  \mathcal{F}\left[A_a\right](\omega)=\sqrt{2 \pi} \mu_a(\omega)=\sqrt{2 \pi} \sum_{\nu \in \mathcal{B}(H)} \hat{A}_a(\nu) \delta(\omega-\nu) .  
\end{equation}
This ideal picture also makes clear why the exact implementation is problematic. Perfectly resolving the delta peaks in \eqref{eq:distributional_fourier_Aa} requires arbitrarily fine frequency resolution, which in turn requires arbitrarily long evolution times and correspondingly arbitrarily large QPE registers. The algorithmic construction replaces this singular frequency measure by a smeared version. Let $g_{\sigma_E}$ be a frequency-resolution kernel, and define
\begin{equation}
\label{eq:smeared_jump_measure}
\hat A_a^{(\sigma_E)}(\omega)
:=
(\mu_a*g_{\sigma_E})(\omega)
=
\sum_{\nu\in\mathcal B(H)} \hat A_a(\nu)\,g_{\sigma_E}(\omega-\nu).
\end{equation}
In other words, each ideal delta peak at a Bohr frequency is replaced by a translated copy of the profile $g_{\sigma_E}$. When the filter is smooth and effectively band-limited in the frequency domain, the time-domain kernel has rapidly decaying tails, so the integral defining the filtered jump can be truncated with controlled error \cite{Ding2024}. A particularly convenient choice is Gaussian (see \cite[App.D]{Chen2023b} for an argument on why it is a unique choice in the case of a real filter),
\begin{equation}
g_{\sigma_E}(\omega)=\hat f_{\sigma_E}(\omega),
\end{equation}
where $f_{\sigma_E}$ is the time-domain Gaussian window and $\hat f_{\sigma _E}$ is the corresponding frequency-resolution kernel. Substituting into \eqref{eq:smeared_jump_measure} gives
\begin{align}
\hat A_a^{(\sigma_E)}(\omega)
&=
\sum_{\nu\in\mathcal B(H)}
\hat A_a(\nu)
\frac{1}{\sqrt{2\pi}}
\int
f_{\sigma_E}(t)e^{-i(\omega-\nu)t}\,dt \nonumber \\
&=
\frac{1}{\sqrt{2\pi}}
\int
f_{\sigma_E}(t)e^{-i\omega t}
\left(\sum_{\nu\in\mathcal B(H)} e^{it\nu}\hat A_a(\nu)\right)\,dt \nonumber \\
&=
\frac{1}{\sqrt{2\pi}}
\int
e^{itH}A_ae^{-itH}\,e^{-i\omega t}\,f_{\sigma_E}(t)\,dt.
\end{align}
Thus the algorithmic jump operators are precisely the Fourier transform of the Heisenberg-evolved operator after multiplication by the time window $f_{\sigma_E}$. This is the form directly implemented by the filtered-jump circuit: one prepares a superposition over times weighted by $f_{\sigma_E}(t)$, applies the forwrad and backward controlled evolutions around the prior jump block encoding $V$, and Fourier transforms the time register to obtain the broadened frequency label $\omega$.

For the Gaussian window
\begin{equation}
f_{\sigma _E}(t)=e^{-\sigma_E^2 t^2}\sqrt{\sigma_E\sqrt{2/\pi}},
\end{equation}
its Fourier transform is
\begin{equation}
\hat f_{\sigma_E}(\omega)
=
\frac{1}{\sqrt{\sigma_E\sqrt{2\pi}}}
\exp\!\left(-\frac{\omega^2}{4\sigma_E^2}\right).
\end{equation}
The width of  $\hat f _{\sigma_E}$ determines the achievable frequency resolution, while the spread of $f_{\sigma_E}$ determines the range of evolution times that must be implemented. Sharper frequency resolution therefore requires longer observation times. In practice, both the time register and the quasi-frequency register are discretized to finite precision and restricted to finite intervals; these additional errors can be controlled together with the truncation error of the filter.

The resulting generator takes the form of \eqref{eq:Algorithmic_lindbladian}. Enforcing the KMS detailed-balance \eqref{eq:KMS_DB} on the transition part gives the typical choices of transition rates. The Metropolis and Gaussian families in \eqref{eq:Metropolis_transition_rate} and \eqref{eq:Gaussian_transition_rate}. Enforcing the KMS-type detailed-balance on the rest of the Lindbladian, gives the corresponding Hermitian correction
\begin{equation}
    B=\frac{i}{2}\sum_{\nu\in\mathcal{B}(H)}
    \tanh\!\left(\frac{\beta\nu}{4}\right)R(\nu),
\end{equation}
where
\begin{equation}
    R=
    \int d\omega\, \gamma(\omega)\sum_{a\in\mathcal{A}}
    \hat A_a^\dagger(\omega)\hat A_a(\omega),
\end{equation}
with
\begin{equation}
    R(\nu):=\sum_{\epsilon_2-\epsilon_1=\nu}\Pi_{\epsilon_2}R\Pi_{\epsilon_1},
\end{equation}
the full generator satisfies detailed balance and leaves the Gibbs state invariant.

The Davies generator is recovered from this structure in an integrated sense. More precisely, for a family of windows such that $|\hat f_{\sigma_E}|^2$ forms an approximate identity as $\sigma_E\to0$, the diagonal contributions converge to the Davies rates at each Bohr frequency, while the off-diagonal contributions vanish after integration over $\omega$. This is the rigorous sense in which the algorithmic dissipator converges to the Davies dissipator as the frequency resolution is sharpened.

The above construction also admits a direct circuit implementation. Using the prior-jump block encoding $V$, the controlled Hamiltonian evolutions, a preparation of the time superposition weighted by $f$, and a Fourier transform on the time register, one obtains a unitary $U$ satisfying
\begin{align}
\label{eq:appendix_filtered_jump_block_encoding}
    &(\bra{0}_{\rqq}\bra{0}_{\rB}\otimes \bbI_{\rA}\otimes \bbI_{\rS}\otimes \bbI_{\rE})
    \,U\,
    (\ket{0}_{\rqq}\ket{0}_{\rB}\ket{0}_{\rA}\ket{\psi}_{\rS}\ket{0}_{\rE})
    \nonumber\\
    &\hspace{6em}=
    \sum_{a,\omega}
    \sqrt{\gamma(\omega)}\,
    \ket{a}_{\rA}\ket{\omega}_{\rE}\,
    \hat A_a(\omega)\ket{\psi}_{\rS}.
\end{align}
The construction is shown in Fig.~\ref{fig:block_encoding_u}. The register $\rE$ is first prepared in a superposition of times with amplitudes determined by the filter $f$; controlled forward and backward evolutions by $H$ implement the Heisenberg evolution around the prior-jump block encoding $V$; a Fourier transform maps the time register to the quasi-frequency register; and a controlled rotation weights the amplitudes by $\sqrt{\gamma(\omega)}$, with success flagged in the auxiliary qubit $\rqq$.

\begin{figure*} 
\centering 

\begin{quantikz} \lstick{$\rE: \ket{0}$} &\gate{\prepf}& \ctrl{2} \wire[r][1]["t"{above, pos=0.05}]{q} &&\ctrl{2} \wire[r][1]["t"{above, pos=0.05}]{q}&\gate{\qft} & \ctrl{4} \wire[r][1]["\omega"{above, pos=0.05}]{q} & \rstick{$\ket{\omega}$} \\ \lstick{$\rA: \ket{0}$} && & \gate[3]{V} & & & & \rstick{$\ket{a}$} \\ \lstick{$\rS:$ \makebox[0pt][l]{$\;\rho$}\phantom{$\ket{0}$}} &&\gate{e^{-\ri t H}}& &\gate{e^{\ri t H}}& & & \rstick{$\rho_{a,\omega}$}\\ \lstick{$\rB: \ket{0}$} && & & & & & \rstick{$\ket{0}$!}\\ \lstick{$\rqq: \ket{0}$} && & & & & \gate{Y_{1-\gamma(\omega)}} & \rstick{$\ket{0}$!} 
\end{quantikz} 

\caption{Construction of the block-encoding unitary $U$ used for the filtered jump operators in \eqref{eq:Algorithmic_jumps}. Here $(\bra{0}_{\rB}\otimes \bbI_{\rA}\otimes \bbI_{\rS})\,V\,(\ket{0}_{\rB}\ket{0}_{\rA}\ket{\psi}_{\rS}) = \sum_{a\in\mathcal A}\ket{a}_{\rA}A_a\ket{\psi}_{\rS}$ is the block encoding of the prior jumps. The register $\rE$ is prepared in the time superposition weighted by the filter $f$, the system undergoes controlled forward and backward evolutions by $H$, the QFT maps the time register to the quasi-frequency register, and the controlled rotation on $\rqq$ weights the amplitudes by $\sqrt{\gamma(\omega)}$. Conditioned on successful postselection of $\rB$ and $\rqq$, the registers $\rA$ and $\rE$ store the jump label and quasi-frequency, respectively. We define $\rho_{a,\omega} := \gamma(\omega) \hat{A}_a(\omega) \rho (\hat{A}_a(\omega))^{\dagger}$.} \label{fig:block_encoding_u} 
\end{figure*}

This filtered-jump block encoding is the central ingredient in the weak-measurement simulation of the dissipative part of the Lindbladian discussed in Sec.~II. Standard Gibbs-sampling implementations trace out the resulting ancilla record, thereby recovering the Lindbladian evolution alone. In the main text, by contrast, we retain the quasi-frequency outcomes and use their statistics as a convergence diagnostic.


\paragraph{Distribution of quasi-frequencies.}
\label{app:distribution_quasi_frequencies}

We now describe in more detail the quasi-frequency record produced by the weak-measurement implementation of the dissipative step. Operationally, each iteration produces a collection of flag bits and, when successful, a jump label together with a quasi-frequency.

The registers \(\rB\) and \(\rqq\) flag whether the block-encoding routine \(U\) was successfully applied. Conditioned on \(\rB=\rqq=0\), the register \(\rqq'\) distinguishes between a decay event \((\rqq'=0)\) and a transition event \((\rqq'=1)\). In the latter case, the register \(\rE\) stores the quasi-frequency \(\omega\), and the register \(\rA\) stores the jump label \(a\).

For a small time step \(\delta\), the probability density of observing a successful transition event with jump label \(a\) and quasi-frequency in \([\omega,\omega+d\omega)\) is
\begin{equation}
\label{eq:app_joint_transition_density}
    \delta\,\gamma(\omega)\,
    \Tr\!\left[\hat A_a^\dagger(\omega)\hat A_a(\omega)\rho\right]\,d\omega.
\end{equation}
Equivalently, the corresponding unnormalized post-measurement state in register \(\rS\) is proportional to
\begin{equation}
    \gamma(\omega)\,\hat A_a(\omega)\rho \hat A_a^\dagger(\omega).
\end{equation}
Summing \eqref{eq:app_joint_transition_density} over all jump labels and conditioning on a transition event yields the observed quasi-frequency density
\begin{equation}
\label{eq:app_mu_trans}
\mu_{\mathrm{trans}}(\omega,\rho)
=
\frac{
\sum_{a\in\mathcal A}
\gamma(\omega)\Tr\!\left[\hat A_a^\dagger(\omega)\hat A_a(\omega)\rho\right]
}{
\int d\omega'
\sum_{a\in\mathcal A}
\gamma(\omega')\Tr\!\left[\hat A_a^\dagger(\omega')\hat A_a(\omega')\rho\right]
}.
\end{equation}
At equilibrium we denote
\begin{equation}
    \pi(\omega):=\mu_{\mathrm{trans}}(\omega,\rho_\beta).
\end{equation}

For the proof of Proposition~\ref{thm:quasi_frequency_symmetry}, it is convenient to work with the unnormalized equilibrium density
\begin{equation}
\label{eq:app_h_definition}
    h(\omega):=
    \gamma(\omega)\sum_{a\in\mathcal A}
    \Tr\!\left[\hat A_a^\dagger(\omega)\hat A_a(\omega)\rho_\beta\right],
\end{equation}
so that
\begin{equation}
    \pi(\omega)=\frac{h(\omega)}{\int h(\omega')\,d\omega'}.
\end{equation}

Using the Bohr-frequency decomposition
\begin{equation}
    \hat A_a(\omega)=\sum_{\nu\in\mathcal B(H)}\hat A_a(\nu)\,\hat f(\omega-\nu),
\end{equation}
we obtain
\begin{equation}
\begin{split}
    h(\omega)
    &=
    \gamma(\omega)
    \sum_{a\in\mathcal A}
    \sum_{\nu_1,\nu_2\in\mathcal B(H)}
    \overline{\hat f(\omega-\nu_1)}\,\hat f(\omega-\nu_2)\, 
    \\
    &\qquad \times \Tr\!\left[\hat A_a^\dagger(\nu_1)\hat A_a(\nu_2)\rho_\beta\right].
\end{split}
\end{equation}
The off-diagonal terms vanish. Indeed, from
\[
[H,\hat A_a^\dagger(\nu_1)\hat A_a(\nu_2)]
=
(\nu_2-\nu_1)\hat A_a^\dagger(\nu_1)\hat A_a(\nu_2)
\]
and \([H,\rho_\beta]=0\), one finds that
\[
\Tr\!\left[\hat A_a^\dagger(\nu_1)\hat A_a(\nu_2)\rho_\beta\right]=0
\qquad
\text{whenever }\nu_1\neq \nu_2.
\]
Therefore
\begin{equation}
\label{eq:app_h_diagonal_form}
    h(\omega)
    =
    \gamma(\omega)\sum_{\nu\in\mathcal B(H)}
    |\hat f(\omega-\nu)|^2\,m(\nu),
\end{equation}
where we defined
\begin{equation}
\label{eq:app_m_nu_definition}
    m(\nu):=
    \sum_{a\in\mathcal A}
    \Tr\!\left[\hat A_a(\nu)\rho_\beta \hat A_a^\dagger(\nu)\right].
\end{equation}

When the filter is Gaussian, \(|\hat f(\omega-\nu)|^2\) is, up to an irrelevant positive normalization constant, the Gaussian profile
\begin{equation}
\label{eq:app_phi_sigmaE}
    \phi_{\sigma_E}(\omega-\nu):=
    \exp\!\left(-\frac{(\omega-\nu)^2}{2\sigma_E^2}\right).
\end{equation}
Thus the equilibrium quasi-frequency distribution is obtained by Gaussian broadening of the discrete Davies weights \(m(\nu)\), followed by reweighting with the acceptance factor \(\gamma(\omega)\).

In the Davies limit \(\sigma_E\to0\), the Gaussian profile sharpens to a delta peak at each Bohr frequency. One then recovers the discrete equilibrium distribution over exact frequencies, whose symmetry around \(0\) is the content of Lemma~\ref{lem:Davies_Mu}. Proposition~\ref{thm:quasi_frequency_symmetry} shows that for finite Gaussian resolution this symmetry persists in a shifted form around the center
\begin{equation}
    C=-\frac{\beta\sigma_E^2}{2}.
\end{equation}
The following appendix provides more details.


\section{Proof of Proposition~\ref{thm:quasi_frequency_symmetry}}
\label{app:proof_quasi_frequency_symmetry}

We start with some useful properties.

\begin{lemma}
\label{lem:DB_davies_identity}
Define
\begin{equation}
    m(\nu):=
    \sum_{a\in\mathcal A}
    \Tr\!\left[\hat A_a(\nu)\rho_\beta \hat A_a^\dagger(\nu)\right].
\end{equation}
Then
\begin{equation}
    m(\nu)=e^{\beta\nu}m(-\nu).
\end{equation}
\end{lemma}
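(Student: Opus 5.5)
The proof will essentially repeat the computation of Lemma~\ref{lem:Davies_Mu}, but stop one step earlier, before the rate $\gamma$ enters. The plan is to move $\rho_\beta$ past $\hat A_a(\nu)$ using the eigen-operator relation \eqref{eq:Commutation_relation_davies_jumps}, and then relabel the couplings using the closure assumption \eqref{eq:adjoint_assum}.

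\emph{Step 1 (commutation).} From $e^{\beta H}\hat A_a(\nu)e^{-\beta H}=e^{\beta\nu}\hat A_a(\nu)$ we get $\hat A_a(\nu)\rho_\beta=e^{\beta\nu}\rho_\beta\hat A_a(\nu)$, since $\rho_\beta\propto e^{-\beta H}$. Substituting this into the definition of $m(\nu)$ gives
\begin{equation}
    m(\nu)=e^{\beta\nu}\sum_{a\in\mathcal A}\Tr\!\left[\rho_\beta\hat A_a(\nu)\hat A_a^\dagger(\nu)\right].
\end{equation}

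\emph{Step 2 (adjoint relation).} By \eqref{eq:Dagger_relation_davies_jumps}, applied to the coupling $A_a^\dagger$, we have $\hat A_a(\nu)=\bigl(\hat A_a^\dagger(-\nu)\bigr)^\dagger$. Hence
\begin{equation}
    \hat A_a(\nu)\hat A_a^\dagger(\nu)=\bigl(\hat A_a^\dagger(-\nu)\bigr)^\dagger\,\hat A_a^\dagger(-\nu).
\end{equation}
Here $\hat A_a^\dagger(-\nu)$ denotes the Davies component at frequency $-\nu$ of the coupling $A_a^\dagger$.

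\emph{Step 3 (relabeling).} The family $\{A_a\}$ is closed under adjoint, so $a\mapsto a'$ with $A_{a'}=A_a^\dagger$ is a bijection of $\mathcal A$. Relabeling the sum with it and using cyclicity of the trace yields
\begin{equation}
    \sum_{a'}\Tr\!\left[\hat A_{a'}(-\nu)\rho_\beta\hat A_{a'}^\dagger(-\nu)\right]=m(-\nu),
\end{equation}
which is the claim.

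The only delicate point is notational. One must keep $\hat A_a^\dagger(\nu)$ (the frequency component of $A_a^\dagger$) distinct from $\hat A_a(\nu)^\dagger$ (the adjoint of the component), and check that relabeling is legitimate when $\mathcal A$ contains repeated operators. Closure should be read as a multiset bijection, which is what the identical relabeling in Lemma~\ref{lem:Davies_Mu} implicitly assumes. No KMS condition on $\gamma$ is needed; the identity is a pure detailed-balance property of $\rho_\beta$.
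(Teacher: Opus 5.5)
Your argument is correct, but it is organized differently from the paper's. The paper proves this lemma in one line by combining Lemma~\ref{lem:Davies_Mu} with the KMS condition \eqref{eq:KMS_condition_first_davies}. Since $M(\nu)=\gamma(\nu)m(\nu)$, the symmetry $M(\nu)=M(-\nu)$ together with $\gamma(\nu)=e^{-\beta\nu}\gamma(-\nu)$ gives $\gamma(-\nu)\bigl(e^{-\beta\nu}m(\nu)-m(-\nu)\bigr)=0$, and one then cancels the rate. You instead run the commutation, adjoint and relabeling computation directly on $m$, which extracts the rate-free core of the proof of Lemma~\ref{lem:Davies_Mu}.

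The paper's route is shorter once that lemma is available. However, it silently needs $\gamma(-\nu)\neq 0$ in order to cancel. This is harmless, since $m$ does not depend on $\gamma$ and one may always pick a strictly positive KMS rate such as the Metropolis one. It also makes the identity look as if it depended on the KMS condition. Your version shows that it uses only $\hat A_a(\nu)\rho_\beta=e^{\beta\nu}\rho_\beta\hat A_a(\nu)$, the adjoint relation \eqref{eq:Dagger_relation_davies_jumps}, and closure of $\mathcal A$ under adjoints. It is also logically more primitive: Lemma~\ref{lem:Davies_Mu} follows from your identity by multiplying by $\gamma$ and applying KMS, rather than the other way round.

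One notational correction, an overload you inherit from the paper. In the definition of $m(\nu)$, on the left-hand side of your Step~2 display, and in your Step~3 display, $\hat A_a^\dagger(\nu)$ must be read as the adjoint $\hat A_a(\nu)^\dagger$. It cannot be the frequency component of $A_a^\dagger$, as your closing paragraph declares. Under that reading the operator $\hat A_a^\dagger(\nu)\hat A_a(\nu)$ shifts energy by $2\nu$, so its trace against the block-diagonal $\rho_\beta$, and hence $m(\nu)$, would vanish for every $\nu\neq 0$. The lemma would then be vacuous. If you use the adjoint reading everywhere in $m$ and the component reading only on the right-hand side of Step~2, every step goes through as you wrote it.
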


\begin{proof}
Follows immediately from Lemma  \ref{lem:Davies_Mu} and Eq. \eqref{eq:KMS_condition_first_davies}.

\end{proof}

\begin{lemma}
\label{lem:Gaussian_reflection}
For the unnormalized Gaussian
\begin{equation}
    \phi_\sigma(x):=\exp\!\left(-\frac{x^2}{2\sigma^2}\right),
\end{equation}
one has
\begin{equation}
    \phi_\sigma(x-a)
    =
    \phi_\sigma(x+a)\,
    \exp\!\left(\frac{2ax}{\sigma^2}\right).
\end{equation}
\end{lemma}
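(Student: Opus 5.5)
The identity is a pure algebraic fact about the unnormalized Gaussian, so the plan is to verify it by expanding the exponents directly. No earlier results are needed.

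By definition,
\begin{equation}
\phi_\sigma(x-a)=\exp\!\left(-\frac{x^2-2ax+a^2}{2\sigma^2}\right),
\qquad
\phi_\sigma(x+a)=\exp\!\left(-\frac{x^2+2ax+a^2}{2\sigma^2}\right).
\end{equation}
Since both sides are strictly positive, it suffices to compare logarithms.

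Subtracting the second exponent from the first, the $x^2$ and $a^2$ terms cancel. What remains is
\begin{equation}
\frac{2ax}{2\sigma^2}+\frac{2ax}{2\sigma^2}=\frac{2ax}{\sigma^2}.
\end{equation}
Equivalently, $\phi_\sigma(x-a)/\phi_\sigma(x+a)=\exp(2ax/\sigma^2)$. Multiplying through by $\phi_\sigma(x+a)$ gives the claim.

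There is no genuine obstacle; the only care needed is sign bookkeeping of the cross term $\mp 2ax$.

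In the later proof of Proposition~\ref{thm:quasi_frequency_symmetry}, I expect this lemma to be applied with $\sigma=\sigma_E$, $x=\omega-C$ (or $x=\nu$) and $a$ chosen as a reflection offset. The factor $e^{2ax/\sigma^2}$ would then combine with $e^{\beta\nu}$ from Lemma~\ref{lem:DB_davies_identity} and the rate identity \eqref{eq:rate_reflection_identity}, since $C=-\beta\sigma_E^2/2$ makes $2C/\sigma_E^2=-\beta$. That usage, however, is outside the present statement.
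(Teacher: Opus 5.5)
Your proof is correct and is essentially the paper's argument: expand both squares, note that the $x^2$ and $a^2$ terms cancel in the ratio $\phi_\sigma(x-a)/\phi_\sigma(x+a)$, and read off $\exp(2ax/\sigma^2)$. Your closing guess about how the lemma is used later is slightly off, though it does not affect the lemma itself: in the proof of Proposition~\ref{thm:quasi_frequency_symmetry} the paper applies it with $x=C$ and $a=u+\nu$, which gives the factor $\exp\!\left(2C(u+\nu)/\sigma_E^2\right)=e^{-\beta(u+\nu)}$.
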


\begin{proof}
Expanding the squares gives
\begin{equation}
\frac{\phi_\sigma(x-a)}{\phi_\sigma(x+a)}
=
\exp\!\left(
-\frac{(x-a)^2-(x+a)^2}{2\sigma^2}
\right)
=
\exp\!\left(\frac{2ax}{\sigma^2}\right).
\qedhere
\end{equation}

\end{proof}

\begin{proof}[Proof of Proposition~\ref{thm:quasi_frequency_symmetry}]
Since \(|\hat f(\omega-\nu)|^2\) is, up to an overall positive constant, equal to \(\phi_{\sigma_E}(\omega-\nu)\), it suffices to work with
\begin{equation}
\label{eq:app_h_phi_form}
    h(\omega)
    =
    \gamma(\omega)\sum_{\nu\in\mathcal B(H)}
    \phi_{\sigma_E}(\omega-\nu)\,m(\nu).
\end{equation}
Let
\begin{equation}
C:=-\frac{\beta\sigma_E^2}{2}.
\end{equation}
We show that $h(C-u)=h(C+u)$ for all $u\in\mathbb R$.

Starting from \eqref{eq:app_h_phi_form},
\begin{align}
    h(C-u)
    &=
    \gamma(C-u)\sum_{\nu\in\mathcal B(H)}
    \phi_{\sigma_E}(C-u-\nu)\,m(\nu)
    \nonumber\\
    &=
    \gamma(C-u)\sum_{\nu\in\mathcal B(H)}
    \phi_{\sigma_E}(C+u+\nu)\,
    \nonumber\\
    &\qquad\times
    \exp\!\left(\frac{2C(u+\nu)}{\sigma_E^2}\right)\,
    m(\nu).
\end{align}
where we used Lemma~\ref{lem:Gaussian_reflection} with $x=C$ and $a=u+\nu$. Using Lemma~\ref{lem:DB_davies_identity},
\begin{align}
\begin{split}
    h(C-u)
    &=
    \gamma(C-u)\sum_{\nu\in\mathcal B(H)}
    \phi_{\sigma_E}(C+u+\nu)\, 
    \\ &\qquad \times
    \exp\!\left(\frac{2C(u+\nu)}{\sigma_E^2}\right)\,
    e^{\beta\nu}m(-\nu).
\end{split}
\end{align}
Relabeling $\nu\mapsto -\nu$ gives
\begin{align}
\begin{split}
    h(C-u)
    &=
    \gamma(C-u)\sum_{\nu\in\mathcal B(H)}
    \phi_{\sigma_E}(C+u-\nu)\,
    \\ &\qquad \times
    \exp\!\left(\frac{2C(u-\nu)}{\sigma_E^2}\right)\,
    e^{-\beta\nu}m(\nu).
\end{split}
\end{align}
Now using $C=-\beta\sigma_E^2/2$, the exponential factors collapse:
\begin{equation}
\exp\!\left(\frac{2C(u-\nu)}{\sigma_E^2}\right)e^{-\beta\nu}
=
e^{-\beta u}.
\end{equation}
Hence
\begin{align}
    h(C-u)
    &=
    \gamma(C-u)e^{-\beta u}
    \sum_{\nu\in\mathcal B(H)}
    \phi_{\sigma_E}(C+u-\nu)\,m(\nu).
\end{align}
Applying the reflection identity \eqref{eq:rate_reflection_identity},
\begin{equation}
\gamma(C-u)e^{-\beta u}=\gamma(C+u),
\end{equation}
we conclude that
\begin{align*}
h(C-u)&=\gamma(C+u)\sum_{\nu\in\mathcal B(H)}
\phi_{\sigma_E}(C+u-\nu)\,m(\nu)\\
&=h(C+u).
\end{align*}
Therefore \(\pi(C-u)=\pi(C+u)\), and the statement \(\langle \omega\rangle_\pi=C\) follows immediately.

\end{proof}

\begin{remark}
The reflection identity \eqref{eq:rate_reflection_identity} is linear in $\gamma$. Hence, if
\begin{equation}
\gamma(\omega)=\sum_j c_j\,\gamma_j(\omega)
\end{equation}
and each component satisfies
\begin{equation}
\gamma_j(C-u)=e^{\beta u}\gamma_j(C+u),
\end{equation}
then the same identity holds for $\gamma$:
\begin{equation}
\gamma(C-u)=e^{\beta u}\gamma(C+u).
\end{equation}

For a single Gaussian component
\begin{equation}
\gamma_j(\omega)=\exp\!\left(-\frac{(\omega+\omega_{\gamma,j})^2}{2\sigma_{\gamma,j}^2}\right),
\end{equation}
the identity holds if and only if
\begin{equation}
\label{eq:app_single_gaussian_beta_relation}
    \beta=\frac{2\omega_{\gamma,j}}{\sigma_{\gamma,j}^2+\sigma_E^2},
\end{equation}
equivalently,
\begin{equation}
\omega_{\gamma,j}=\frac{\beta}{2}\bigl(\sigma_{\gamma,j}^2+\sigma_E^2\bigr).
\end{equation}
Thus a discrete or continuous mixture of Gaussians works provided all components satisfy \eqref{eq:app_single_gaussian_beta_relation} for the same global $\beta$ and the same filter width $\sigma_E$.

In particular, this covers the Gaussian-mixture family of Chen, Kastoryano, and Gily\'en \cite{Chen2023a}, who fix $\sigma_E$ and define
\begin{equation}
\gamma^{(g)}(\omega)
=
\int_{\beta\sigma_E^2/2}^{\infty}
g(x)\,
\exp\!\left(
-\frac{(\omega+x)^2}{4x/\beta-2\sigma_E^2}
\right)\,dx.
\end{equation}
Each integrand is a Gaussian with
\begin{equation}
\omega_\gamma(x)=x,
\qquad
\sigma_\gamma(x)^2=\frac{2x}{\beta}-\sigma_E^2,
\end{equation}
and therefore satisfies
\begin{equation}
\beta=\frac{2\omega_\gamma(x)}{\sigma_\gamma(x)^2+\sigma_E^2}.
\end{equation}
Hence every component has the same reflection center
\begin{equation}
C=-\frac{\beta\sigma_E^2}{2},
\end{equation}
and the full mixture obeys \eqref{eq:rate_reflection_identity} by linearity. Consequently, Proposition~\ref{thm:quasi_frequency_symmetry} applies to the entire linear-combination family of Corollary II.3 in \cite{Chen2023a}. In particular, it also covers the special choices of Proposition II.4, including the shifted Metropolis filter and the smooth Glauber-like filter.

For the Metropolis choice
\begin{equation}
\gamma(\omega)=\exp\!\left(-\beta \max\!\left(\omega+\frac{\beta\sigma_E^2}{2},0\right)\right),
\end{equation}
the identity follows by a direct case split on the sign of $u$, using $C=-\beta\sigma_E^2/2$.
\end{remark}

\begin{proposition}
\label{prop:universal_skew_implies_reflection}
Let
\begin{equation}
\hat f(\omega)=A_E\exp\!\left(-\frac{\omega^2}{4\sigma_E^2}\right)
\end{equation}
be a Gaussian filter, and let
\begin{equation}
\alpha_{\nu_1,\nu_2}
:=
\int\gamma(\omega)\,\hat f(\omega-\nu_1)\hat f(\omega-\nu_2)\,d\omega.
\end{equation}
Assume that $\gamma$ is a finite Gaussian mixture and that
\begin{equation}
\label{eq:universal_skew_symmetry}
\alpha_{\nu_1,\nu_2}
=
e^{-\beta(\nu_1+\nu_2)/2}\alpha_{-\nu_2,-\nu_1}
\qquad
\text{for all }\nu_1,\nu_2\in\mathbb R.
\end{equation}
Then, setting
\begin{equation}
C:=-\frac{\beta\sigma_E^2}{2},
\end{equation}
the rate $\gamma$ satisfies
\begin{equation}
\label{eq:reflection_ansatz_again}
\gamma(C-u)=e^{\beta u}\gamma(C+u)
\qquad
\text{for all }u\in\mathbb R.
\end{equation}
\end{proposition}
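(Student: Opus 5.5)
Write $\nu_1=\bar\nu+\delta$ and $\nu_2=\bar\nu-\delta$, and complete the square in the product of the two filters:
\begin{equation}
\hat f(\omega-\nu_1)\hat f(\omega-\nu_2)=A_E^2\exp\!\left(-\frac{\delta^2}{2\sigma_E^2}\right)\exp\!\left(-\frac{(\omega-\bar\nu)^2}{2\sigma_E^2}\right).
\end{equation}
Hence $\alpha_{\nu_1,\nu_2}=A_E^2e^{-\delta^2/(2\sigma_E^2)}\,(\gamma*\phi_{\sigma_E})(\bar\nu)$, with $\phi_{\sigma_E}$ as in Lemma~\ref{lem:Gaussian_reflection}.

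Now look at the pair $(-\nu_2,-\nu_1)$. Its midpoint is $-\bar\nu$ and its half-difference is again $\delta$, so the $\delta$-dependent prefactors are identical on both sides. Setting $g:=\gamma*\phi_{\sigma_E}$, assumption \eqref{eq:universal_skew_symmetry} is therefore equivalent to
\begin{equation}
g(\bar\nu)=e^{-\beta\bar\nu}g(-\bar\nu)\qquad\forall\,\bar\nu\in\mathbb R.
\end{equation}
This is a KMS condition for the smoothed rate.

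Next use the finite-mixture hypothesis, $\gamma=\sum_j c_j\exp(-(\omega+\omega_j)^2/(2s_j^2))$ with distinct pairs $(\omega_j,s_j)$. Each component convolves with $\phi_{\sigma_E}$ to a Gaussian of centre $-\omega_j$ and variance $s_j^2+\sigma_E^2$. Multiplying a Gaussian by $e^{-\beta\bar\nu}$ keeps it Gaussian with the same variance and shifts its centre by $-\beta(s_j^2+\sigma_E^2)$, up to a scalar factor. The KMS identity then reads
\begin{equation}
\sum_j \tilde c_j G_{-\omega_j,v_j}=\sum_j \tilde c_j' G_{\omega_j-\beta v_j,v_j},
\end{equation}
where $v_j=s_j^2+\sigma_E^2$ and $G_{m,v}$ denotes the Gaussian with centre $m$ and variance $v$.

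Finitely many Gaussians with distinct (centre, variance) pairs are linearly independent; the paper already uses this fact for translated Gaussians in the derivation of Theorem~\ref{thm:energy_vs_frequencies}. We can therefore match terms variance by variance. Within each variance class the map $-\omega_j\mapsto\omega_j-\beta v_j$ is a reflection about $-\beta v_j/2$, and it must permute the centres of that class together with their weights.

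The main obstacle is this matching step. It does not immediately force each $\omega_j=\beta v_j/2$: components could pair up symmetrically about $-\beta v_j/2$.

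I would handle it by noting that the target identity only needs $\gamma$ to satisfy the reflection about $C$. Write the KMS condition for $g$ as a reflection of $e^{\beta\bar\nu/2}g(\bar\nu)$ about $0$, and express the deconvolution as a reflection of $\gamma$. Within a variance class, the Gaussians $\gamma_j$ and their partners have the same $s_j$, and their centres are reflected about $-\beta v_j/2$. Applying Lemma~\ref{lem:Gaussian_reflection} to each pair then gives $\gamma_j(C-u)+\gamma_{j'}(C-u)=e^{\beta u}\bigl(\gamma_j(C+u)+\gamma_{j'}(C+u)\bigr)$. To verify this, reduce the exponents to quadratic polynomials in $u$ and compare coefficients, in the same way as in the Remark for the single-Gaussian case, where the fixed-point condition is \eqref{eq:app_single_gaussian_beta_relation}. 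Summing over pairs and fixed points yields \eqref{eq:reflection_ansatz_again}.
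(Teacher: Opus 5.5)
Your argument is correct. It shares the paper's first step: completing the square and reducing the skew symmetry to $g(\bar\nu)=e^{-\beta\bar\nu}g(-\bar\nu)$ for $g=\gamma*\phi_{\sigma_E}$. After that it takes a genuinely different route.

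\textbf{The paper's route.} The paper never decomposes $\gamma$. It tilts, $\widetilde\gamma(\omega)=e^{\beta\omega/2}\gamma(\omega)$, and completes the square once more to get $e^{\beta s/2}g(s)=e^{\beta^2\sigma_E^2/8}(\widetilde\gamma*\phi_{\sigma_E})(s+C)$. The KMS condition then says that $\widetilde\gamma*\phi_{\sigma_E}$ is even about $C$. Since the Gaussian's Fourier transform never vanishes, convolution with it is injective, so $\widetilde\gamma$ is itself even about $C$. That is literally the reflection identity. In this formulation your ``obstacle'' never arises, because reflected pairs are absorbed automatically. The mixture hypothesis is used essentially only to justify the Fourier step. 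Your half-sentence about reflecting $e^{\beta\bar\nu/2}g$ and deconvolving is the germ of this argument; carried through, it would replace the whole matching step.

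\textbf{What your route buys.} Component matching gives an explicit classification of all solutions: aligned singletons $\omega_j=\beta v_j/2$, and same-variance pairs with $\omega_j+\omega_k=\beta v$. The paper states this classification in its subsequent remark without proof.

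\textbf{Two points to make explicit.}
\begin{enumerate}
\item Linear independence of Gaussians with \emph{different} variances is not covered by the translated-Gaussian argument used for Theorem~\ref{thm:energy_vs_frequencies}. You need a separate argument, for example comparing decay rates as $|\bar\nu|\to\infty$, starting with the largest variance.
\item The weights are not simply permuted. Matching gives $\tilde c_j=\tilde c'_k$, i.e.\ $c_j=c_k\,e^{\beta(\omega_j-\omega_k)/2}$. This relation is exactly what makes the constant terms agree when you verify $\gamma_j(C-u)=e^{\beta u}\gamma_k(C+u)$ for each pair.
\end{enumerate}
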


\begin{proof}
Write
\begin{equation}
s:=\frac{\nu_1+\nu_2}{2}.
\end{equation}
A direct Gaussian multiplication gives
\begin{multline}
\hat f(\omega-\nu_1)\hat f(\omega-\nu_2) \\
{}= K_E\,
\exp\!\Bigl(
-\tfrac{(\nu_1-\nu_2)^2 + 4(\omega-s)^2}{8\sigma_E^2}
\Bigr).
\end{multline}
where $K_E>0$ is independent of $\nu_1,\nu_2,\omega$. Hence,
\begin{equation}
\alpha_{\nu_1,\nu_2}
=
K_E\,
\exp\!\left(-\frac{(\nu_1-\nu_2)^2}{8\sigma_E^2}\right)\,
G(s),
\end{equation}
with
\begin{equation}
G(s):=(\gamma * \phi_{\sigma_E})(s),
\qquad
\phi_{\sigma_E}(x):=\exp\!\left(-\frac{x^2}{2\sigma_E^2}\right).
\end{equation}
Therefore \eqref{eq:universal_skew_symmetry} is equivalent to
\begin{equation}
G(s)=e^{-\beta s}G(-s)
\qquad
\text{for all }s\in\mathbb R.
\end{equation}

Define
\begin{equation}
\widetilde\gamma(\omega):=e^{\beta\omega/2}\gamma(\omega).
\end{equation}
By completing the square in $\phi _{\sigma _E}$ 
\begin{align*}
e^{\beta s/2}G(s)&=e^{\beta s/2}\int\gamma(\omega)\phi_{\sigma_E}(s-\omega)\,d\omega,  \\
&=\int \widetilde\gamma(\omega) \exp\!\left( \frac{\beta}{2}(s-\omega)-\frac{(s-\omega)^2 }{2\sigma_E^2} \right) \,d\omega, \\
&=\int \widetilde\gamma(\omega) \exp\!\left( -\frac{((s+C)-\omega)^2}{2\sigma_E^2}
+\frac{\beta^2\sigma_E^2}{8}\right)\,d\omega,
\end{align*}
with $C=-\frac{\beta\sigma_E^2}{2}$. Therefore
\begin{equation}
e^{\beta s/2}G(s)
=
e^{\beta^2\sigma_E^2/8}\,
(\widetilde\gamma * \phi_{\sigma_E})(s+C).
\end{equation}
Thus, the relation $G(s)=e^{-\beta s}G(-s)$ is equivalent to
\begin{equation}
(\widetilde\gamma * \phi_{\sigma_E})(C+u)
=
(\widetilde\gamma * \phi_{\sigma_E})(C-u)
\qquad
\text{for all }u\in\mathbb R.
\end{equation}
Now, let
\begin{equation}
\psi(u):=\widetilde\gamma(C+u)-\widetilde\gamma(C-u).
\end{equation}
Then $\psi * \phi_{\sigma_E}=0$. Taking Fourier transforms,
\begin{equation}
\widehat{\psi}(\xi)\,\widehat{\phi}_{\sigma_E}(\xi)=0.
\end{equation}
Since
\begin{equation}
\widehat{\phi}_{\sigma_E}(\xi)\propto e^{-\sigma_E^2\xi^2/2}
\end{equation}
never vanishes, we must have $\widehat{\psi}=0$, hence $\psi=0$. Therefore
\begin{equation}
\widetilde\gamma(C+u)=\widetilde\gamma(C-u)
\qquad
\text{for all }u\in\mathbb R.
\end{equation}
Undoing the definition of $\widetilde\gamma$ gives
\begin{equation}
e^{\beta(C+u)/2}\gamma(C+u)=e^{\beta(C-u)/2}\gamma(C-u),
\end{equation}
which is exactly
\begin{equation}
\gamma(C-u)=e^{\beta u}\gamma(C+u).
\end{equation}

\end{proof}

\begin{remark}
For a Gaussian filter, the reflection identity \eqref{eq:rate_reflection_identity} is not limited to the aligned Gaussian family explicitly constructed in \cite{Chen2023a}. Proposition \ref{prop:universal_skew_implies_reflection} shows that any finite Gaussian-mixture rate satisfying the skew symmetry \eqref{eq:universal_skew_symmetry} must satisfy \eqref{eq:rate_reflection_identity}. The general solution in that case, is obtained by combining aligned singletons
\begin{equation}
\omega=\frac{\beta}{2}\bigl(\sigma^2+\sigma_E^2\bigr)
\end{equation}
and same-variance reflected pairs
\begin{equation}
c_-\,e^{-(\omega+\omega_-)^2/(2\sigma^2)}
+
c_+\,e^{-(\omega+\omega_+)^2/(2\sigma^2)},
\end{equation}
with
\begin{equation}
\omega_+ + \omega_- = \beta(\sigma^2+\sigma_E^2),
\qquad
c_+ = c_-\,e^{\beta(\omega_+ - \omega_-)/2}.
\end{equation}
The aligned Gaussian family of \cite{Chen2023a} is the special case where every component is a singleton.
\end{remark}


\section{Qubit efficient thermalization and monitoring}
\label{app:qubit_efficient_monitoring}

In this appendix we justify the claim made in Sec. \ref{sec:qgs} that the one-ancilla algorithm of Ref.~\cite{Ding2025} admits a natural monitored frequency record and, for the Gaussian prior of Appendix~I therein, reduces to the same distributional framework analyzed in Sec. \ref{sec:convergence}.

\paragraph{One-ancilla thermalization step.}
Fix a system Hamiltonian $H$ and a family of coupling operators $\mathcal A=\{A_i,-A_i\}_i$ satisfying $\{A_i^\dagger\}_i=\{A_i\}_i$. In one round of the qubit-efficient protocol, one samples $A_S\in\mathcal A$ and a bath frequency $\Omega$ from a prior distribution $g$, prepares a one-qubit bath in
\begin{equation}
\rho_E(\Omega)=\frac{e^{-\beta H_E(\Omega)}}{\Tr(e^{-\beta H_E(\Omega)})},
\qquad
H_E(\Omega)=-\frac{\Omega}{2}Z,
\end{equation}
and evolves under the time-dependent Hamiltonian
\begin{align}
H_\alpha(t)
&=
H+H_E(\Omega)
+
\alpha f(t)\bigl(A_S\otimes B_E+A_S^\dagger\otimes B_E^\dagger\bigr), \nonumber
\\
B_E&=|1\rangle\!\langle 0|.
\end{align}
Here $f$ is a real, even, normalized envelope, taken in \cite{Ding2025} to be Gaussian,
\begin{equation}
f(t)=\frac{1}{(2\pi)^{1/4}\sigma^{1/2}}e^{-t^2/(4\sigma^2)}.
\end{equation}
Averaging over the sampled classical data yields the channel
\begin{equation*}
\Phi_\alpha(\rho)
=
\mathbb E_{A_S,\Omega}\,
\Tr_E\!\Bigl[
U_{\alpha,A_S,\Omega}(T)
\bigl(\rho\otimes \rho_E(\Omega)\bigr)
U_{\alpha,A_S,\Omega}(T)^\dagger
\Bigr].
\end{equation*}
This is the qubit-efficient thermalization primitive of Ref.~\cite{Ding2025}.

\paragraph{Monitored signed record.}
Although the algorithm normally discards the ancilla outcomes, the same circuit naturally provides a weak-measurement record. Since $\rho_E(\Omega)$ is diagonal in the computational basis, one may conceptually resolve each round into the two initial bath branches $|0\rangle$ and $|1\rangle$. We retain only rounds in which the bath qubit flips and define the signed accepted frequency
\begin{equation}
\omega=
\begin{cases}
+\Omega,& |1\rangle\to |0\rangle,\\[0.5ex]
-\Omega,& |0\rangle\to |1\rangle.
\end{cases}
\end{equation}
Rounds with no ancilla flip are discarded. This is the direct analogue of the transition-conditioned quasi-frequency record used in Sec. \ref{sec:convergence}. The sign is essential: it distinguishes absorption from emission and therefore retains the detailed-balance information that is lost in the unsigned histogram.

\paragraph{Effective frequency-resolved Lindbladian.}
The weak coupling expansion in Sec. III of \cite{Ding2025} shows that, up to the surrounding free system evolution, one application of $\Phi_\alpha$ implements an $\alpha^2-$time step of an effective Lindbladian with a remainder of order $\alpha^4T^4\|f\|^4_{L^\infty}$. Writing
\begin{equation}
V_{A_S,f,T}(\omega)
:=
\int_{-T}^{T}
f(t)\,e^{itH}A_S e^{-itH}e^{-i\omega t}\,dt,
\end{equation}
the dissipative part of the effective generator contains the two jump branches
\begin{equation*}
\frac{1}{1+e^{\beta\Omega}}\,
\mathcal D_{V_{A_S^\dagger,f,T}(\Omega)}
\qquad\text{and}\qquad
\frac{1}{1+e^{-\beta\Omega}}\,
\mathcal D_{V_{A_S,f,T}(-\Omega)}.
\end{equation*}
Accordingly, once the ancilla-flip outcome is converted into the signed variable $\omega\in\mathbb R$, the monitored accepted-event statistic is governed by a frequency-resolved jump family of exactly the same type as in Sec. \ref{sec:qgs}.

\paragraph{Gaussian prior and Gaussian KMS surrogate.}
For thermal state preparation, Appendix~I of Ref.~\cite{Ding2025} introduces the Gaussian prior
\begin{equation*}
g_x(\omega)
=
\frac{1}{Z_x}
\exp\!\left(
-\frac{(\omega+x)^2}{2\left(2x/\beta-\frac{1}{4\sigma^2}\right)}
\right),
\qquad
x>\frac{\beta}{8\sigma^2},
\end{equation*}
and proves that the native generator is approximated by a Gaussian KMS surrogate $\hat{\mathcal{L}}_{{\rm{KMS}},x}$ of the form
\begin{equation}
\mathbb E_{A_S}\!\left(
-i\left[\frac{B_{A_S}}{Z_x},\rho\right]
+
\int_{\mathbb R}
g_x(\omega)\,
\mathcal D_{V_{A_S,f,\infty}(\omega)}(\rho)\,d\omega
\right).
\end{equation}
up to induced $1$ error of
\begin{equation*}
O\!\left(
\sigma e^{-T^2/(4\sigma^2)}
+\frac{\alpha^2T^4}{\sigma^2}
+\frac{1}{Z_x}\left[
\frac{\beta^2}{\sigma^2}
\frac{x}{x-\beta/(8\sigma^2)}
+\frac{\beta}{\sigma}
\right]
\right),
\end{equation*}
as shown in Appendix I. Thus, the surrogate captures the weak-coupling generator in the large-$T$, large-$\sigma$ regime.

\paragraph{Identification with the transition-conditioned framework of Sec. \ref{sec:convergence}.}
The surrogate above is exactly of the Gaussian type studied in Sec. \ref{sec:convergence}. Indeed, the Gaussian time envelope gives the frequency resolution
\begin{equation}
\sigma_E=\frac{1}{2\sigma},
\end{equation}
and Appendix I of Ref.~\cite{Ding2025} identifies the Gaussian prior with a Gaussian rate
\begin{equation}
\gamma(\omega)
=
\exp\!\left(
-\frac{(\omega+\omega_\gamma)^2}{2\sigma_\gamma^2}
\right)
\end{equation}
with parameters
\begin{equation}
\omega_\gamma=x,
\qquad
\sigma_\gamma^2=\frac{2x}{\beta}-\sigma_E^2.
\end{equation}
Hence
\begin{equation}
\beta=\frac{2\omega_\gamma}{\sigma_\gamma^2+\sigma_E^2},
\end{equation}
which is exactly the Gaussian detailed-balance relation assumed in Proposition~\ref{thm:quasi_frequency_symmetry}.

Therefore, for the surrogate generator, the signed accepted-frequency density
\begin{equation*}
\mu_{\rm flip}(\omega,\rho)
=
\frac{
\sum_{a\in\mathcal A}
g_x(\omega)\,
\Tr\!\bigl[
V_{A_a,f,\infty}^\dagger(\omega)V_{A_a,f,\infty}(\omega)\rho
\bigr]
}{
\int_{\mathbb R} d\omega'
\sum_{a\in\mathcal A}
g_x(\omega')\,
\Tr\!\bigl[
V_{A_a,f,\infty}^\dagger(\omega')V_{A_a,f,\infty}(\omega')\rho
\bigr]
}
\end{equation*}
is of the same form as $\mu_{\mathrm{trans}}$ in Sec. \ref{sec:convergence}. At equilibrium,
\begin{equation*}
\pi_{\rm flip}(\omega):=\mu_{\rm flip}(\omega,\rho_\beta),
\end{equation*}
so the equilibrium symmetry results apply directly. In particular, once Proposition~\ref{thm:quasi_frequency_symmetry} has been established, one obtains the shifted reflection symmetry
\begin{equation}
\begin{aligned}
\pi_{\rm flip}(C-u)&=\pi_{\rm flip}(C+u),
\\
\langle \omega\rangle_{\pi_{\rm flip}}&=C,
\\
C=-\frac{\beta\sigma_E^2}{2}&=-\frac{\beta}{8\sigma^2}.
\end{aligned}
\end{equation}
Thus the monitored ancilla-flip record of the Gaussian-prior qubit-efficient construction is governed by the same equilibrium symmetry principle as the quasi-frequency distribution studied in the main text. For the native one-ancilla channel, this should be understood as the corresponding asymptotic statement in the regime where the error terms displayed above are small.


\section{Technical aspects of the stopping rules}
\label{app:stopping_rule_details}

This appendix states the stochastic output analysis assumptions, defines the covariance estimators and confidence regions, and gives the exact inclusion conditions used by the coordinatewise and ellipsoidal stopping rules.

\paragraph{Accepted-event output process.}
Let $\omega_i$ be the quasi-frequency recorded at the $i$-th accepted transition and define
\begin{equation}
\label{eq:app_Y_definition}
    Y_i
    =
    \begin{pmatrix}
        \omega_i\\
        (\omega_i-C)^3
    \end{pmatrix}
    \in\mathbb R^2.
\end{equation}
The complete weak-measurement record also contains a jump label. A direct indicator or binned representation of that full record is typically high-dimensional and subject to deterministic linear constraints, which make the covariance of the limit distribution singular. Rarely occupied components can also make its empirical covariance poorly conditioned. For this reason, we use two-dimensional statistic in Eq.~\eqref{eq:app_Y_definition}, which instead targets the equilibrium constraints of Proposition \ref{thm:quasi_frequency_symmetry}, for which
\begin{equation}
\label{eq:app_theta_target}
    \theta^\star
    :=
    \mathbb E_\pi[Y_i]
    =
    \begin{pmatrix}
        C\\
        0
    \end{pmatrix},
\end{equation}
where the target $\theta^\star$ is fixed by the theoretical symmetry relation.

The output analysis is conditional a multivariate central limit
theorem for the stationary accepted-event process:
\begin{equation}
\label{eq:app_ergodic_clt}
\begin{aligned}
    \widehat\theta_n
    &:={}
    \frac{1}{n}\sum_{i=1}^{n}Y_i
    \xrightarrow{\mathrm{a.s.}}
    \theta^\star,\\
    \sqrt n\bigl(\widehat\theta_n-\theta^\star\bigr)
    &\xrightarrow{d}
    \mathcal N(0,\Sigma).
\end{aligned}
\end{equation}
which is supplied by \cite{attal2015central} (see \cite{bringuier2017central} for a continuous version). Here $\Sigma\succ0$ is the long-run covariance matrix of the monitored process. The accepted-event output need not itself be represented as a first-order Markov chain; the properties in Eq.~\eqref{eq:app_ergodic_clt}, together with consistency of the covariance estimator below, are what the stopping analysis requires. In the finite-register implementation, $\omega_i$ lies in a bounded interval and hence $Y_i$ has moments of all orders. Boundedness supplies the required moment conditions, but the ergodic and mixing conditions needed for the CLT remain separate assumptions.

It is important to distinguish the long-run covariance $\Sigma$ from the stationary marginal covariance
\begin{equation}
\label{eq:app_sigma_lambda_distinction}
\begin{aligned}
    \Lambda
    &:={}
    \operatorname{Var}_\pi(Y_i),\\
    \Sigma
    &=
    \sum_{\ell=-\infty}^{\infty}
    \operatorname{Cov}_\pi(Y_0,Y_\ell),
\end{aligned}
\end{equation}
whenever the covariance series is well defined. The matrix $\Sigma$ determines the Monte Carlo uncertainty of the sample mean, whereas $\Lambda$ provides a marginal scale and geometry for comparing the two monitored coordinates.

\paragraph{Selected windows and non-overlapping batch means.}
Suppose that $N$ accepted events have been observed. We first discard an absolute accepted-event burn-in $N_0$. The implementation may then retain only a late fraction of the post-burn-in record. For a batch-means discard fraction $f_{\mathrm{BM}}\in[0,1)$, define
\begin{equation}
\label{eq:app_selected_window}
\begin{aligned}
    r_N^{\mathrm{BM}}
    &=
    N_0+
    \left\lfloor
        f_{\mathrm{BM}}(N-N_0)
    \right\rfloor,\\
    L_N^{\mathrm{BM}}
    &=
    N-r_N^{\mathrm{BM}}.
\end{aligned}
\end{equation}
Thus $f_{\mathrm{BM}}=0$ uses the entire post-burn-in accepted record, while $f_{\mathrm{BM}}>0$ restricts the calculation to a later accepted-event window. The pseudocode in the main text shows the basic choice $f_{\mathrm{BM}}=0$; the numerical implementation allows the more general choice in Eq.~\eqref{eq:app_selected_window}.

Within the selected window, choose
\begin{equation}
\label{eq:app_batch_sizes}
\begin{aligned}
    a
    &=
    \left\lfloor\sqrt{L_N^{\mathrm{BM}}}\right\rfloor,
    &
    b
    &=
    \left\lfloor\frac{L_N^{\mathrm{BM}}}{a}\right\rfloor,\\
    M
    &=ab,
    &
    s
    &=N-M,
\end{aligned}
\end{equation}
and use the last $M$ observations of the selected window. The corresponding mean and non-overlapping batch means are
\begin{equation}
\label{eq:app_theta_and_batch_means}
\begin{aligned}
    \widehat\theta_M
    &=
    \frac{1}{M}
    \sum_{i=s+1}^{N}Y_i,\\
    \overline Y_k
    &=
    \frac{1}{b}
    \sum_{j=1}^{b}Y_{s+kb+j},
    \qquad k=0,\ldots,a-1.
\end{aligned}
\end{equation}
The non-overlapping batch-means estimator of the long-run covariance is
\begin{equation}
\label{eq:app_batch_means_covariance}
    \widehat\Sigma_{\mathrm{BM},M}
    =
    \frac{b}{a-1}
    \sum_{k=0}^{a-1}
    (\overline Y_k-\widehat\theta_M)
    (\overline Y_k-\widehat\theta_M)^\top.
\end{equation}
Under the usual mixing, moment, and batch-growth conditions, $\widehat\Sigma_{\mathrm{BM},M}\xrightarrow{\mathrm{a.s.}}\Sigma$ \cite{flegal2010batch,vats2019multivariate}. In particular, the standard asymptotic regime requires $a\to\infty$, $b\to\infty$, and $b/M\to0$. The square-root choice in Eq.~\eqref{eq:app_batch_sizes} satisfies these basic growth relations. If a moving late window is used, the analysis additionally assumes that the selected tail obeys the same
CLT and batch-means consistency conditions.

\paragraph{Checkpointwise confidence regions.}
Treating the batch means as approximately independent Gaussian vectors gives the Hotelling-type finite-batch critical value
\begin{equation}
\label{eq:app_hotelling_critical}
    c_{\alpha,a}
    =
    \frac{p(a-1)}{a-p}
    F_{1-\alpha;\,p,\,a-p},
    \qquad a>p,
\end{equation}
where $p=2$ in the present application.  The corresponding confidence ellipsoid is
\begin{equation}
\label{eq:app_confidence_ellipsoid}
    \mathcal C_\alpha(M)
    =
    \{\theta\in\mathbb R^p:  M(\widehat\theta_M-\theta)^\top
    \widehat\Sigma_{\mathrm{BM},M}^{-1}
    (\widehat\theta_M-\theta) \le c_{\alpha,a}\}.
\end{equation}
Equivalently, with
\begin{equation}
\label{eq:app_A_definition}
    A_M
    =
    \frac{c_{\alpha,a}}{M}
    \widehat\Sigma_{\mathrm{BM},M},
\end{equation}
one may write
\begin{equation}
\label{eq:app_confidence_ellipsoid_affine}
\begin{aligned}
    \mathcal C_\alpha(M)
    &=
    \widehat\theta_M+A_M^{1/2}\mathbb B_p,\\
    \mathbb B_p
    &=
    \{u\in\mathbb R^p:\|u\|_2\le1\}.
\end{aligned}
\end{equation}
At a deterministic checkpoint, Eqs.~\eqref{eq:app_confidence_ellipsoid}--
\eqref{eq:app_confidence_ellipsoid_affine} have the usual asymptotic
$1-\alpha$ confidence interpretation under the CLT and covariance-consistency assumptions.
The finite-batch $F$ correction is exact only under the idealized independent Gaussian
batch-means model and is otherwise an output-analysis approximation.

\paragraph{Relation to relative fixed-volume rules and multivariate ESS.}
The volume of the ellipsoid in Eq.~\eqref{eq:app_confidence_ellipsoid} is
\begin{equation}
\label{eq:app_confidence_volume}
    \operatorname{Vol}(\mathcal C_\alpha(M))
    =
    \frac{\pi^{p/2}}{\Gamma(p/2+1)}
    \left(\frac{c_{\alpha,a}}{M}\right)^{p/2}
    \left|\widehat\Sigma_{\mathrm{BM},M}\right|^{1/2}.
\end{equation}
For comparison with Ref.~\cite{vats2019multivariate}, let
\begin{equation}
\label{eq:app_lambda_window_estimator}
    \widehat\Lambda_M^{\mathrm{win}}
    =
    \frac{1}{M-1}
    \sum_{i=s+1}^{N}
    (Y_i-\widehat\theta_M)
    (Y_i-\widehat\theta_M)^\top
\end{equation}
be the ordinary sample covariance on the same window. Omitting the usual minimum-sample
safeguard, the relative fixed-volume rule monitors
\begin{equation}
\label{eq:app_relative_volume}
    v_M
    =
    \frac{
        \operatorname{Vol}(\mathcal C_\alpha(M))^{1/p}
    }{
        |\widehat\Lambda_M^{\mathrm{win}}|^{1/(2p)}
    },
\end{equation}
and stops when $v_M$ is below a prescribed relative-precision threshold. The associated
multivariate effective sample size is
\begin{equation}
\label{eq:app_mess}
    \widehat{\operatorname{mESS}}_M
    =
    M
    \left(
        \frac{|\widehat\Lambda_M^{\mathrm{win}}|}
             {|\widehat\Sigma_{\mathrm{BM},M}|}
    \right)^{1/p}.
\end{equation}
For $p=2$ these quantities satisfy
\begin{equation}
\label{eq:app_volume_mess_relation}
    v_M^2
    =
    \frac{\pi c_{\alpha,a}}
         {\widehat{\operatorname{mESS}}_M}.
\end{equation}
Thus relative volume and multivariate ESS characterize the size of the confidence region
relative to the marginal scale of $Y_i$. This can also be used as a precision diagnostic.


\paragraph{$\Lambda$-scaled ellipsoid inclusion.}
The ellipsoidal rule uses $\Lambda$ to place the two monitored coordinates on a common marginal scale. The long-run covariance estimate $\widehat\Sigma_{\mathrm{BM},M}$ determines the size and orientation of the confidence ellipsoid, while $\Lambda$ determines the metric of the target tolerance region.

For the fixed-geometry implementation, let $Y_1^{\mathrm{cal}},\ldots,Y_{L_{\mathrm{cal}}}^{\mathrm{cal}}$ be accepted-event observations from an independent stationary calibration run and define
\begin{equation}
\label{eq:app_lambda_calibration}
\begin{aligned}
    \overline Y_{\mathrm{cal}}
    &=
    \frac{1}{L_{\mathrm{cal}}}
    \sum_{i=1}^{L_{\mathrm{cal}}}Y_i^{\mathrm{cal}},\\
    \Lambda_{\mathrm{cal}}
    &=
    \frac{1}{L_{\mathrm{cal}}-1}
    \sum_{i=1}^{L_{\mathrm{cal}}}
    (Y_i^{\mathrm{cal}}-\overline Y_{\mathrm{cal}})
    (Y_i^{\mathrm{cal}}-\overline Y_{\mathrm{cal}})^\top.
\end{aligned}
\end{equation}
The matrix $\Lambda_{\mathrm{cal}}$ is then held fixed during the production trajectory. The nominal level $\alpha$ is conditional on this chosen matrix and does not account for finite calibration error.

Alternatively, the online implementation estimates the marginal covariance from a late window of the current accepted-event trajectory. For a separately chosen $f_\Lambda\in[0,1)$, let
\begin{equation}
\label{eq:app_lambda_online_window}
\begin{aligned}
    r_N^\Lambda
    &=
    N_0+
    \left\lfloor
        f_\Lambda(N-N_0)
    \right\rfloor,\\
    L_N^\Lambda
    &=N-r_N^\Lambda,\\
    \overline Y_\Lambda
    &=
    \frac{1}{L_N^\Lambda}
    \sum_{i=r_N^\Lambda+1}^{N}Y_i,
\end{aligned}
\end{equation}
and set
\begin{equation}
\label{eq:app_lambda_online_estimator}
    \widehat\Lambda_{\mathrm{online},N}
    =
    \frac{1}{L_N^\Lambda-1}
    \sum_{i=r_N^\Lambda+1}^{N}
    (Y_i-\overline Y_\Lambda)
    (Y_i-\overline Y_\Lambda)^\top.
\end{equation}
The fractions $f_{\mathrm{BM}}$ and $f_\Lambda$ are intentionally separate: the first selects the data used for $\widehat\theta_M$ and the long-run covariance, while the second selects the data used only for the marginal geometry. At finite $N$, using $\widehat\Lambda_{\mathrm{online},N}$ makes the target region random, and its estimation uncertainty is not included in the nominal confidence level. The asymptotic interpretation below requires $\widehat\Lambda_{\mathrm{online},N}\xrightarrow{\mathrm{a.s.}}\Lambda\succ0$.

For either choice of scale matrix, define
\begin{equation}
\label{eq:app_target_ellipsoid}
    \mathcal T_\Lambda(P_\Lambda)
    =
    \left\{
        \theta\in\mathbb R^2:
        (\theta-\theta^\star)^\top
        \Lambda^{-1}
        (\theta-\theta^\star)
        \le P_\Lambda^2
    \right\}.
\end{equation}
As stated in the main text, the exact rule is
\begin{equation}
\label{eq:app_ruleII_inclusion}
    \mathcal C_\alpha(M)
    \subseteq
    \mathcal T_\Lambda(P_\Lambda).
\end{equation}
Equivalently, define the target-relative outer radius
\begin{equation}
\label{eq:app_outer_radius_set_form}
    R_{\Lambda,M}
    =
    \sup_{\theta\in\mathcal C_\alpha(M)}
    \left[
        (\theta-\theta^\star)^\top
        \Lambda^{-1}
        (\theta-\theta^\star)
    \right]^{1/2}.
\end{equation}
Then Eq.~\eqref{eq:app_ruleII_inclusion} holds if and only if $R_{\Lambda,M}\le P_\Lambda$.

For numerical evaluation, write
\begin{equation}
\label{eq:app_ruleII_affine_quantities}
    d_M
    =
    \widehat\theta_M-\theta^\star,
    \qquad
    A_M
    =
    \frac{c_{\alpha,a}}{M}
    \widehat\Sigma_{\mathrm{BM},M}.
\end{equation}
Using the affine representation in Eq.~\eqref{eq:app_confidence_ellipsoid_affine}, Eq.~\eqref{eq:app_outer_radius_set_form} becomes
\begin{equation}
\label{eq:app_exact_ellipsoid_radius}
    R_{\Lambda,M}
    =
    \sup_{\|u\|_2\le1}
    \left\|d_M+A_M^{1/2}u\right\|_{\Lambda^{-1}},
\end{equation}
where $\|x\|_{\Lambda^{-1}}=(x^\top\Lambda^{-1}x)^{1/2}$. In two dimensions, a maximizer may be sought on the boundary by setting $u(\phi)=(\cos\phi,\sin\phi)^\top$:
\begin{equation}
\label{eq:app_boundary_search}
    R_{\Lambda,M}^2
    =
    \max_{\phi\in[0,2\pi)}
    \left\|d_M+A_M^{1/2}u(\phi)\right\|_{\Lambda^{-1}}^2.
\end{equation}
The implementation evaluates this one-dimensional maximization directly to numerical tolerance and uses the resulting outer radius for the stopping decision.


\paragraph{Coordinatewise Bonferroni box inclusion.}
Let $\alpha_{\mathrm{tot}}$ be the desired checkpointwise simultaneous error level. With $p=2$, the two-sided Bonferroni critical value is
\begin{equation}
\label{eq:app_bonferroni_critical}
    t^\star_{\mathrm{Bonf}}
    =
    t_{1-\alpha_{\mathrm{tot}}/(2p),\,a-1}.
\end{equation}
The coordinatewise half-widths are
\begin{equation}
\label{eq:app_bonferroni_halfwidths}
\begin{aligned}
    h_{1,M}
    &=
    t^\star_{\mathrm{Bonf}}
    \sqrt{
        \frac{(\widehat\Sigma_{\mathrm{BM},M})_{11}}{M}
    },\\
    h_{3,M}
    &=
    t^\star_{\mathrm{Bonf}}
    \sqrt{
        \frac{(\widehat\Sigma_{\mathrm{BM},M})_{22}}{M}
    }.
\end{aligned}
\end{equation}
At a fixed sufficiently large checkpoint, the resulting rectangular region has asymptotic simultaneous coverage at least $1-\alpha_{\mathrm{tot}}$ by the Bonferroni inequality.
Writing
\begin{equation}
\label{eq:app_coordinate_distances}
    D_{1,M}
    =
    |\widehat m_{1,M}-C|,
    \qquad
    D_{3,M}
    =
    |\widehat m_{3,M}|,
\end{equation}
the confidence and tolerance boxes are
\begin{equation}
\label{eq:app_box_definitions}
\begin{aligned}
    \widehat B_\alpha(M)
    ={}&
    [\widehat m_{1,M}-h_{1,M},
     \widehat m_{1,M}+h_{1,M}]
    \\
    &{}\times
    [\widehat m_{3,M}-h_{3,M},
     \widehat m_{3,M}+h_{3,M}],\\
    B(P_1,P_3)
    ={}&
    [C-P_1,C+P_1]
    \times[-P_3,P_3].
\end{aligned}
\end{equation}
Their inclusion relation is equivalent to two scalar inequalities:
\begin{equation}
\label{eq:app_box_inclusion_equivalence}
\begin{aligned}
    \widehat B_\alpha(M)
    \subseteq B(P_1,P_3)
    \quad\Longleftrightarrow\quad
    D_{1,M}+h_{1,M}&\le P_1,\\
    D_{3,M}+h_{3,M}&\le P_3.
\end{aligned}
\end{equation}
The quantities $D_{j,M}$ measure displacement from the theoretical target, whereas $h_{j,M}$ measure residual Monte Carlo uncertainty. Equation~\eqref{eq:app_box_inclusion_equivalence} controls their sum; it does not require the separate condition $D_{j,M}\le h_{j,M}$ used in
the earlier draft rule.



\paragraph{Choice of tolerance parameters.}
The confidence levels and tolerance radii play different roles. The parameters $\alpha_{\mathrm{tot}}$ and $\alpha$ control the checkpointwise Monte Carlo confidence regions. By contrast, $P_1$, $P_3$, and $P_\Lambda$ encode the largest deviations from the known equilibrium constraints that are considered scientifically acceptable. Thus the tolerances are chosen separately from the confidence levels.

The box tolerances have the units
\begin{equation}
    [P_1]=[\omega],
    \qquad
    [P_3]=[\omega]^3,
\end{equation}
whereas $P_\Lambda$ is dimensionless when $\Lambda$ is a covariance matrix of $Y_i$. They may be fixed from an a priori physical accuracy requirement or selected on independent calibration data. In the numerical benchmarks, a practical choice is obtained by comparing the rule distances with an independently computed physical error, such as the Gibbs energy error, and selecting thresholds that achieve the desired empirical operating point. Such a calibration is an empirical design procedure; without an additional analytical inequality, it does not by itself prove that passing a moment-based rule bounds the error of every observable. Calibration and performance assessment should therefore use independent or held-out trajectories whenever possible.

\paragraph{Asymptotic interpretation of the inclusion rules.}
Consider a sequence of admissible checkpoints for which $M\to\infty$ and $a\to\infty$. To separate the stopping-rule logic from the assumption that equilibrium has already been reached, suppose more generally that
\begin{equation}
\label{eq:app_generic_limits}
\begin{aligned}
    \widehat\theta_M
    &\xrightarrow{\mathrm{a.s.}}
    \theta^\dagger
    =
    \begin{pmatrix}
        m_1^\dagger\\
        m_3^\dagger
    \end{pmatrix},\\
    \widehat\Sigma_{\mathrm{BM},M}
    &\xrightarrow{\mathrm{a.s.}}
    \Sigma\succ0.
\end{aligned}
\end{equation}
For the online ellipsoidal rule, also assume
$\widehat\Lambda_{\mathrm{online},N} \xrightarrow{\mathrm{a.s.}}\Lambda\succ0$. For the fixed geometry variant, condition on the positive- definite matrix $\Lambda_{\mathrm{cal}}$. Because the critical values remain bounded and the confidence-region diameter is of order $M^{-1/2}$,
\begin{equation}
\label{eq:app_ruleI_limits}
\begin{aligned}
    D_{1,M}+h_{1,M}
    &\xrightarrow{\mathrm{a.s.}}
    |m_1^\dagger-C|,\\
    D_{3,M}+h_{3,M}
    &\xrightarrow{\mathrm{a.s.}}
    |m_3^\dagger|,
\end{aligned}
\end{equation}
and
\begin{equation}
\label{eq:app_ruleII_limit}
    R_{\Lambda,M}
    \xrightarrow{\mathrm{a.s.}}
    \|\theta^\dagger-\theta^\star\|_{\Lambda^{-1}}.
\end{equation}
Consequently, if the limiting monitored moments lie strictly inside the corresponding tolerance region, the relevant rule eventually passes at every sufficiently late checkpoint almost surely. If they lie strictly outside, it eventually fails at every sufficiently late checkpoint. The boundary case is not decided by this limit argument. In particular, under Eq.~\eqref{eq:app_ergodic_clt}, $\theta^\dagger=\theta^\star$, and any strictly positive tolerances imply eventual almost-sure passage, provided admissible checkpoints continue
indefinitely.

This eventual-passage statement is not a finite-time mixing theorem. At each deterministic large checkpoint, the regions have their usual asymptotic confidence interpretation. Repeatedly inspecting ordinary confidence regions, however, does not make them a time-uniform confidence sequence, and the nominal level need not be preserved exactly at the random stopping time. A literal sequential coverage guarantee would require a confidence sequence, an error-spending construction, or another explicitly time-uniform method. The proposed rules should therefore be interpreted as asymptotically justified MCMC output-analysis procedures for the monitored moments, rather than finite-sample certificates of global mixing or convergence of every observable.

\paragraph{Implementation safeguards.}
The implementation evaluates the rules only every $m$ accepted events and only after a minimum selected-window length, minimum batch size, and minimum number of batches are available. The ellipsoidal additionally requires $a>p$. The online estimate of $\Lambda$ is used only after its own minimum late-window length has been reached.

The theoretical formulas assume that both $\widehat\Sigma_{\mathrm{BM},M}$ and $\Lambda$ are positive definite on the monitored coordinates. Numerically, the implementation first symmetrizes these matrices and, when necessary, applies a small eigenvalue floor before inversion or square-rooting. Such regularization prevents numerical failure but changes the reported confidence and tolerance geometries; its magnitude and the resulting condition numbers should therefore be monitored. If degeneracy is structural rather than numerical, the principled remedy is to reduce the monitored statistic to its nondegenerate subspace rather than rely on an arbitrarily large ridge.


\section{Quantum trajectory simulation}
\label{app:trajectory_simulation}

To reproduce the stochastic trajectories induced by the partial measurements of the quantum Gibbs sampler, we consider two trajectory-sampling procedures: direct trajectory simulation and aggregate trajectory simulation. For the stochastic simulations the latter was the one used.

\paragraph{Direct trajectory simulation.} Let $N\in\mathbb{N}$ be the target number of Kraus applications, let $\ket{\psi_0}$ be a normalized initial state, and let $\{E_k\}_{k=0}^{K}$ be Kraus operators satisfying $\sum_{k=0}^{K} E_k^{\dagger} E_k = I$. A trajectory of length $N$ is a sequence $T=(k_1,\ldots,k_N)$, and its associated unnormalized state is
\begin{equation}
    \ket{\psi_T} = E_{k_N} \cdots E_{k_1} \ket{\psi_0}.
\end{equation}
Its probability is therefore
\begin{equation}
\label{eq:trajectory_probability_appendix}
    \Pr[T] = \norm{\ket{\psi_T}}^2.
\end{equation}
Although the trajectory weights are most naturally expressed in terms of unnormalized states, the algorithms below renormalize the intermediate state after each stochastic update in order to maintain numerical stability.

The direct scheme is the standard step-by-step simulation described in Algorithm~\ref{alg:trajectory_direct_appendix}. At step $n$, the next branch is sampled from the conditional distribution

\begin{equation}
    \Pr[k_n = k \mid k_1,\ldots,k_{n-1}] = \frac{\norm{E_k \ket{\psi_{n-1}}}^2}{\norm{\ket{\psi_{n-1}}}^2},
\end{equation}
and the state is updated accordingly.

\begin{algorithm}[H]
\caption{Direct trajectory simulation}
\label{alg:trajectory_direct_appendix}
\small
\begin{algorithmic}[1]
\Require Normalized initial state $\ket{\psi}$, target length $N$, Kraus operators $\{E_k\}_{k=0}^{K}$
\Ensure Normalized final state $\ket{\psi}$ and trajectory $T=(k_1,\ldots,k_N)$
\State $T \gets ()$
\For{$n=1,\ldots,N$}
    \For{$k=0,\ldots,K$}
        \State $p_k \gets \norm{E_k\ket{\psi}}^2$
    \EndFor
    \State Sample $k_\star \in \{0,\ldots,K\}$ according to $\{p_k\}_{k=0}^{K}$
    \State $\ket{\psi} \gets E_{k_\star}\ket{\psi} / \norm{E_{k_\star}\ket{\psi}}$
    \State Append $k_\star$ to $T$
\EndFor
\State \Return $(\ket{\psi}, T)$
\end{algorithmic}
\end{algorithm}

\paragraph{Aggregate trajectory simulation.} The aggregate scheme is advantageous when one branch, taken to be $k=0$, dominates the evolution. More precisely, assume that there exists $\delta\in(0,1)$ such that
\begin{equation}
    \norm{E_0\ket{\phi}}^2 \ge 1-\delta
\end{equation}
for every normalized state $\ket{\phi}$. Then long runs of the dominant branch occur with high probability, since
\begin{equation}
    \norm{(E_0)^n \ket{\phi}}^2 \ge (1-\delta)^n.
\end{equation}
The aggregate algorithm exploits this structure by sampling an entire block of consecutive dominant events in a single subroutine call, followed by one non-dominant event whenever such a transition is available; see Algorithms~\ref{alg:trajectory_aggregate_helper_appendix} and \ref{alg:apply_dominant_branch_appendix}. When powers of $E_0$ can be evaluated efficiently, for example after a spectral decomposition of $E_0$, this can substantially reduce the cost of simulating typical trajectories while preserving the exact trajectory distribution.

\begin{algorithm}[H]
\caption{Aggregate trajectory simulation}
\label{alg:trajectory_aggregate_helper_appendix}
\small
\begin{algorithmic}[1]
\Require Normalized initial state $\ket{\psi}$, target length $N$, Kraus operators $\{E_k\}_{k=0}^{K}$, dominance parameter $\delta\in(0,1)$
\Ensure Normalized final state $\ket{\psi}$ and trajectory $T=(k_1,\ldots,k_N)$
\State $T \gets ()$
\State $n \gets 0$
\State $b \gets \textsc{true}$
\While{$n < N$}
    \If{$b$}
        \State $N_{\mathrm{rem}} \gets N-n$
        \State $(\ket{\psi}, n_d) \gets \Call{ApplyDominant}{\ket{\psi}, E_0, \delta, N_{\mathrm{rem}}}$
        \State Append $n_d$ copies of $0$ to $T$
        \State $n \gets n + n_d$
        \State $b \gets \textsc{false}$
    \Else
        \For{$k=1,\ldots,K$}
            \State $q_k \gets \norm{E_k\ket{\psi}}^2$
        \EndFor
        \State $q \gets \sum_{k=1}^{K} q_k$
        \If{$q = 0$}
            \State $\ket{\psi} \gets E_0\ket{\psi} / \norm{E_0\ket{\psi}}$
            \State Append $0$ to $T$
            \State $n \gets n + 1$
            \State $b \gets \textsc{true}$
        \Else
            \For{$k=1,\ldots,K$}
                \State $p_k \gets q_k / q$
            \EndFor
            \State Sample $k_\star \in \{1,\ldots,K\}$ according to $\{p_k\}_{k=1}^{K}$
            \State $\ket{\psi} \gets E_{k_\star}\ket{\psi} / \norm{E_{k_\star}\ket{\psi}}$
            \State Append $k_\star$ to $T$
            \State $n \gets n + 1$
            \State $b \gets \textsc{true}$
        \EndIf
    \EndIf
\EndWhile
\State \Return $(\ket{\psi}, T)$
\end{algorithmic}
\end{algorithm}

\begin{algorithm}[H]
\caption{Helper routine for the dominant branch}
\label{alg:apply_dominant_branch_appendix}
\small
\begin{algorithmic}[1]
\Require Normalized state $\ket{\psi}$, dominant Kraus operator $E_0$, dominance parameter $\delta\in(0,1)$, remaining length $N_{\mathrm{rem}}$
\Ensure Normalized updated state $\ket{\psi}$ and integer $n_d\in\{0,\ldots,N_{\mathrm{rem}}\}$
\State Sample $r \sim \mathrm{Uniform}(0,1)$
\State $n_0 \gets \left\lceil \frac{\log r}{\log(1-\delta)} - 1 \right\rceil$
\State $n_0 \gets \max\{0, n_0\}$
\If{$n_0 \ge N_{\mathrm{rem}}$}
    \State $\ket{\phi} \gets E_0^{N_{\mathrm{rem}}}\ket{\psi}$
    \State $\ket{\psi} \gets \ket{\phi} / \norm{\ket{\phi}}$
    \State \Return $(\ket{\psi}, N_{\mathrm{rem}})$
\EndIf
\State $\ket{\phi} \gets E_0^{n_0}\ket{\psi}$
\State $n_d \gets n_0$
\While{$n_d < N_{\mathrm{rem}}$ \textbf{and} $\norm{E_0\ket{\phi}}^2 > r$}
    \State $\ket{\phi} \gets E_0\ket{\phi}$
    \State $n_d \gets n_d + 1$
\EndWhile
\State $\ket{\psi} \gets \ket{\phi} / \norm{\ket{\phi}}$
\State \Return $(\ket{\psi}, n_d)$
\end{algorithmic}
\end{algorithm}

\paragraph{Correctness of the aggregate scheme.}
Let $S=(k_1,\ldots,k_m)$ be a trajectory prefix and define
\begin{equation}
    \ket{\psi_S} = E_{k_m} \cdots E_{k_1} \ket{\psi_0},
\end{equation}
with the convention $\ket{\psi_{\varnothing}} = \ket{\psi_0}$. Let $p(S)$ denote the probability that the sampled trajectory begins with $S$, and set $p(\varnothing)=1$. We claim that
\begin{equation}
    p(S) = \norm{\ket{\psi_S}}^2
\end{equation}
for every prefix $S$, which in particular yields Eq.~\eqref{eq:trajectory_probability_appendix}. The proof proceeds by induction on the prefix length. For the empty prefix $S=\varnothing$, we have $p(\varnothing)=1$ and $\norm{\ket{\psi_\varnothing}}^2=\norm{\ket{\psi_0}}^2=1$, since $\ket{\psi_0}$ is normalized.

Assume that the claim holds for a given prefix $S$, and let $\ket{\phi}$ be an arbitrary unnormalized state. Define
\begin{equation}
    P_n(\phi) := \frac{\norm{E_0^n \ket{\phi}}^2}{\norm{\ket{\phi}}^2}.
\end{equation}
The helper routine in Algorithm~\ref{alg:apply_dominant_branch_appendix} draws $r\sim \mathrm{Uniform}(0,1)$ and returns the smallest integer $n_d$ such that $P_{n_d+1}(\phi) \le r < P_{n_d}(\phi)$, truncated at the remaining horizon if necessary. Consequently,
\begin{equation}
    \Pr[\text{at least $n$ additional dominant steps} \mid \phi] = P_n(\phi)
\end{equation}
for every admissible $n$.

Now let $S$ be any prefix. The probability that the trajectory extends from $S$ by one additional dominant event is
\begin{equation}
    p(S,0) = p(S) \frac{\norm{E_0 \ket{\psi_S}}^2}{\norm{\ket{\psi_S}}^2} = \norm{E_0 \ket{\psi_S}}^2.
\end{equation}
This identity remains valid even when $S$ already terminates in dominant events, since the helper routine depends only on the current state $\ket{\psi_S}$ and not on how that state was reached.

For a non-dominant branch $k\in\{1,\ldots,K\}$, two cases may occur. If
\begin{equation}
    \sum_{j=1}^{K} \norm{E_j\ket{\psi_S}}^2 = 0,
\end{equation}
then no non-dominant continuation is available from the current state. Hence
\begin{equation}
    p(S,k)=0=\norm{E_k\ket{\psi_S}}^2
\end{equation}
for every $k\in\{1,\ldots,K\}$. Otherwise, the aggregate algorithm first
conditions on the event that the dominant block terminates at the current
prefix and then samples among the branches $1,\ldots,K$ with probabilities
proportional to $\norm{E_k\ket{\psi_S}}^2$. Therefore,
\begin{equation}
    p(S,k) =
    \frac{\norm{E_k \ket{\psi_S}}^2}{\sum_{j=1}^{K} \norm{E_j \ket{\psi_S}}^2}
    \bigl(p(S)-p(S,0)\bigr).
\end{equation}
Using $\sum_{j=0}^{K} E_j^{\dagger}E_j = I$, we obtain, whenever
$\sum_{j=1}^{K}\norm{E_j\ket{\psi_S}}^2>0$,
\begin{equation}
    \sum_{j=1}^{K} \norm{E_j \ket{\psi_S}}^2
    = \norm{\ket{\psi_S}}^2 - \norm{E_0 \ket{\psi_S}}^2
    = p(S)-p(S,0),
\end{equation}
so that
\begin{equation}
    p(S,k)=\norm{E_k\ket{\psi_S}}^2
\end{equation}
for every $k\in\{1,\ldots,K\}$. Together with the identity
\begin{equation}
    p(S,0)=\norm{E_0\ket{\psi_S}}^2,
\end{equation}
this proves that
\begin{equation}
    p(S,k)=\norm{E_k\ket{\psi_S}}^2
\end{equation}
for every $k\in\{0,\ldots,K\}$.


\section{Proof of Theorem~\ref{thm:energy_vs_frequencies}}
\label{sec:proof_energy_vs_frequencies}

We prove that the quasi-frequency operator density spans the same operator subspace as the energy projectors, that is,
\[
\vspan\{Q_{\omega}:\omega\in\mathbb R\}
=
\vspan\{\Pi_{\epsilon}:\epsilon\in\operatorname{spec}(H)\}.
\]
The argument proceeds by writing \(Q_{\omega}\) as the image of the energy projectors under a kernel \(G\), and then showing that \(G\) has full column rank.

Using the decomposition
\[
\hat A_a(\omega)=\sum_{\nu\in\mathcal B(H)} \hat A_a(\nu)\,\hat f(\omega-\nu),
\]
we can expand the quasi-frequency operator density as
\begin{equation}
\label{eq:appendix_Qomega_expansion}
\begin{split}
Q_{\omega}
&=
\sum_{a\in\mathcal A}
\sum_{\nu_1,\nu_2\in\mathcal B(H)}
\gamma(\omega)\,
\overline{\hat f(\omega-\nu_1)}\,\hat f(\omega-\nu_2)
\\
&\qquad\times
\sum_{\epsilon\in\operatorname{spec}(H)}
\Pi_{\epsilon-\nu_2}A_a^\dagger \Pi_{\epsilon}A_a\Pi_{\epsilon-\nu_1}.
\end{split}
\end{equation}
with the convention \(\Pi_{\epsilon+\nu}=0\) whenever \(\epsilon+\nu\notin\operatorname{spec}(H)\).

Introduce the completely positive map
\begin{equation}
    \mathcal P^*(X):=\sum_{a\in\mathcal A} A_a^\dagger X A_a,
\end{equation}
and the \(\omega\)-dependent kernel
\begin{equation}
    \Gamma_{\nu_1,\nu_2}(\omega)
    :=
    \gamma(\omega)\,\overline{\hat f(\omega-\nu_1)}\,\hat f(\omega-\nu_2).
\end{equation}
Then \eqref{eq:appendix_Qomega_expansion} becomes
\begin{equation}
\label{eq:appendix_Qomega_compact}
    Q_{\omega}
    =
    \sum_{\nu_1,\nu_2\in\mathcal B(H)}
    \Gamma_{\nu_1,\nu_2}(\omega)
    \sum_{\epsilon\in\operatorname{spec}(H)}
    \Pi_{\epsilon-\nu_2}\,\mathcal P^*(\Pi_{\epsilon})\,\Pi_{\epsilon-\nu_1}.
\end{equation}

Under the uniform prior assumption \eqref{eq:uniform_prior}, we have
\[
\mathcal P^*(X)=\frac{\Tr(X)}{d}\,\bbI.
\]
Substituting this into \eqref{eq:appendix_Qomega_compact}, only the diagonal terms \(\nu_1=\nu_2\) survive, and we obtain
\begin{align}
\label{eq:appendix_Qomega_G}
    Q_{\omega}
    &=
    \sum_{\nu\in\mathcal B(H)} \Gamma_{\nu,\nu}(\omega)
    \sum_{\epsilon\in\operatorname{spec}(H)}
    \frac{\Tr[\Pi_{\epsilon}]}{d}\,\Pi_{\epsilon-\nu} \nonumber \\
    &= 
    \sum_{\epsilon\in\operatorname{spec}(H)} G_{\omega\epsilon}\,\Pi_{\epsilon},
\end{align}
where
\begin{equation}
\label{eq:appendix_G_kernel}
    G_{\omega\epsilon}
    :=
    \sum_{\nu\in\mathcal B(H)}
    \Gamma_{\nu,\nu}(\omega)\,
    \frac{\Tr[\Pi_{\epsilon+\nu}]}{d}.
\end{equation}
This immediately implies
\begin{equation}
\label{eq:appendix_first_inclusion}
    \vspan\{Q_{\omega}:\omega\in\mathbb R\}
    \subseteq
    \vspan\{\Pi_{\epsilon}:\epsilon\in\operatorname{spec}(H)\}.
\end{equation}

To prove the converse inclusion, it suffices to show that the columns \(\omega\mapsto G_{\omega\epsilon}\) are linearly independent, or equivalently, that \(G\) has full column rank. We factor \(G\) as
\begin{equation}
\label{eq:appendix_G_factorization}
    G = S R,
\end{equation}
where
\begin{align}
\label{eq:appendix_RS_def}
    R_{\nu\epsilon}&:=\frac{\Tr[\Pi_{\epsilon+\nu}]}{d}, \\
    S_{\omega\nu}&:=\Gamma_{\nu,\nu}(\omega)
    =\gamma(\omega)\,|\hat f(\omega-\nu)|^2.
\end{align}
The map \(R\) acts from the finite-dimensional energy space to the discrete Bohr-frequency space, while \(S\) acts from the Bohr-frequency space to functions of \(\omega\). The matrix $R$ does not depend on $\gamma$ or $f$, its properties depend only on the energy degeneracies.

We first show that \(R\) has full column rank. Let the distinct energies be ordered as
\[
\epsilon_1<\epsilon_2<\cdots<\epsilon_n,
\qquad
n:=|\operatorname{spec}(H)|.
\]
Consider the \(n\times n\) submatrix of \(R\) obtained by selecting the rows corresponding to the Bohr frequencies
\[
\nu_\ell:=\epsilon_1-\epsilon_\ell,
\qquad
\ell=1,\dots,n.
\]
Its entries are
\[
R_{\nu_\ell,\epsilon_k}
=
\frac{1}{d}\Tr[\Pi_{\epsilon_k+\epsilon_1-\epsilon_\ell}].
\]
If \(\ell>k\), then
\[
\epsilon_k+\epsilon_1-\epsilon_\ell
=
\epsilon_1+(\epsilon_k-\epsilon_\ell)
<\epsilon_1,
\]
so this value is not in the spectrum and the corresponding entry vanishes. Hence the chosen sub-matrix is upper triangular. Its diagonal entries are all equal to
\[
R_{\nu_\ell,\epsilon_\ell}
=
\frac{1}{d}\Tr[\Pi_{\epsilon_1}]>0.
\]
Therefore this sub-matrix is invertible, and \(R\) has full column rank \(n\).

We next show that \(S\) has full column rank under the Gaussian assumptions. Writing
\begin{align}
    \gamma(\omega)=\exp\!\left(-\frac{(\omega+\omega_\gamma)^2}{2\sigma_\gamma^2}\right), \\
    \hat f(\omega)=\frac{1}{\sqrt{\sigma_E\sqrt{2\pi}}}
    \exp\!\left(-\frac{\omega^2}{4\sigma_E^2}\right),
\end{align}
we obtain
\begin{equation}
\label{eq:appendix_S_gaussian}
    S_{\omega\nu}
    =
    \frac{\gamma(\omega)}{\sigma_E\sqrt{2\pi}}
    \exp\!\left(-\frac{(\omega-\nu)^2}{2\sigma_E^2}\right).
\end{equation}
Since \(\gamma(\omega)>0\) for all \(\omega\), multiplication by \(\gamma(\omega)\) does not affect linear independence of the columns. Suppose that
\[
\sum_{\nu\in\mathcal B(H)} c_{\nu}\,
\exp\!\left(-\frac{(\omega-\nu)^2}{2\sigma_E^2}\right)=0
\qquad
\text{for all }\omega\in\mathbb R.
\]
Taking Fourier transforms gives
\[
e^{-\sigma_E^2 t^2/2}
\sum_{\nu\in\mathcal B(H)} c_{\nu} e^{-it\nu}=0
\qquad
\text{for all }t\in\mathbb R.
\]
Since the Gaussian factor never vanishes, we must have
\[
\sum_{\nu\in\mathcal B(H)} c_{\nu} e^{-it\nu}=0
\qquad
\text{for all }t\in\mathbb R.
\]
The exponentials \(e^{-it\nu}\) with distinct \(\nu\) are linearly independent, hence \(c_{\nu}=0\) for all \(\nu\). Therefore the columns of \(S\) are linearly independent, so \(S\) has full column rank.

Since both \(R\) and \(S\) have full column rank, the kernel \(G=SR\) also has full column rank and therefore admits a left pseudo-inverse \(G^+\), with $G^+G=\mathbb{I}$. Applying \(G^+\) to \eqref{eq:appendix_Qomega_G} yields
\[
\Pi_{\epsilon}\in \vspan\{Q_{\omega}:\omega\in\mathbb R\}
\qquad
\text{for every }\epsilon\in\operatorname{spec}(H),
\]
and hence
\begin{equation}
\label{eq:appendix_second_inclusion}
    \vspan\{\Pi_{\epsilon}:\epsilon\in\operatorname{spec}(H)\}
    \subseteq
    \vspan\{Q_{\omega}:\omega\in\mathbb R\}.
\end{equation}
Combining \eqref{eq:appendix_first_inclusion} and \eqref{eq:appendix_second_inclusion} proves the theorem.

\begin{remark}
The uniform prior assumption \eqref{eq:uniform_prior} is stronger than necessary. The proof only uses it to ensure that \(\mathcal P^*(\Pi_{\epsilon})\) stays inside the span of the energy projectors and that the resulting discrete map from energies to Bohr frequencies has full column rank. More generally, the same argument goes through whenever
\[
\mathcal P^*(\Pi_{\epsilon})
=
\sum_{\epsilon'\in\operatorname{spec}(H)}
r_{\epsilon',\epsilon}\,\Pi_{\epsilon'}
\]
for every \(\epsilon\), and the induced discrete matrix
\[
R_{\nu\epsilon}:=r_{\epsilon+\nu,\epsilon}
\]
has full column rank. We state the theorem under the \(1\)-design assumption because it provides a particularly simple, Hamiltonian-agnostic sufficient condition.
\end{remark}

\section{Kraus representation of the QGS channel}
\label{app:kraus_qgs}

We derive an explicit Kraus decomposition for a single step of the quantum Gibbs sampler (QGS) implemented via the weak-measurement circuit shown in Figure~\ref{fig:lindbladian_simulation}. We record the measurement outcomes of the ancillary registers $\rE$, $\rA$, $\rqq$, and $\rqq'$. In the case studies considered in the main text, the prior jumps are proportional to unitaries,
\begin{equation}
    A_a = \frac{1}{\sqrt{|A|}}\, U_a.
\end{equation}
Accordingly, the auxiliary register $\rB$ is unnecessary and the block encoding may be chosen as
\begin{equation}
    V = \sum_{a\in A} \bigl(\ket{a}\!\bra{0} W\bigr)_{\rA} \otimes (U_a)_{\rS},
\end{equation}
where $W$ satisfies
\begin{equation}
    W\ket{0} = \frac{1}{\sqrt{|A|}} \sum_{a\in A} \ket{a}.
\end{equation}

\paragraph{Measurement convention.}
To define a Kraus representation, we fix the following measurement order. We first measure $\rqq'$. An outcome $1$ means a transition event took place. In that case $\rqq$ is necessarily found in $\ket{0}$, and the registers $\rE$ and $\rA$ return the labels $\omega$ and $a$, respectively. If the outcome of $\rqq'$ is $0$, we then perform a joint measurement on the registers $\rE, \rA$, and $\rqq$. The outcome $\ket{000}_{\rE\rA\rqq}$ corresponds to a decay event, whereas every orthogonal outcome is grouped into an error event. For the latter, it is convenient to undo the preparation unitaries $F$ on $\rE$ and $W$ on $\rA$ and record the resulting triple $(t,a,x)$. This merely specifies a particular basis for the ancillary trace and therefore leaves the averaged error channel unchanged.

Let
\begin{equation}
    F\ket{0} = \sum_{t\in S_{t_0}} f(t)\ket{t},
    \qquad
    \qft \ket{\omega} = \frac{1}{\sqrt{N}} \sum_{t\in S_{t_0}} e^{-i \omega t} \ket{t},
\end{equation}
and denote the single-qubit rotation
\begin{equation}
    Y_\theta =
    \begin{pmatrix}
        \sqrt{1-\theta} & -\sqrt{\theta} \\
        \sqrt{\theta}   & \sqrt{1-\theta}
    \end{pmatrix},
    \qquad
    Y(\omega) := Y_{1-\gamma(\omega)}.
\end{equation}
We also set
\begin{equation}
    U_a(t) = e^{i t H} U_a e^{-i t H},
    \qquad
    \hat{U}_a(\omega) = \sum_{t\in S_{t_0}} \frac{e^{-i \omega t}}{\sqrt{N}} f(t)\, U_a(t).
\end{equation}
With this notation, the unitary $U$ of Fig \ref{fig:block_encoding_u} is
\begin{equation}
\begin{split}
    U = \sum_{a\in A} \sum_{\omega \in S_{\omega_0}} \sum_{t\in S_{t_0}}
    &\frac{e^{-i \omega t}}{\sqrt{N}}
    \bigl(\ket{\omega}\!\bra{t} F\bigr)_{\rE}
    \otimes \bigl(\ket{a}\!\bra{0} W\bigr)_{\rA}\\
    &\otimes (U_a(t))_{\rS}
    \otimes (Y(\omega))_{\rqq}.
\end{split}
\end{equation}

For later use, it is convenient to define the projector orthogonal to the all-zero auxiliary subspace
\begin{equation}
    P := \bbI - \ket{000}\!\bra{000}_{\rE\rA\rqq} \otimes \bbI_{\rS},
\end{equation}
and the operators
\begin{equation}
    J := U^{\dagger}
    \bigl(\ket{0}\!\bra{0}_{\rqq} \otimes \bbI_{\rE\rA\rS}\bigr)
    U
    \bigl(\ket{000}_{\rE\rA\rqq} \otimes \bbI_{\rS}\bigr),
\end{equation}
and
\begin{equation}
    \Delta := \sum_{a\in A} \sum_{\omega \in S_{\omega_0}}
    \frac{\gamma(\omega)}{|A|} \, \hat{U}_a(\omega)^{\dagger} \hat{U}_a(\omega).
\end{equation}

With the above measurement convention, one step of the QGS channel admits the following Kraus operators.

\paragraph{Decay term.}
The decay branch corresponds to the outcome $\ket{0000}_{\rE\rA\rqq\rqq'}$:
\begin{equation}
    D = \bra{000}_{\rE\rA\rqq} J
    = \bbI - \bigl(1 - \sqrt{1-\delta}\bigr)\Delta.
\end{equation}

\paragraph{Transition terms.}
The transition branches correspond to the outcomes $\ket{\omega\, a\, 0 1}_{\rE\rA\rqq\rqq'}$:
\begin{equation}
    T_{a,\omega}
    =
    \sqrt{\frac{\delta \gamma(\omega)}{|A|}}\, \hat{U}_a(\omega).
\end{equation}

\paragraph{Error terms.}
The remaining contribution defines the error channel
\begin{equation}
    \rho \mapsto \Tr_{\rE\rA\rqq}\!\left[ P J \rho J^{\dagger} P \right].
\end{equation}
Choosing the basis obtained by applying $(F^{\dagger}\otimes W^{\dagger})_{\rE\rA}$ and then measuring $\rE\rA\rqq$ in the computational basis yields Kraus operators indexed by $(a,t,x)$:
\begin{equation}
\label{eq:error_kraus_qgs}
\begin{split}
    E_{a,t,x}
    =
    \bigl(1-\sqrt{1-\delta}\bigr)
    \Big(
    &\sum_{\omega\in S_{\omega_0}}
    \frac{e^{i \omega t}}{\sqrt{N}} \,
    \sqrt{\frac{\gamma(\omega)}{|A|}} \,
    \mel{x}{Y^{\dagger}(\omega)}{0} \\
    &(U_a(t))^{\dagger} \hat{U}_a(\omega)
    - \delta_{x,0}\,\frac{f(t)}{\sqrt{|A|}} \, \Delta
    \Big).
\end{split}
\end{equation}
For $x = 1$, this simplifies to
\begin{equation}
\begin{split}
    E_{a,t,1}
    =
    -
    \bigl(1-\sqrt{1-\delta}\bigr)
    \sum_{\omega\in S_{\omega_0}}
    \frac{e^{i \omega t}}{\sqrt{N}} \,
    \sqrt{\frac{\gamma(\omega)\bigl(1-\gamma(\omega)\bigr)}{|A|}} \\
    (U_a(t))^{\dagger} \hat{U}_a(\omega),
\end{split}
\end{equation}
whereas for $x=0$ one obtains
\begin{equation}
\begin{split}
    E_{a,t,0}
    =
    \bigl(1-\sqrt{1-\delta}\bigr)
    \bigl(
    \sum_{\omega\in S_{\omega_0}}
    \frac{e^{i \omega t}}{\sqrt{N}} \,
    \frac{\gamma(\omega)}{\sqrt{|A|}} \,
    (U_a(t))^{\dagger} \hat{U}_a(\omega)\\
    - \frac{f(t)}{\sqrt{|A|}} \, \Delta
    \bigr).
\end{split}
\end{equation}

Equivalently, the one-step QGS channel can be written as
\begin{equation}
\begin{split}
    \mathcal{G}(\rho)
    =
    D \rho D^{\dagger}
    + \sum_{a\in A} \sum_{\omega\in S_{\omega_0}}
      T_{a,\omega} \rho T_{a,\omega}^{\dagger} \\
    + \sum_{a\in A} \sum_{t\in S_{t_0}} \sum_{x\in\{0,1\}}
      E_{a,t,x} \rho E_{a,t,x}^{\dagger}.
\end{split}
\end{equation}

\paragraph{Reduced-cost substitute for the error term.}

Computing the exact error contribution is significantly more costly than constructing the leading operators, so we provide a cheaper alternative, which is valid when the error probability is under control. The substitute is constructed to mimic a depolarizing channel with the appropriate normalization to ensure trace preservation. Given an unnormalized channel with Kraus operators $E_k$, we define
\begin{equation}
S = \sum_k E_k^\dagger E_k.
\end{equation}
In our case, the $E_k$ correspond to decay and transition processes. The implemented channel is then
\begin{equation}
R(X) = \Tr[(\bbI - S) X] \, \frac{\bbI}{d},
\end{equation}
and the associated Kraus operators are given by
\begin{equation}
K_k = \sqrt{\frac{w_k}{d}} \, \ketbra{u_l}{u_k},
\end{equation}
where $w_k$ and $\ket{u_k}$ arise from the eigenvalue decomposition of $P = \bbI - S$.


\end{document}